\newcolumntype{C}[1]{>{\centering\arraybackslash}p{#1}}
\newcommand{\bvec}[1]{\bm{#1}}
\pgfplotsset{compat=1.12}
\theoremstyle{definition}
\newtheorem{definition}{Definition}
\newtheorem{example}{Example}
\newtheorem*{claim}{Claim}
\theoremstyle{plain}
\newtheorem{theorem}[definition]{Theorem}
\newtheorem{lemma}[definition]{Lemma}
\newtheorem{corollary}[definition]{Corollary}
\newtheorem{assumption}{Assumption}
\theoremstyle{remark}
\newtheorem*{remark}{Remark}
\def\P{\mathcal{P}}
\DeclareMathOperator{\Ima}{im}
\DeclareMathOperator{\supp}{supp}
\DeclareMathOperator{\CNOT}{CNOT}
\newcommand{\CCZ}{\mathrm{C}\mathrm{C}{Z}}
\newcommand{\comments}[1]{}
\begin{document}

\title{No-go theorems for logical gates on product quantum codes}
\author[1]{Esther Xiaozhen Fu}
\author[2]{Han Zheng}
\author[3]{Zimu Li}
\author[3]{Zi-Wen Liu}
\affil[1]{QuICS, University of Maryland}
\affil[2]{Pritzker School of Molecular Engineering and Department of Computer Science, University of Chicago}
\affil[3]{Yau Mathematical Sciences Center, Tsinghua University}

\date{}

\maketitle

\begin{abstract}
    Quantum error-correcting codes are essential to the implementation of fault-tolerant quantum computation. Homological products of classical codes offer a versatile framework for constructing quantum error-correcting codes, especially quantum low-density parity check (qLDPC) codes, with desirable properties. Generalizing the Bravyi--K\"{o}nig topological code argument, we present general algebraic conditions for logical gates which cover codes without geometric locality, and subsequently use them to establish a series of general no-go theorems for fault-tolerant logical gates supported by hypergraph product codes. Specifically, we show that strongly transversal implementations of non-Clifford logical gates are not possible on hypergraph product codes of all product dimensions, and that the dimensions impose various limitations on the accessible level of the Clifford hierarchy gates by constant-depth local circuits. We also discuss examples both with and without geometric locality which attain the Clifford hierarchy bounds. Our results reveal fundamental restrictions on logical gates originating from highly general algebraic structures, extending beyond existing knowledge only in geometrically local, finite logical qubits, transversal, ``robust'', or 2-dimensional product cases. This reveals the algebraic nature of the logical gate problem and provides valuable guidance for the crucial study of fault-tolerant quantum computation with qLDPC codes. 
\end{abstract}

\tableofcontents

\section{Introduction}

While quantum computing is theoretically expected to bring revolutionary advantages over conventional technologies, achieving its full promise will require efficient fault tolerance strategies that guarantee the reliability of the computation at scale. 
Due to the inherent vulnerability of quantum information to diverse noise effects and manipulation inaccuracies, the realization of fault-tolerant quantum computation presents formidable challenges and has spurred intensive efforts across both theoretical and experimental domains.

Utilizing carefully designed encodings with larger physical systems, quantum error correction (QEC)~\cite{shor1995qec,gottesman1997stabilizer} provides a versatile framework for protecting logical quantum information that serves as a cornerstone of fault-tolerant quantum computation~\cite{Shor1996FaultTolerant,Campbell_2017}. 
Crucially, we do not merely want to use QEC codes as memories for preserving static logical quantum information; in order to execute useful computation, we also need to perform logical gates or operations that dynamically process information. 
Therefore, a critical consideration when choosing QEC codes is the encoded logical gates that can be implemented in a fault-tolerant manner---roughly meaning that the errors remain under control.  

An ideal approach for fault-tolerant logical gate implementation is transversal gates, which directly guarantee that errors do not propagate.
Unfortunately, it is dictated by the  Eastin--Knill theorem~\cite{EastinKnill} 
that no nontrivial QEC code
can admit transversal implementations of a universal set of gates. This restriction crystallizes the fundamental tension between information protection and computation in quantum codes that can also be quantitatively refined through information-theoretic lenses~\cite{liuQuantum2022,liuApproximate2023}, indicating that logical gates constitute a crucial but nontrivial feature that generally presents mutual constraints with other important code properties.
In particular, among all quantum gates, the Clifford group gates (which are not universal) are considered the ``cheap'' type with relatively easy protection using stabilizer codes and efficient classical simulation~\cite{gottesman1998heisenberg,PhysRevA.70.052328,nielsen2000quantum}.  
To achieve universal fault-tolerant quantum computation, non-Clifford gates such as $T$ and $\mathrm{C}\mathrm{C}Z$ gates constitute the main bottleneck, thus they are of outstanding importance in the logical gate problem. Notably, in addition to importance at the encoded gate implementation level, QEC codes with transversal non-Clifford gates are essential for magic state distillation~\cite{Bravyi_2012}---a leading fault tolerance strategy that has attracted sustained research interest~\cite{BravyiHaah12,HastingsHaah18,KrishnaTillich18,LBT19,Fang2020PRL,FangLiuPRXQ,wills2024constantoverheadmagicstatedistillation,golowich2024asymptotically,nguyen2024good,lee2024low,golowich2024quantumldpccodestransversal,fang2024surpassingfundamentallimitsdistillation}.

Among possible QEC codes, quantum low-density parity check (qLDPC) codes, i.e.,~families of codes with finite weight and degree stabilizer generators or syndrome measurements, are of particular interest due to their favorability for fault tolerance implementation. 
Considering the hardware limitations of nascent experimental platforms such as superconducting qubits, topological codes with geometrically local structures such as the 2D surface code~\cite{KITAEV20032,bravyi1998quantumcodeslatticeboundary,PhysRevA.86.032324} have long been considered particularly promising for fault tolerance. However, these geometric codes suffer from low encoding rates and thus incur substantial resource overhead for large-scale computation.
By relaxing geometric locality constraints, it is possible to achieve high-rate qLDPC codes, which have recently attracted intense interest from both the theoretical and experimental communities, as they provide a pathway toward efficient fault tolerance with asymptotically constant overhead~\cite{GottesmanLDPC}.


For the construction of QEC, particularly qLDPC codes with desired properties, homological products of chain complexes originating from homological algebra offer a versatile  approach~\cite{Tillich_2014,BH:homological,Audoux2018}. Obtained from the product of classical codes, hypergraph product codes~\cite{Tillich_2014} constitute a highly tractable yet powerful framework that can readily produce qLDPC codes with pretty good code parameters that substantially outperform any geometrically local code and lays the foundation for the surge of progress in qLDPC codes (see, e.g.,~\cite{PRXQuantum.2.040101} for a slightly outdated overview). In particular, the parameters of hypergraph product codes suffice for the constant overhead theorem~\cite{GottesmanLDPC,PhysRevA.87.020304,Leverrier_expander,Fawzi_expander} (further improvements in the code distance require more sophisticated constructions~\cite{fiber,lifted,balanced,pk22,QuantumTanner2022,DHLV,lin2022goodquantumldpccodes}). Meanwhile, the simple structure of hypergraph product codes underpins crucial advantages or convenience in  e.g.~decoding \cite{Leverrier_expander,Fawzi_expander,Grospellier2021combininghardsoft}, syndrome extraction~\cite{GottesmanLDPC,Tremblay22,Manes2025distancepreserving}, logical Clifford gates~\cite{PhysRevX.11.011023,Quintavalle_2023,patra2024targetedcliffordlogicalgates,berthusen_2025}, as well as providing a fertile playground for exploring the interaction between physics and QEC~\cite{tan2024fractonmodelsproductcodes,zhao2024energybarrierhypergraphproduct,rakovszky2024physicsgoodldpccodes}.
Furthermore, recent research highlights their great compatibility with reconfigurable atom array experimental platforms~\cite{Xu_2024,xu2024fastparallelizablelogicalcomputation}.
These features make hypergraph product codes exceptionally important and promising for the actual practical implementation of quantum computation.

In this work, we present a comprehensive study of the possibility of fault-tolerant logical gates, especially non-Clifford gates, on hypergraph product codes of general dimensions. 
It is previously shown by Bravyi and K\"{o}nig that the spatial dimension of topological stabilizer codes limit the Clifford hierarchy level of logical gates that admit constant-depth unitary circuit realizations~\cite{bravyi_classification_2013}. 
Building on the underlying insight, we establish generalized algebraic rules of logical gates applicable to
all quantum code families with or without geometric restrictions, as well as certain useful specialized forms. These conditions provide fundamental lemmas for characterizing fault-tolerant logical gates of general codes. 
We proceed to derive a series of general no-go theorems for fault-tolerant logical gates supported by hypergraph product codes. These no-go theorems extend and refine previous results by Bravyi–König~\cite{bravyi_classification_2013}, which only dealt with geometrically local codes, and Burton–Browne \cite{burton2020}, which considered a special case of 2-dimensional hypergraph product codes satisfying the robustness code property and restricted to a single sector. 
Without imposing any additional code constraint, we are able to show that hypergraph product codes of any product dimension do not support implementations of any non-Clifford gate using tensor product single-qubit gates.  This fills the long-standing gap in understanding why such strongly transversal non-Clifford gates have never been found for hypergraph product codes even when removing geometric locality, despite their widespread study. Moreover, we further generalize the no-go results for hypergraph product codes to the setting of constant depth circuits under a general ``dimension preserving" assumption. In particular, we need a $t$-dimensional hypergraph product code with string-like $X$ or $Z$ logical operator to realize a gate in the $t$-th level of the Clifford hierarchy, and more generally the set of implementable fault-tolerant logical gates is constrained by the quotient of the hyperplane dimensions of the $X$ and $Z$ logical operators. See table \ref{tab:summary of results} for a detailed summary of the no-go results.  


As a key insight, we emphasize that the notion of dimension for hypergraph product code is algebraic, not tied to any notion of geometric locality. Consequently, our results substantially extend previous knowledge on geometrically local codes and illuminate the algebraic nature of the logical gate problem, showing that geometric locality is not the fundamental limiting factor for logical computation, in contrast to many other code properties. 
We demonstrate the tightness of our no-go results by identifying instances of code constructions saturating our upper bounds for Clifford hierarchy of logical gates in situations both with and without geometric locality \cite{Bombin_2007,Bombin_2013,Kubica2015,10.21468/SciPostPhys.14.4.065,Chen_2023,Wang_2024,Breuckmann2024Cups,Lin2024transversal,golowich2024quantumldpccodestransversal}. These ``yes-go" instances further show that the algebraic---rather than geometric---aspect of a code provides a more fundamental and powerful perspective in understanding logical gate properties.  Our theory and results are expected to serve as a foundation for studying even more general types of qLDPC code constructions.

This paper is organized as follows. In Section~\ref{sec:prelim}, we begin by reviewing some key definitions and facts about classical and quantum codes in Section~\ref{subsec: codes}, and logical gates in Section~\ref{subsec: logical gates}. This is followed by a review of the Clifford group and Clifford hierarchy in Section~\ref{subsec: clifford} and the Bravyi--K\"{o}nig theorem in Section~\ref{subsec: bravyi-koenig}. 
In Section~\ref{sec: generalized BK}, we formalize our generalized Bravyi--K\"{o}nig theorems. The new theorem now includes non-geometrically local codes with a simplified version in the case where the logical gate is restricted to the third Clifford hierarchy.   
We review the hypergraph product code framework in Section~\ref{sec: hgp}, introducing the canonical logical representatives as a complete basis. The basic 2-dimensional case is discussed in Section~\ref{sub: 2DHGP} and the most general case in Section~\ref{sub: generalHGP}. Section~\ref{sub: position} introduces necessary definitions for hyperplanes and hypertubes on a hypergraph product code which play key roles in our theory. Section~\ref{sub: alt} discusses key results in classical coding theory that we will use for finding different sets of logical representatives in hypergraph product codes. 
We arrive at the main no-go results on the fault-tolerant logical gates of hypergraph product codes in Sections~\ref{sec:transversal} and \ref{sec:constant_depth}. In Section~\ref{sec:transversal} we treat the case of strongly transversal gates, showing that only Clifford gates can be realized, and in Section~\ref{sec:constant_depth} we treat the more complicated case of constant-depth circuits. In Section~\ref{sec:constant_depth}, our analysis proceeds in three parts: We show that orientation-preserving constant-depth circuits  do not allow non-Clifford gates in Section~\ref{subsection: artificial scenario}, and examine how more refined restrictions on the Clifford hierarchy level of sector-preserving constant-depth circuits  and shape-preserving constant-depth circuits are enforced by the dimensions of $X$ and $Z$ logical representatives, in Section~\ref{subsection: single sector constant depth} and Section~\ref{subsection: constant-depth-general-case}, respectively. 
In Section~\ref{sec: yes-go}, we discuss constructions of fault-tolerant logical gates through various yes-go examples in both geometrically local and geometrically non-local cases. Finally, we conclude with discussion and outlook in Section~\ref{sec: conclusions and outlook}.

\section{Preliminaries}\label{sec:prelim}

In this section, we provide some essential background information about codes and gates, while fixing various notational conventions. 

\subsection{Classical and quantum codes} \label{subsec: codes}
We start with classical codes:
\begin{definition}[Classical code]
	A classical linear error-correcting code $\mathcal{C} \subset \mathbb{F}^n_q$ with parameters $[n,k,d]$ is a $k$-dimensional subspace of the vector space $\mathbb{F}^n_q$ that encodes $k$ logical bits with $n$ physical bits, and has {distance} $d$.
	The Hamming weight of any vector $\bvec{v} \in \mathbb{F}^n_q$, denoted by $|\bvec{v}|$, is defined as the number of nonzero entries in $\bvec{v}$. A code has distance $d$ means that the minimal Hamming weight among all its nonzero codewords is $d$. The code $\mathcal{C}$ is uniquely specified by its \emph{generator matrix} $G$, a $k\times n$ matrix given by $\mathcal{C} = \{\bvec{u}^TG|\bvec{u}\in\mathbb{F}^k_q\}$, or \emph{parity check matrix} $H$, a $(n-k)\times n$ matrix such that $\mathcal{C} = \{\bvec{c}\in\mathbb{F}^n_q|H \bvec{c} = \bvec{0}\}$. 
\end{definition}

The notion of information set will be important in our study. We state a short definition here and will provide more detailed treatments when it is technically used later.
\begin{definition}[Information set]
	Given an \([n, k, d]\) classical code \(\mathcal{C}\), the coordinate positions corresponding to any \(k\) linearly independent columns of \(G\) form an \emph{information set} for \(\mathcal{C}\).
\end{definition}

Then we need several key definitions about quantum codes: 

\begin{definition}[Stabilizer code]
	A quantum code is given by a subspace of the Hilbert space $\mathcal{H}$.
	A stabilizer code $\mathcal{Q}=\mathcal{Q}(\mathcal{S})$ with parameters $[\![n,k,d]\!]$ is a quantum code specified by the joint $+1$ eigenspace of $n-k$ mutually commuting independent Pauli stabilizer generators $\mathcal{S}$, which is a $2^k$-dimensional subspace of the Hilbert space of $n$ qubits, i.e., encodes $k$ logical qubits with $n$ physical qubits, and has {distance} $d$ (formally defined below). 
\end{definition}

\begin{remark}
    For the rest of the paper, we work with qubit CSS codes, but most results immediately generalize to finite field $\mathbb{F}_{2^m}$ with characteristic 2.
\end{remark}

We denote $\langle \mathcal{S}\rangle$ as the group generated by $\mathcal{S}$. (Sometimes $\mathcal{S}$ will also be used interchangeably to denote stabilizer group.) The single Pauli group is given by $\hat{P}=\{\pm 1, \pm i, I,X,Y,Z\}$, and then the $n$-qubit Pauli group is given by $\mathcal{P}_1 = \hat{P}^{\otimes n}$.  $\mathcal{N}(\mathcal{S})$ is the normalizer of $\langle S\rangle$ defined as $ \mathcal{N}(\mathcal{S}) = \{p \in \mathcal{P}_1: \langle \mathcal{S}\rangle p = p\langle \mathcal{S}\rangle \}$.

\begin{definition}[Code distance]\label{def:distance}
	The nontrivial logical operators of $\mathcal{Q}$ are elements of $\mathcal{N}(\mathcal{S})\backslash \langle \mathcal{S}\rangle$. The distance $d$ of a stabilizer code is the minimum weight of any Pauli operator in $\mathcal{N}(\mathcal{S}) \setminus \langle \mathcal{S} \rangle$, i.e., the smallest weight of a nontrivial logical operator. 
\end{definition}

\begin{definition}[Quantum LDPC code]
	A (family of) stabilizer code is called quantum low-density parity-check (qLDPC) if each stabilizer generator acts on a constant number of qubits and each qubit is involved in a constant number of generators (as $n$ grows). 
\end{definition}

\begin{definition}[CSS code and chain complex representation]
	A Calderbank--Shor--Steane (CSS) code is a stabilizer quantum code defined by two classical linear codes $\mathcal{C}_z,\mathcal{C}_x$ with parameters $[n,k_1], [n,k_2]$ respectively, which satisfy the orthogonality condition $\mathcal{C}_x^\perp \subseteq \mathcal{C}_z$. Let $H_z$ be the parity check matrix of $\mathcal{C}_x$ and $H_x$ be the parity check matrix of $\mathcal{C}_z$. The orthogonality condition ensures $H_z H_x^T = 0$. The CSS code is defined by $X$ stabilizers and $Z$ stabilizers given by the row span of $H_x$ and $H_z$ respectively (with the commutativity of the stabilizer group guaranteed by the orthogonality condition), and has parameters $[\![n,k_1 +k_2 -n,d]\!]$. It corresponds to a three-term chain complex: 
	\[
	\mathbb{F}_2^{m_z} \xrightarrow{\partial_z^T} \mathbb{F}_2^n \xrightarrow{\partial_x} \mathbb{F}_2^{m_x},
	\]
	where \( \partial_x\) gives the parity check matrix $H_x$ and \( \partial_z \) gives the parity check matrix $H_z$, with the orthogonality condition directly enforced by the boundary map rule $\partial_z \partial_x^T = 0$. 
\end{definition}

The logical $X$ and $Z$ operators are given respectively by 
\begin{align}
	L_x = \ker{H_z}/\mathrm{im} \;H_x^T, \quad L_z = \ker{H_x}/\mathrm{im}\; H_z^T,
\end{align}
or equivalently,
$\mathcal{C}_z/\mathcal{C}_x^\perp, \mathcal{C}_x/\mathcal{C}_z^\perp$.
The code distance is simply given by
\begin{align}
	d = \min\{d_x, d_z\},
\end{align}
 $ d_x = \min \{|L_x|\}$ and $d_z = \min \{|L_z|\}$, where $|P|$ denotes the weight of the non-identity elements of the operator $P$.

By Definition~\ref{def:distance}, a logical operator is trivial if it is generated by the stabilizers given by $H_x$ and $H_z$. Two logical representatives $L_1,L_2$ are equivalent, denoted as $L_1 \sim L_2$, if they differ by a stabilizer.

\subsection{Logical gates}\label{subsec: logical gates}

The theme of this work is to understand fault-tolerant logical gates compatible with quantum codes. 
Here we formally introduce several definitions and notations regarding logical gates which will be used for the rest of the paper.

\begin{definition}[Support]
	The \textit{support} of an operator $P$, denoted $\mathrm{supp}[P]$, is the set of physical qubits on which $P$ acts nontrivially (i.e., not as the identity). 
	
	The \textit{size} of the support, denoted $|P|$, is the weight of the non-identity elements of the operator $P$.
\end{definition}

\begin{definition}[Code projector]
	Let $\Pi$ denote the projector from the Hilbert space $\mathcal{H}$ onto the code space of $\mathcal{Q}$. For any encoded state $\rho$, we have $\Pi \rho = \rho \Pi$. 
\end{definition}

\begin{definition}[Logical]
	A unitary operator $U$ is a logical gate/operator for the code $\mathcal{Q}$ if it preserves the code space of $\mathcal{Q}$, i.e., $U\Pi U^\dagger = \Pi$.
\end{definition}

In this work, we restrict ourselves to logical gates that preserves the code space. That is, we do not consider the extended notion of logical gate where $U\Pi_1 U^\dagger =\Pi_2$ as morphisms between two codes with $\Pi_1 \neq \Pi_2$ \cite{bravyi_classification_2013}.

Loosely speaking, logical gates that are sufficiently ``local'', so that the error propagation is well confined and thus the output of the circuit remains correctable if the physical error rate is low enough, are fault-tolerant.  The most well-studied example of such gates are transversal gates:
\begin{definition}[Transversal gate]\label{def: transversal}
	A unitary operator \( U \) is said to be \textit{transversal} if it takes the form \( U = \bigotimes_{i=1}^n u_i \), where each \( u_i \) is a single qubit unitary.
	
\end{definition} 
Note that here, for technical reasons, we adopt a \textit{strong} notion of \emph{transversality} that requires a gate to decompose as a tensor product of single-qubit unitaries, which is more stringent than the other usual definition of transversal gate as a quantum operation that does not propagate errors within the same code block (but can spread a single error to at most a single error in a different code block). That is, the strong definition excludes the more general block-wise transversality (which is however treated as constant-depth circuits in our context), and this distinction is important for our results in Section~\ref{sec:transversal} and \ref{sec:constant_depth}.  

The transversality property is ideal but not necessary for fault tolerance. For infinite code families, we can make the following relaxation to locality-preserving maps, which still guarantee that a local error can only be propagated to a bounded number of other qubits and thus naturally remain fault-tolerant:
\begin{definition}[Constant-depth circuit gate]\label{def: constant-depth}
	A unitary operator \( U \) is said to be implementable by a \textit{constant-depth circuit} if it can be decomposed into a local quantum circuit of $O(1)$ depth, i.e., $O(1)$ layers of disjoint local (not necessarily geometrically local) gates which acts only on a constant number of qubits, independent of code length $n$. The support of any operator $P$ under the action of $U$ satisfies \( |UPU^\dagger| = c \cdot |P| \), where \( c = O(1) \) is a constant. 
\end{definition}

For instance, the [\![15,1,3]\!] quantum Reed-Muller code supports a transversal \( T \) gate. In contrast, a $\mathrm{CC}Z$ gate implemented on three copies of the toric code is \textit{not} a transversal gate under our definition, but falls under the case of gates implementable by constant-depth circuits.

\subsection{Clifford group and Clifford hierarchy}
\label{subsec: clifford}

As established, Clifford group gates and the extended notion of Clifford hierarchy play pivotal roles in quantum computing. A driving goal of our work is to systematically understand the interplay between the algebraic structures and logical gates of stabilizer codes, with the Clifford hierarchy providing the essential framework. 

The Clifford hierarchy is a nested tower of sets of unitary operators with original motivations associated with gate teleportation or injection~\cite{GottesmanChuang_1999}.
  The levels of the Clifford hierarchy can be defined recursively as follows.
\begin{definition}[Clifford hierarchy]
Denote the $k$-th level of the Clifford hierarchy as $\P_k$. Define the zeroth level $\P_0$ to be the phase group $\{\pm 1, \pm  i\}$, and the first level $ \P_{1} $ to be the Pauli group given by $\mathcal{P}_1 = \hat{P}^{\otimes n}$, $\hat{P}=\{\pm 1, \pm i, I,X,Y,Z\}$. Then the second level $ \P_{2} $ is given by the group of automorphisms of the Pauli group: $ \{U\mid UPU^{\dagger}\in \P_1, \forall P \in \mathcal{P}_1\} $ and is called the Clifford group. Higher levels of the hierarchy are defined recursively as
\begin{equation}\label{eq:cliffordhierarchy}
 	\P_{k} := \{U\mid UPU^{\dagger}P^{\dagger}\in \P_{k-1}, \forall P \in \mathcal{P}_1\}.
 \end{equation} 
\end{definition}

The Clifford group $ \P_{2} $ includes Hadamard gate, Pauli gates, phase gate $S = Z^{1/2}$, $\CNOT$  gate (and their compositions).
For $l >2$, $\mathcal{P}_l$ is not closed and not a group. 
Two particularly important and widely used non-Clifford gates, $T = Z^{1/4}$ and $\mathrm{C}\mathrm{C}Z$, belong to $\P_3$. More generally, $Z^{1/2^{l-1}}$ and $\mathrm{C}^{l-1}Z$ gates are typical examples of gates in $\P_l$ for $l\geq 2$.

\subsection{Correctability, logical operators, and the Bravyi--K\"{o}nig theorem}\label{subsec: bravyi-koenig}

In this section, as the foundation for our later results, we formally review various basic concepts and techniques concerning (and connecting) error correction and logical operators, leading up to the Bravyi--K\"{o}nig theorem.

\begin{theorem}[Knill--Laflamme~\cite{KL}]\label{thm:Knill--Laflamme}
Let \( \mathcal{Q} \) be a quantum code, and let \( \Pi \) denote the projector onto the code space $\mathcal{C}$. Suppose \( \mathcal{N} \) is a quantum channel with Kraus operators \( \{ E_\alpha \} \). A necessary and sufficient condition for the existence of a recovery channel \( \mathcal{R} \) (completely positive trace-preserving map) that corrects $\mathcal{N}$, i.e.,
\[
\mathcal{R} \circ \mathcal{N} (\rho) = \rho, \quad \forall\rho \in \mathcal{C},
\]
is that
\begin{equation} \label{eq: Knill laflamme}
    \Pi E_\alpha^{\dagger} E_\beta \Pi = c_{\alpha \beta} \Pi
\end{equation}
for some complex number \( c_{\alpha \beta} \), for all $\alpha,\beta$.
\end{theorem}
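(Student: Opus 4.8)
The plan is to establish the two directions separately, in both cases exploiting the unitary (isometric) freedom in the choice of Kraus operators for a channel.

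\emph{Necessity.} Suppose a CPTP recovery map $\mathcal{R}$ with Kraus operators $\{R_k\}$ satisfies $\mathcal{R}\circ\mathcal{N}(\rho)=\rho$ for every $\rho\in\mathcal{C}$. First I would note that the superoperator $X\mapsto\Pi X\Pi$ acts as the identity on all operators supported on $\mathcal{C}$ and admits the one-term Kraus representation $\{\Pi\}$. Since $\mathcal{R}\circ\mathcal{N}$ agrees with it on such operators (extend the hypothesis from density matrices to arbitrary operators by linearity), the two operator families $\{R_kE_\alpha\Pi\}_{k,\alpha}$ and $\{\Pi\}$ --- padded with zero operators to equal cardinality --- represent the same channel and are therefore related by a unitary. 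Comparing against the single nonzero entry $\Pi$ forces $R_kE_\alpha\Pi=\lambda_{k\alpha}\Pi$ for scalars $\lambda_{k\alpha}$ with $\sum_{k,\alpha}|\lambda_{k\alpha}|^2=1$. Then, using $\sum_kR_k^\dagger R_k=I$ (trace preservation of $\mathcal{R}$),
\[
\Pi E_\alpha^\dagger E_\beta\Pi=\sum_k\bigl(R_kE_\alpha\Pi\bigr)^\dagger\bigl(R_kE_\beta\Pi\bigr)=\Bigl(\sum_k\overline{\lambda_{k\alpha}}\,\lambda_{k\beta}\Bigr)\Pi,
\]
which is exactly \eqref{eq: Knill laflamme} with $c_{\alpha\beta}=\sum_k\overline{\lambda_{k\alpha}}\lambda_{k\beta}$.

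\emph{Sufficiency.} Starting from $\Pi E_\alpha^\dagger E_\beta\Pi=c_{\alpha\beta}\Pi$, I would first check that $c=(c_{\alpha\beta})$ is Hermitian and positive semidefinite (test against a vector $v$ using $\Pi(\sum_\alpha v_\alpha E_\alpha)^\dagger(\sum_\alpha v_\alpha E_\alpha)\Pi\ge 0$), and that $\operatorname{tr}c=1$ from $\sum_\alpha E_\alpha^\dagger E_\alpha=I$. Diagonalizing $c$ by a unitary and passing to the equivalent Kraus operators $F_k=\sum_\alpha u_{k\alpha}E_\alpha$ puts the condition in the diagonal form $\Pi F_k^\dagger F_l\Pi=d_k\delta_{kl}\Pi$, with $d_k\ge 0$ and $\sum_kd_k=1$. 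For each $k$ with $d_k>0$, $V_k:=F_k\Pi/\sqrt{d_k}$ is an isometry from $\mathcal{C}$ onto a subspace $P_k\mathcal{H}$, and the off-diagonal relations ($k\ne l$) force the $P_k$ to be mutually orthogonal. I would then define $\mathcal{R}$ to perform the projective measurement $\{P_k\}$ supplemented by the projector onto the orthogonal complement, apply $V_k^\dagger$ on outcome $k$, and act arbitrarily (but completely positively) on the complement so as to be trace preserving; i.e.\ $\mathcal{R}(\sigma)=\sum_kV_k^\dagger P_k\sigma P_kV_k+(\text{correction on the leftover subspace})$. For $\rho\in\mathcal{C}$ one computes $F_k\rho F_k^\dagger=F_k\Pi\rho\Pi F_k^\dagger=d_kV_k\rho V_k^\dagger$ and $V_k^\dagger\bigl(d_kV_k\rho V_k^\dagger\bigr)V_k=d_k\rho$, so summing over $k$ gives $\mathcal{R}\circ\mathcal{N}(\rho)=(\sum_kd_k)\rho=\rho$, the leftover term vanishing on states supported on $\mathcal{C}$.

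\emph{Main obstacle.} The crux --- more a matter of care than of depth --- is the rigorous invocation of the unitary freedom in Kraus representations in the necessity direction, including the zero-padding step needed to compare operator families of different sizes, and, on the sufficiency side, the clean treatment of partial isometries together with the explicit completion of $\mathcal{R}$ to a genuinely trace-preserving map that leaves its action on correctable states intact. The remaining ingredients --- Hermiticity and positivity of $c$, the identity $\operatorname{tr}c=1$, and orthogonality of the measurement subspaces --- are routine verifications.
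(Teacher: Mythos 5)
The paper states the Knill--Laflamme theorem as background material with a citation to the original reference and provides no proof of its own, so there is nothing in the paper for your argument to diverge from. Your proof is correct and is essentially the standard textbook argument (as in Nielsen--Chuang or the original Knill--Laflamme paper): necessity via the unitary freedom of Kraus representations applied to the composite map restricted to the code subspace, and sufficiency via diagonalizing the Hermitian positive semidefinite matrix $c=(c_{\alpha\beta})$, constructing partial isometries $V_k$ onto mutually orthogonal error subspaces, and defining the recovery by measuring $\{P_k\}$ and inverting. The only place that requires a little care in the write-up is the observation you already flag: that the two Kraus families $\{R_kE_\alpha\Pi\}$ (zero-padded) and $\{\Pi\}$ represent the same CP map on all of $B(\mathcal{H})$, not merely agree on code states --- this follows because $\sigma\mapsto\Pi\sigma\Pi$ always lands in the operator system spanned by codespace density matrices, to which the hypothesis extends by linearity. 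With that spelled out, the argument is complete and sound.
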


\begin{lemma}[Cleaning Lemma \cite{Bravyi_2009}]\label{lemma: correctability-stabilizer-code} 
Let \( \mathcal{Q} \) be a stabilizer code defined on physical qubits \( \Lambda \). Given a subset \( R \subset \Lambda \),   one and only one of the following holds:
\begin{enumerate}[(i)]
    \item There exists a non-trivial logical operator \( L \in \mathcal{N}(\mathcal{S}) \setminus \langle \mathcal{S} \rangle \) whose support is entirely contained in \( R \);
    \item For every logical operator \( L \in \mathcal{N}(\mathcal{S}) \setminus \langle \mathcal{S} \rangle \), there exists a stabilizer \( s \in \mathcal{S} \) such that \( L\cdot s \) acts trivially on  \( R \).
\end{enumerate}
We say that a subset $R$ is \emph{cleanable} if (ii) holds. 
\end{lemma}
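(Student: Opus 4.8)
The plan is to establish the dichotomy by a rank-counting argument applied to the symplectic representation of the Pauli group. First I would recall that for a stabilizer code on physical qubits $\Lambda$, each element of $\mathcal{P}_1$ (modulo phases) corresponds to a binary vector in $\mathbb{F}_2^{2|\Lambda|}$, and the commutation relation is captured by the symplectic form $\omega$. The stabilizer group $\langle\mathcal{S}\rangle$ gives an isotropic subspace $\mathsf{S}\subset\mathbb{F}_2^{2|\Lambda|}$, and its symplectic complement $\mathsf{S}^{\perp_\omega}$ corresponds to $\mathcal{N}(\mathcal{S})$, with $\mathsf{S}\subseteq\mathsf{S}^{\perp_\omega}$. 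Nontrivial logical operators are the nonzero classes of $\mathsf{S}^{\perp_\omega}/\mathsf{S}$. The two conditions in the lemma should then be recast: condition (i) says there is a nonzero class in $\mathsf{S}^{\perp_\omega}/\mathsf{S}$ having a representative supported only on $A$; condition (ii) says every class in $\mathsf{S}^{\perp_\omega}/\mathsf{S}$ has a representative supported only on $\Lambda\setminus A$ (i.e., acting trivially on $A$).

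Next I would show the two alternatives are mutually exclusive. Suppose both held: take $L$ from (i) supported on $A$, nontrivial, and apply (ii) to $L$ itself to get a stabilizer $S$ with $LS$ trivial on $A$. But $L$ is already trivial on $\Lambda\setminus A$ and, being a nontrivial logical operator, it must anticommute with some logical operator $M$ of the complementary type. Now $LS$ is supported on $\Lambda\setminus A$ while $L$ is supported on $A$; since $L$ and $LS$ differ by the stabilizer $S$, they represent the same logical class, yet their supports are disjoint, forcing $L\cdot(LS) = S$ to... more carefully: $LS$ is a nontrivial logical operator supported on $\Lambda\setminus A$, and $L$ is a nontrivial logical operator supported on $A$; multiplying shows $S = L\cdot LS$ is supported on all of $\Lambda$ but this does not immediately contradict. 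The cleaner route is: (ii) for $L$ gives $LS$ trivial on $A$, so $LS$ is supported on $A^c$; but also by (i) $L$ is supported on $A$; these being the same class with disjoint-support representatives means... I would instead argue directly that (i) $\Rightarrow$ $\neg$(ii) by a dimension count, so I postpone exclusivity and focus on the main content, exhaustiveness.

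The heart of the proof is the equivalence $\neg(\text{i}) \iff (\text{ii})$, which I would get from a dimension-counting identity. Let $\pi_A$ denote the projection of $\mathbb{F}_2^{2|\Lambda|}$ onto the coordinates indexed by $A$, and let $V_A = \{v\in\mathbb{F}_2^{2|\Lambda|} : \supp(v)\subseteq A\}$ be the subspace of operators supported on $A$. Condition (i) fails precisely when $V_A \cap \mathsf{S}^{\perp_\omega} \subseteq \mathsf{S}$, i.e., when $V_A\cap \mathsf{S}^{\perp_\omega} = V_A\cap\mathsf{S}$. Condition (ii) holds precisely when $\mathsf{S}^{\perp_\omega} = \mathsf{S} + V_{A^c}$ where $A^c = \Lambda\setminus A$, i.e., every logical class has a representative in $V_{A^c}$. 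Taking symplectic complements and using $\omega$-duality — specifically $(V_A)^{\perp_\omega} = V_{A^c} + (\text{the “$X$–$Z$ swap” of } V_A)$, more precisely that $V_A^{\perp_\omega}$ is the set of vectors whose $A$-part is symplectically orthogonal to everything on $A$, which is exactly $V_{A^c}$ — wait, that is false since within $A$ the form is nondegenerate, so $V_A^{\perp_\omega} = V_{A^c}$. Good: then $\neg(\text{i})$ reads $V_A\cap\mathsf{S}^{\perp_\omega}\subseteq\mathsf{S}$, and taking $\perp_\omega$ of both sides (which reverses inclusion and satisfies $(U\cap W)^{\perp_\omega} = U^{\perp_\omega}+W^{\perp_\omega}$) yields $V_{A^c} + \mathsf{S} \supseteq \mathsf{S}^{\perp_\omega}$, which combined with the reverse inclusion $\mathsf{S} + V_{A^c}\subseteq \mathsf{S}^{\perp_\omega}$ (automatic since $\mathsf{S}\subseteq\mathsf{S}^{\perp_\omega}$ and $V_{A^c}\subseteq\mathsf{S}^{\perp_\omega}$... this last needs $V_{A^c}$ to commute with all stabilizers, which is NOT automatic) — so the argument needs care about whether $V_{A^c}\subseteq\mathsf{S}^{\perp_\omega}$. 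The resolution is to work inside $\mathsf{S}^{\perp_\omega}$ from the start, or to note $V_{A^c}\cap\mathsf{S}^{\perp_\omega}$ plus $\mathsf{S}$ is what appears; I would carry out the identity $\dim(V_A\cap \mathsf{S}^{\perp_\omega}) + \dim(\mathsf{S} + (V_{A^c}\cap\mathsf{S}^{\perp_\omega})) = \dim\mathsf{S}^{\perp_\omega} + \dim(V_A\cap\mathsf{S})$ carefully, from which the two conditions are seen to be exactly complementary. The main obstacle, and the step I expect to be most delicate, is exactly this bookkeeping with symplectic complements relative to the ambient space versus relative to $\mathsf{S}^{\perp_\omega}$, ensuring the "one and only one" claim — the exclusivity — drops out of the same dimension equation rather than needing a separate (and as I noted above, slightly subtle) argument.
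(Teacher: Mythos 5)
Your approach—passing to the symplectic picture over $\mathbb{F}_2^{2|\Lambda|}$, identifying $\neg(\mathrm{i})$ with $V_A \cap \mathsf{S}^{\perp_\omega} \subseteq \mathsf{S}$ and $(\mathrm{ii})$ with $\mathsf{S}^{\perp_\omega}\subseteq \mathsf{S}+V_{A^c}$, and then converting one to the other by taking symplectic complements using $V_A^{\perp_\omega}=V_{A^c}$, $(\mathsf{S}^{\perp_\omega})^{\perp_\omega}=\mathsf{S}$, and $(U\cap W)^{\perp_\omega}=U^{\perp_\omega}+W^{\perp_\omega}$—is exactly the standard proof of the Cleaning Lemma, which is what the paper is citing from \cite{Bravyi_2009} without reproducing.

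Two of the concerns you flagged mid-argument are in fact non-issues, and resolving them makes your write-up clean rather than tentative. First, exclusivity does not need a separate argument: once you establish the biconditional $\neg(\mathrm{i})\iff(\mathrm{ii})$, you automatically get both that the two alternatives exhaust all cases and that they cannot hold simultaneously, i.e. ``one and only one.'' Your aborted attempt to derive a contradiction from $(\mathrm{i})\wedge(\mathrm{ii})$ directly was unnecessary. Second, you worried that the inclusion $\mathsf{S}+V_{A^c}\subseteq\mathsf{S}^{\perp_\omega}$ is not automatic; it indeed is not, but you never need it. Condition $(\mathrm{ii})$ translates to the one-sided containment $\mathsf{S}^{\perp_\omega}\subseteq\mathsf{S}+V_{A^c}$ only—and this already guarantees that for any $v\in\mathsf{S}^{\perp_\omega}$, writing $v=s+w$ with $s\in\mathsf{S}$ and $w\in V_{A^c}$ forces $w=v-s\in\mathsf{S}^{\perp_\omega}$, so the ``cleaned'' representative is automatically a valid logical/stabilizer element. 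With those two points resolved, the complement calculation you set up,
\begin{equation}
\bigl(V_A\cap\mathsf{S}^{\perp_\omega}\bigr)^{\perp_\omega}=V_A^{\perp_\omega}+(\mathsf{S}^{\perp_\omega})^{\perp_\omega}=V_{A^c}+\mathsf{S},
\end{equation}
gives $\neg(\mathrm{i})\Rightarrow(\mathrm{ii})$ in one line, and taking $\perp_\omega$ of $\mathsf{S}^{\perp_\omega}\subseteq\mathsf{S}+V_{A^c}$ gives the converse, closing the argument without any residual dimension bookkeeping.
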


In other words, if the Knill--Laflamme condition is satisfied for any set of errors constrained to $R$, we know that errors locally supported on $R$ do not alter the logical information and $R$ cannot fully support a logical operator. By the Cleaning Lemma, $R$ is cleanable, i.e.,~in Lemma~\ref{lemma: correctability-stabilizer-code} (i) does not hold and (ii) holds. This is also equivalent to saying that $R$ is a \textit{correctable region}:

\begin{definition}[Correctable region] \label{def:correctable}
A subset of qubits $R$ is \emph{correctable} if and only if there exists a recovery channel $\mathcal{R}$ that corrects the erasure of all qubits in $R$; that is, 
\begin{align}
\mathcal{R}(\mathrm{Tr}_R\rho) = \rho, 
\end{align}
for
any encoded state $\rho \in \mathcal{C}$.
\end{definition}

\begin{lemma}\label{lemma: cleaning-lemma-correctable}
Given a stabilizer code $\mathcal{Q}$, a region $R$ is called \emph{correctable} if and only if region $R$ is cleanable. 
\end{lemma}

\begin{remark}
In our later proofs, we will always conclude with a logical operator $K$ defined on a correctable region $R$. Without further assumption, we can expand $K = \sum_i P_i$ as a sum of Pauli operators, all of which are supported on $R$. Then using the same argument as the Knill--Laflamme condition:
	\begin{align}
		K {\ket{\psi}} = \Pi K \Pi {\ket{\psi}} = \sum_i \Pi P_i \Pi {\ket{\psi}} = \sum_i c_i {\ket{\psi}}.
	\end{align}
This is independent of the encoded state $\ket{\psi}$, which is an encoded logical state, verifying that $K$ is trivial on the code space.
\end{remark}

The next result indicates that the combinations of smaller correctable regions are still correctable: 
\begin{theorem}[Union Lemma \cite{Bravyi_2009,pastawski_fault-tolerant_2015}]
Given a stabilizer code, let $R_1$ and $R_2$ be two disjoint sets of qubits. Suppose there exists a complete set of stabilizer generators $\mathcal{S}$ such that the support of each generator overlaps with at most one of $\{R_1,R_2\}$. If $R_1$ and $R_2$ are correctable, then the union $R_1 \cup R_2$ is also correctable. 	
\end{theorem}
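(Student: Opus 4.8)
The plan is to reduce the statement to the Cleaning Lemma (Lemma~\ref{lemma: correctability-stabilizer-code}) through its correctability reformulation (Lemma~\ref{lemma: cleaning-lemma-correctable}): a region $A$ is correctable precisely when every logical operator $L \in \mathcal{N}(\mathcal{S}) \setminus \langle \mathcal{S} \rangle$ can be multiplied by some stabilizer $S \in \langle \mathcal{S} \rangle$ so that $LS$ acts as the identity on all qubits of $A$. Thus it suffices to show that every logical operator can be cleaned off $R_1$ and $R_2$ \emph{simultaneously}, and the result follows since case~(i) of the Cleaning Lemma will then fail for $A = R_1 \cup R_2$.

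First I would fix a nontrivial logical operator $L$. Since $R_1$ is correctable, there is a stabilizer $S_1$ with $L S_1$ trivial on $R_1$. Applying correctability of $R_2$ to the logical operator $L S_1$, there is a stabilizer $S_2$ with $L S_1 S_2$ trivial on $R_2$. The only subtle point—and the sole place the hypothesis on generators enters—is that multiplying by $S_2$ might reintroduce support on $R_1$, undoing the first step. To control this, I would expand $S_2$ as a product of the given generators and sort the generators occurring in it into two groups: those whose support is disjoint from $R_2$, with product $S_2'$, and all the rest, with product $S_2''$. By hypothesis each generator's support misses $R_1$ or $R_2$, so every generator in the second group misses $R_1$; hence $\supp(S_2'') \cap R_1 = \emptyset$. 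Since stabilizer generators mutually commute, $S_2 = S_2' S_2''$ up to a sign that is irrelevant to supports.

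The argument then closes directly. Because $S_2'$ is trivial on $R_2$, the operators $L S_1 S_2$ and $L S_1 S_2''$ agree on $R_2$, so $L S_1 S_2''$ is trivial on $R_2$; and because $L S_1$ is trivial on $R_1$ while $S_2''$ is trivial on $R_1$, their product $L S_1 S_2''$ is trivial on $R_1$. Hence $L$ times the stabilizer $S_1 S_2''$ is trivial on all of $R_1 \cup R_2$. As $L$ was arbitrary, the Cleaning Lemma yields that $R_1 \cup R_2$ supports no nontrivial logical operator, i.e.\ it is correctable. I expect the main obstacle to be purely the bookkeeping in the decomposition $S_2 = S_2' S_2''$ and checking carefully that this ``re-cleaning'' of $R_2$ leaves $R_1$ untouched; everything else is a routine invocation of the Cleaning Lemma and the definition of a correctable region.
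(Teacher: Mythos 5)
Your proof is correct. The paper itself states the Union Lemma as a cited external result and does not supply its own proof, so there is nothing in the source to compare against; your argument is the standard one (as found in the references cited there), and the crucial step—decomposing $S_2$ into generators that overlap $R_2$ (hence miss $R_1$ by the hypothesis on the generating set) and generators that miss $R_2$ (which can be discarded since $LS_1S_2$ and $LS_1S_2''$ agree on $R_2$)—is identified and justified exactly as it should be.
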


With these concepts and techniques, Bravyi and K\"{o}nig proved the following theorem which indicates topological restrictions on the logical operators supported by topological stabilizer codes.

\begin{theorem}[Bravyi--König \cite{bravyi_classification_2013}]
    Suppose a unitary operator $U$ implementable by a constant-depth quantum circuit preserves the code space $\mathcal{C}$ of a topological stabilizer code given by geometrically local stabilizers on a $D$-dimensional lattice, $D \geq 2$. {For sufficiently large  lattices and  code distance,} the restriction of $U$ onto $\mathcal{C}$ implements an encoded gate from the set in the $D$-th level of Clifford hierarchy.  
\end{theorem}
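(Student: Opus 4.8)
The plan is to follow the original cleaning/locality strategy: show that for each generator $P$ of the Pauli group, the ``group commutator'' $C(P) := U P U^\dagger P^\dagger$ can be localized to a correctable region of the lattice, and then feed this into an induction on the Clifford hierarchy level. First I would fix a coarse-graining of the $D$-dimensional lattice into $O(1)$-sized cells and partition the cells into $D+1$ color classes so that cells of the same color are mutually non-adjacent (a standard $(D{+}1)$-colorability of the adjacency graph of a grid of cells, choosing the cell size larger than the circuit's light-cone radius so that distinct same-color cells have disjoint $U$-light-cones). For a single-qubit (or single-cell-supported) Pauli $P$ whose support lies in one cell, $U P U^\dagger$ is supported in the $O(1)$-neighborhood of that cell by Definition~\ref{def: constant-depth}, hence $C(P) = U P U^\dagger P^\dagger$ is supported in an $O(1)$ ``fattened'' cell; since a single fattened cell is a correctable region for sufficiently large lattice and distance (it supports no nontrivial logical operator — here is where the topological/geometrically-local hypothesis and large distance enter), the Cleaning Lemma (Lemma~\ref{lemma: correctability-stabilizer-code}) lets us deform $C(P)$ by stabilizers to act trivially there; combined with the Remark following Lemma~\ref{lemma: cleaning-lemma-correctable}, $C(P)$ acts as a phase on the code space, i.e.\ $C(P) \in \P_1$ restricted to the code. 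This establishes the base case $U \in \P_2$ when $D = 1$... but we want the statement for $D \geq 2$, so the base case is really the inductive hook: we show that if $U' \in \P_2$ acts as a Clifford on the code then restricting the argument to $D-1$ directions suffices, and more precisely we run induction downward on $D$.

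The core inductive step I would carry out is: assuming the theorem holds for all codes in dimension $D-1$, show any logically nontrivial $U$ by a constant-depth circuit in dimension $D$ has $C(P) := U P U^\dagger P^\dagger$ implementable, on the code space, by a constant-depth circuit on a $(D-1)$-dimensional sublattice. The mechanism: the $D+1$ color classes $A_0,\dots,A_D$ of fattened cells are each correctable (by the single-cell correctability above together with the Union Lemma, since distinct same-color cells are non-adjacent and one can choose a complete generating set of stabilizers each overlapping at most one same-color cell). Then $\Lambda = A_0 \cup \dots \cup A_D$, and by a telescoping/cleaning argument one writes $C(P)$ — which is a Pauli operator supported near $\mathrm{supp}(P)$ — as a product of pieces each cleanable onto the complement of one color class; the intersection of $D$ of the $D+1$ complements is (a neighborhood of) a $(D-1)$-dimensional region, onto which $C(P)$ can be deformed. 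Applying the inductive hypothesis there, $C(P)$ restricted to the code lies in $\P_{D-1}$, and therefore $U \in \P_D$ by the recursive Definition~\eqref{eq:cliffordhierarchy}. One must also verify that the deformed $C(P)$ genuinely acts as a constant-depth circuit on the $(D-1)$-dimensional region with geometrically local gates inherited from $U$ — this is where care is needed, since a priori $C(P)$ is just \emph{some} Pauli-to-Pauli-conjugating operator; the resolution is that on the code space it is generated, level by level, by operators built from the original light-cones, so locality and constant depth are preserved under the restriction.

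The main obstacle I expect is precisely this bookkeeping: tracking that at each level of the induction the operator one obtains on the lower-dimensional slice is still (i) supported geometrically locally on that slice and (ii) of constant depth, rather than just being an abstract element of some Clifford-hierarchy level, so that the inductive hypothesis is actually applicable. A second subtlety is the ``sufficiently large lattice and code distance'' quantifier — one needs the fattened cells, and the $(D-1)$-dimensional slabs built from them, to remain strictly correctable, which requires the code distance to grow with the lattice; making this uniform across all $D+1$ color classes and all choices of $P$ simultaneously (so that the Union Lemma applies with a single stabilizer generating set) is the technical heart. I would handle (i)–(ii) by phrasing the induction not on $U$ itself but on the pair (operator, explicit constant-depth geometrically-local circuit representing it on the code), carrying the circuit along as data; and handle the correctability uniformity by taking the cell size to be a fixed multiple of the light-cone radius and invoking the Union Lemma once, at the start, to certify all color classes correctable before any cleaning is done.
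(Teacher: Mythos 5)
Your proposal shares the correct opening moves with the paper's proof — the $(D{+}1)$-coloring of $O(1)$-sized cells, fattening to account for the light cone, invoking the Union Lemma to make each color class $\Lambda_i$ correctable, and the observation via Lemma~\ref{lemma: phase for trivial operator} that a commutator supported on a correctable region acts as $\pm 1$. However, the inductive engine you propose is not the one that makes the argument close, and as you yourself flag at the end, it has a gap that I do not think can be repaired in the form you describe.

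You try to run the induction \emph{on the lattice dimension $D$}, reducing to a $(D{-}1)$-dimensional sublattice and applying the theorem there to the operator $C(P) = UPU^\dagger P^\dagger$. For this to be legitimate you would need two things that are not available: (a) a genuine $(D{-}1)$-dimensional topological stabilizer code living on the slab onto which $C(P)$ has been cleaned, with its own large distance, and (b) a constant-depth geometrically-local circuit on that slab implementing $C(P)$. Neither holds. The restriction of a $D$-dimensional stabilizer code to a thin slab is not a topological stabilizer code with a good distance (indeed a thin slab is exactly a \emph{correctable} region, which is the opposite of a code), and $C(P)$ is just a logical operator — a priori an arbitrary unitary preserving the code space — with no guarantee of constant-depth implementability or geometric locality on the slab. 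Your closing suggestion to "carry the circuit along as data" does not resolve this, because the circuit you have is the original $U$ on the full $D$-dimensional lattice, not a circuit for $C(P)$ on a lower-dimensional code.

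The paper's argument avoids this entirely by inducting on the \emph{Clifford hierarchy level} rather than on $D$, inside the fixed $D$-dimensional code. After the partition into $D{+}1$ correctable sets $\Lambda_1,\dots,\Lambda_{D+1}$, one chooses logical Pauli representatives $L_1,\dots,L_D$ with $L_i$ cleaned off $\Lambda_i$ (more precisely off the light-cone fattening $\mathcal{B}(\Lambda_i)$), and forms the nested group commutators $K_1 = U L_1 U^\dagger$, $K_{j+1} = [K_j, L_{j+1}]$. Since the support of a group commutator of a (conjugated) Pauli against a Pauli is confined to the intersection of supports (Lemma~\ref{lemma: constant depth support}), each step peels off one more $\Lambda_i$, so $K_D$ is supported on $\Lambda_{D+1}$ alone, which is correctable; Lemma~\ref{lemma: phase for trivial operator} gives $K_D = \pm\Pi$, i.e.\ $K_D \in \P_0$. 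Unwinding the recursion via Eq.~\eqref{eq:cliffordhierarchy} gives $K_{D-1}\in\P_1$, \dots, $K_1 = UL_1U^\dagger \in \P_{D-1}$, hence $U\in\P_D$. No lower-dimensional code, and no circuit-implementability of the $K_j$, is ever invoked. Replacing your dimensional-reduction step with this nested-commutator mechanism (which is essentially Lemma~\ref{theorem: generalize bk} specialized to the geometric partition) is what you need.
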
 
See \cite{bravyi_classification_2013} for further details. Here we give a sketch of the proof to distill the core reasoning and insights. 
Hereafter we will frequently use the following notion of operator commutator:
\begin{definition}[Group commutator]
Given two unitary operators $P, Q$,  their \emph{group commutator} is defined as
   \begin{equation}
    [P,Q] \coloneqq PQP^\dagger Q^\dagger.
\end{equation} 
\end{definition}

\begin{proof}  
(Sketch) Here we consider a stabilizer code given by stabilizer generators that are geometrically local, i.e., act on a constant number of qubits within a ball of constant radius, in $D$ spatial dimensions. Further, let $U$ be a logical gate implementable by a constant-depth circuit that respects the geometric locality of the code, i.e., under the evolution $U$, $O \to UOU^\dagger$ enlarges the support by at most a constant radius. The goal is to show that $U$ must be in $\mathcal{P}_D$ for all large enough codes. 

\paragraph{Correctable partitioning} The proof relies on the possibility of  partitioning the physical qubits into only $D+1$ subsets that are individually correctable. This can be achieved by a clever spatial partitioning of the $D$-dimensional lattice into $D+1$ groups of disjoint subregions in a way that (a constant size expansion of) the subregions within each group are i) sufficiently small to be correctable, and ii) sufficiently far separated to ensure that no pair of subregions have non-trivial support for the same stabilizer generator (see \cite{bravyi_classification_2013} for an illustration).
Then using the Union Lemma one can show that each of these $D+1$ groups, each consisting of many disjoint but correctable subregions, is also correctable. 
We label these regions as $R_1,\cdots, R_{D+1}$.
    
\paragraph{Deforming logical representatives} 
The next step involves the Cleaning Lemma which allows us to obtain different logical representatives through deforming each representative from a given cleanable region. Specifically, one can obtain $D$ different logical representatives, labeled $\{L_1,\cdots L_D\}$ such that $L_i$ is cleaned from the region $\Lambda_i$. Note that for the constant-depth case, special care has to be taken in choosing the partitions so that a logical representative can be cleaned from the region $\mathcal{B}(R_i)$ defined as the maximum support of $UOU^\dagger$, for all $O$ that is constrained to $R_i$ and for all constant-depth $U$.

\paragraph{Group commutator}  
A key insight for the argument to work is that one can bound the support of the group commutator $K_2 = [UL_1U^\dagger, L_2]$ to the intersection of the support of $L_1$ with $L_2$. 
Hence, by recursively building group commutators $K_i = [UP_iU^\dagger,P_{i+1}]$, we obtain logical operator $K_s$ with smaller and smaller support, with $K_s$ having trivial support on $\cup_{i=1}^{s+1}\Lambda_i$. In particular, $K_D$ is supported only on $\Lambda_{D+1}$, which by construction is a correctable region. 

\paragraph{Clifford hierarchy restriction}
Then the puzzle ``What logical operators can be supported on a correctable region?'' can be resolved with the following fact.
\begin{lemma}\label{lemma: phase for trivial operator}
Let $K$ be a logical operator of the form of a group commutator $K = UPU^\dagger P^\dagger$, where $P \in \P_2$ is a logical Pauli operator and $U$ is a unitary operator. Suppose $K$ is supported on a correctable region, then $K\Pi = \pm \Pi$.
\end{lemma}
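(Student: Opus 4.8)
The plan is to use the correctability of the support region of $K$ together with the Knill--Laflamme condition, and then exploit the special group-commutator form $K = UPU^\dagger P^\dagger$ to pin down the constant $c_{\alpha\beta}$ to a single value. First I would invoke Lemma~\ref{lemma: cleaning-lemma-correctable}: since $K$ is supported on a correctable region $A$, we may deform $K$ by stabilizers so that its restriction to $A$ lies in the appropriate space, and crucially, by the remark following Lemma~\ref{lemma: cleaning-lemma-correctable}, any operator supported on a correctable region acts on the code space as a scalar. Expanding $K = \sum_i P_i$ as a sum of Pauli operators all supported on $A$, the Knill--Laflamme argument gives $K\Pi = \Pi K \Pi = c\,\Pi$ for some complex scalar $c$. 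So the content of the lemma is not that $K$ is a scalar on the code space --- that is immediate --- but that the scalar is specifically $\pm 1$.

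Next I would use the group-commutator structure. Since $K = UPU^\dagger P^\dagger$ and $U$ preserves the code space (so $U\Pi U^\dagger = \Pi$, equivalently $U\Pi = \Pi U$ up to the logical action), and $P \in \P_2$ is a logical Pauli, both $U$ and $P$ act unitarily on the code space. Hence $K$ acts unitarily on the $2^k$-dimensional code space, which forces $|c| = 1$. To get $c = \pm 1$ rather than an arbitrary phase, I would compute the action of $K$ on the logical algebra: because $P$ is a logical Pauli and $U$ is a logical gate, the logical operator $\bar{K} = \bar{U}\bar{P}\bar{U}^\dagger\bar{P}^\dagger$ is a group commutator of a unitary with a Pauli; but we have just shown $\bar K = c\,\mathds{1}$ is a scalar. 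A scalar logical operator commutes with everything, so in particular $\bar U \bar P \bar U^\dagger = c\,\bar P$, i.e.\ conjugation by $\bar U$ sends the logical Pauli $\bar P$ to $c\bar P$. Since $\bar P^2 = \pm\mathds{1}$ (it is a Pauli) and conjugation preserves this relation, $(c\bar P)^2 = \pm \mathds{1}$ forces $c^2 = 1$, hence $c = \pm 1$ and $K\Pi = \pm\Pi$.

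The main obstacle I anticipate is the careful bookkeeping of phases and the distinction between the physical operator $K$ and its logical action $\bar K$: one must be sure that "supported on a correctable region" genuinely yields a \emph{scalar} (not merely a stabilizer-equivalent-to-identity statement up to signs), and that the deformation by stabilizers used in Lemma~\ref{lemma: cleaning-lemma-correctable} does not interfere with the group-commutator identity --- i.e.\ one should either work with the deformed representative throughout or note that stabilizer deformation does not change the logical action. A secondary subtlety is ensuring $P$ is taken to be an honest Pauli (order dividing $4$, with $P^2 = \pm\mathds{1}$) so that the squaring argument applies; since $P \in \mathcal{P}_1 = \P_1$ by hypothesis, this is automatic. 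Once these points are handled, the squaring step $c^2 = 1$ is the crisp conclusion.
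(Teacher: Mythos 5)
Your proof is correct and follows essentially the same route as the paper: establish that $K\Pi = c\Pi$ with $|c|=1$ from correctability, then use the Pauli squaring identity $P^2 = \pm\mathbb{I}$ (equivalently $\bar P^2 = \pm\mathds{1}$ on the logical algebra) to force $c^2 = 1$. The only cosmetic difference is that you phrase the squaring step at the level of logical operators $\bar U\bar P\bar U^\dagger = c\bar P$, whereas the paper multiplies $K\Pi = c\Pi$ by $P$ to get $UPU^\dagger\Pi = cP\Pi$ and then squares on the physical Hilbert space restricted to the code subspace; these are the same computation.
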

\begin{proof}
    Since $K$ is supported on a correctable region and is a logical operator, then we know that $K$ must act trivially on the codespace, i.e. $K\Pi = c\Pi$ with $|c|=1$. Multiplying $P$ on both sides, we obtain $UPU^\dagger \Pi = cP\Pi$. 

    Then, using the fact that $P^2 = \mathbb{I}$, we can square both sides of the equation to eliminate $P$ from the equality:
    \begin{align}
        & (UPU^\dagger)^2= (cP)^2\\
        &LHS = UP^2 U^\dagger = \mathbb{I}\\
        &RHS = (cP)^2 = c^2\mathbb{I} \\
        &\implies 1= c^2 
    \end{align}
    which implies that $c = \pm 1$. 
\end{proof}
By (\ref{eq:cliffordhierarchy}) we have $K_D\in \P_1$, and by induction, one can show that $K_{D-j} \in \P_{j+1}$, so $K_{2} \in \P_{D-1}$ which implies that $U \in \P_D$. 
\end{proof}

\section{Generalized Bravyi--König theorems}\label{sec: generalized BK}
In the previous section, we reviewed the Bravyi--K\"{o}nig theorem, which indicates a tradeoff between fault-tolerant logical gates and the spatial dimension of geometrically local stabilizer codes. From this section onward, we move beyond geometrically local codes and seek to understand the limitations on fault-tolerant logical gates in more general code families.

As a step towards this, we first generalize the original Bravyi--K\"{o}nig argument to algebraic forms that encompass all quantum code families without any restriction on e.g.~spatial geometry and logical qubits. The key insight is that if we are able to gain further understanding of the structure of different sets of logical representatives of a code family, then we can extend the logic of Bravyi--K\"{o}nig to establish bounds on the Clifford hierarchy level even when the code is not geometrically local or even not a qLDPC code. 
With this generalization, the remaining task is to examine specific code constructions and attempt to construct logical representatives in an adversarial way so that one can obtain $K_i$ on a correctable region for the least $i\in \mathbb{Z}$. This leads to fundamental restrictions on logical gates extending beyond geometrically local codes, which will be important as we increasingly shift gear towards non-local codes.

We now rigorously formalize the generalized Bravyi--König theorem below as a universal lemma.
Note that this can also be generalized to the case where $U \Pi_1 = \Pi_2 U$ with $\Pi_i$ being projectors of different stabilizer codes.

\begin{lemma}[Generalized Bravyi--König theorem]
\label{theorem: generalize bk}
Let $\mathcal{Q}$ be a $[\![n,k,d]\!]$ stabilizer code. Let $U$ be a constant-depth circuit that preserves the code space. Suppose for any sequence $\{\overline{P}_j\}_j$ where $\overline{P}_j$ is a logical operator from the set generated by $2k$ X and Z logical operators $\langle \overline{L}_{x,i}, \overline{L}_{z,i}\rangle , i\in[k]$, there exist a corresponding sequence of logical representatives $\{{P}_j\}_j$, where $P_j$ is a logical representative for $\overline{P}_{j}$, such that for the following sequence of operators: 
\begin{align*}
K_1 &= UP_1 U^\dagger, \\
K_2 &= [K_1, P_2],\\
&\cdots\\
K_j &= [K_{j-1},P_j],
\end{align*}
$K_j$ is supported on a correctable region, that is, $K_j = \pm \mathbb{I}$. Then, $U$ is a logical gate constrained to $\P_j$. 
\end{lemma}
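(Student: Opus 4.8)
The plan is to follow the skeleton of the Bravyi--K\"{o}nig argument sketched above, but to replace the geometric partitioning step with the hypothesis we are given directly. The key realization is that Lemma~\ref{theorem: generalize bk} has abstracted away \emph{exactly} the part of the original proof that used geometric locality: the construction of a correctable partition $\Lambda_1,\dots,\Lambda_{D+1}$ and the subsequent cleaning of logical representatives $L_i$ from the (dilated) regions $\mathcal{B}(\Lambda_i)$. What remains is the purely algebraic core --- the recursive group commutator and the Clifford hierarchy bookkeeping --- and that is what I would write out.

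First I would fix notation and set up the recursion. Given the hypothesis, I take an arbitrary logical operator $\overline{P}$ in the group $\langle \overline{L}_{x,i},\overline{L}_{z,i}\rangle$ and want to show $U\overline{P}\,U^\dagger\,\overline{P}^\dagger$ is a logical operator in $\P_{j-1}$; by Definition~\ref{eq:cliffordhierarchy} this is precisely what it means for $U$ to be in $\P_j$ as a logical gate. So I would take the sequence $\overline{P}_1 = \overline{P}$ and then choose $\overline{P}_2,\dots,\overline{P}_j$ to be an appropriate sequence of logical Paulis (the hypothesis quantifies over all such sequences, so I get to pick); apply the hypothesis to obtain representatives $P_1,\dots,P_j$ and the operators $K_1,\dots,K_j$, with $K_j = \pm\mathbb{I}$. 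The second step is the induction \emph{upward} through the Clifford hierarchy: since $K_j = \pm\mathbb{I}\in\P_0\subseteq\P_1$, and since $K_{i} = [K_{i-1},P_i] = K_{i-1}P_i K_{i-1}^\dagger P_i^\dagger$ with $P_i$ a Pauli, the recursive definition (\ref{eq:cliffordhierarchy}) gives $K_{i-1}\in\P_2$ whenever $K_i\in\P_1$ for each Pauli $P_i$ --- wait, more carefully: $K_i = K_{i-1}P_iK_{i-1}^\dagger P_i^\dagger$ being in $\P_{m}$ for the relevant range of $P_i$ tells us $K_{i-1}\in\P_{m+1}$. Running this from $i=j$ down to $i=2$ yields $K_1 = UP_1U^\dagger \in \P_j$, and hence $UP_1U^\dagger P_1^\dagger = [K_1,P_1\text{-free}]$...

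Let me be more precise about the last step, since this is where I expect the only real subtlety. The cleanest route: I would prove by downward induction that $K_{j-\ell}\in\P_{\ell+1}$. Base case $\ell=0$: $K_j = \pm\mathbb{I}\in\P_1$. Inductive step: assume $K_{j-\ell}\in\P_{\ell+1}$; since $K_{j-\ell} = [K_{j-\ell-1},P_{j-\ell}] = K_{j-\ell-1}P_{j-\ell}K_{j-\ell-1}^\dagger P_{j-\ell}^\dagger$ and this holds (by the hypothesis applied to \emph{all} sequences, so in particular to sequences differing only in the last Pauli) for every Pauli $P_{j-\ell}\in\mathcal{P}_1$, the defining property (\ref{eq:cliffordhierarchy}) gives $K_{j-\ell-1}\in\P_{\ell+2}$. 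Taking $\ell = j-2$ gives $K_1 = UP_1 U^\dagger\in\P_j$. Then I would invoke the fact that $\P_j$ is closed under multiplication by Paulis on a code (or simply note $U\overline{P}_1U^\dagger\overline{P}_1^\dagger$, the logical action, lies in $\P_{j-1}$ by one more application of the recursion together with Lemma~\ref{lemma: phase for trivial operator}-style reasoning), and conclude that $U$ restricted to the code space is in $\P_j$.

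The main obstacle is the interplay between the "quantified over all sequences" structure of the hypothesis and the recursion on (\ref{eq:cliffordhierarchy}): the Clifford hierarchy condition $K_{i-1}PK_{i-1}^\dagger P^\dagger \in \P_{m}$ must hold for \emph{all} Pauli $P$, not just the specific $P_i$ appearing in one chosen sequence, so I need the hypothesis to supply, for each fixed prefix $P_1,\dots,P_{i-1}$, a valid continuation for \emph{every} choice of the next Pauli --- which is exactly what "for any sequence $\{\overline{P}_j\}_j$ there exist representatives" provides, as long as the earlier representatives can be taken consistently. I would either argue this consistency is automatic (the representatives $P_1,\dots,P_{i-1}$ and hence $K_{i-1}$ can be fixed first, then the hypothesis invoked for each extension), or, if the statement is meant to be read with representatives chosen per-sequence, note that $K_{i-1}$ as an \emph{operator} may depend on the tail --- in which case I would instead phrase the induction so that at each level I only need $K_{i-1}$ to land in $\P_{m+1}$ for the \emph{particular} representatives, and observe that membership in the Clifford hierarchy of the final logical gate $U$ follows because the argument runs for every sequence simultaneously. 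Resolving this quantifier bookkeeping cleanly is the one place where care is genuinely required; everything else is the bare recursion.
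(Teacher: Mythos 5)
Your argument is the paper's proof: an induction through the Clifford hierarchy using the recursive definition (\ref{eq:cliffordhierarchy}), with the given hypothesis replacing the geometric correctable-partitioning step of the original Bravyi--K\"{o}nig argument. Two small points are worth flagging.

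First, an indexing slip. You take the base case to be $K_j = \pm\mathbb{I} \in \P_1$ and run the recursion to $K_1 \in \P_j$; but to conclude $U\in\P_j$ you need $UP_1U^\dagger P_1^\dagger \in \P_{j-1}$ for all Pauli $P_1$, i.e.\ $K_1 \in \P_{j-1}$ (using that $\P_{j-1}$ is closed under right multiplication by Paulis for $j\geq 2$). Tracking $K_j = \pm\mathbb{I}$ in $\P_0$ rather than $\P_1$ fixes this cleanly and lands exactly on $K_1\in\P_{j-1}$; as written your version overshoots to $\P_{j+1}$ and the ``one more application of the recursion'' patch is vague about where the extra step comes from.

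Second, the quantifier concern you raise at the end is genuine but has a clean resolution that you gesture at without nailing down: the Clifford-hierarchy membership of the logical gate $U$ is a statement about the \emph{logical actions} $\overline{K_i}$ restricted to the code space, and since every $K_i$ preserves the code space and the logical action of a product is the product of logical actions, $\overline{K_i}$ is determined by $\overline{U}$ and the logical classes $\overline{P}_1,\dots,\overline{P}_i$ alone, independently of which physical representatives are chosen. So a single fixed operator $\overline{K_{i-1}}$ is well-defined at each stage, the inductive hypothesis ``$[\overline{K_{i-1}},\overline{P}_i]\in\P_m$ for all $\overline{P}_i$'' is well-posed, and the consistency you worried about is automatic. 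The paper's own proof glosses over this; you were right to pause on it.
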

\begin{proof}
We prove this by induction on ${K}_j$ using the definition of $\mathcal{P}_k$ in Eq.~(\ref{eq:cliffordhierarchy}).
If $j=2$, then ${K}_2$ is a trivial logical operator by Lemma~\ref{lemma: phase for trivial operator}, so we conclude that $K_2 = \pm 1$ which implies that $U{P}_1 U^\dagger \in \mathcal{P}_1$. By Eq.~(\ref{eq:cliffordhierarchy}), $U \in \mathcal{P}_2$ is a Clifford gate. 

For $j>2$, suppose ${K}_{j-m} \in \mathcal{P}_m$, then ${K}_{j-m} = [{K}_{j-m-1},{P}_{j-m}]$ for all $ {P}_{j-m} \in \langle {L}_{x,i}, {L}_{z,i}\rangle_{i \in[k]}$ which implies that ${K}_{j-m-1} \in \mathcal{P}_{m+1}$. Recursively, we obtain ${K}_1 \in \mathcal{P}_{j-1}$, which implies that $U \in \mathcal{P}_j$. 
\end{proof}

We now state two useful results about supports which are formulated for the transversal and constant-depth circuit types of fault-tolerant logical gates  respectively.

\begin{lemma}\label{lemma: constant depth support}
    Let $K = [UPU^\dagger, Q] = UPU^\dagger Q (UPU^\dagger)^\dagger Q^\dagger$. If $U$ is a constant-depth circuit and $P$, $Q$ are logical Pauli operators, then the support of $K$ is constrained to 
    \begin{align}\label{eq: constant depth support}
        \mathrm{supp}\left[UPU^\dagger V(UPU^\dagger)^\dagger\right] \cup \mathrm{supp}\left[V\right],
    \end{align}
    where $V = Q|_{\supp[UPU^\dagger]}$, so that $\supp[V] = \supp[Q]\cap \supp[UPU^\dagger]$.
\end{lemma}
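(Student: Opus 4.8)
The plan is to track how the group commutator distributes over a product and how conjugation by a constant-depth circuit interacts with restriction of a Pauli operator to a subset of qubits. The starting observation is that $K = [UPU^\dagger, Q]$ involves $Q$ only through how $UPU^\dagger$ fails to commute with it; since $Q$ is a Pauli operator, it factorizes as a tensor product over qubits, $Q = V \otimes W$, where $V = Q|_{\supp[UPU^\dagger]}$ collects the tensor factors on qubits in $\supp[UPU^\dagger]$ and $W = Q|_{\Lambda \setminus \supp[UPU^\dagger]}$ collects the rest. The key point is that $W$ commutes with $UPU^\dagger$ because their supports are disjoint, so in the product $UPU^\dagger\, Q\, (UPU^\dagger)^\dagger\, Q^\dagger$ the factor $W$ can be commuted through $UPU^\dagger$ and cancels against the $W^\dagger$ coming from $Q^\dagger$. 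Hence $K = UPU^\dagger\, V\, (UPU^\dagger)^\dagger\, V^\dagger = [UPU^\dagger, V]$.

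First I would make this cancellation rigorous: write $Q = VW$ with $V,W$ supported on disjoint qubit sets, note $[UPU^\dagger, W] = \mathbb{I}$ since $\supp[W] \cap \supp[UPU^\dagger] = \emptyset$, and then compute
\begin{align*}
K &= UPU^\dagger\, (VW)\, (UPU^\dagger)^\dagger\, (VW)^\dagger \\
&= UPU^\dagger\, V\, (UPU^\dagger)^\dagger \, \big( UPU^\dagger\, W\, (UPU^\dagger)^\dagger \big)\, W^\dagger\, V^\dagger \\
&= UPU^\dagger\, V\, (UPU^\dagger)^\dagger\, V^\dagger,
\end{align*}
using that $V$ and $W$ (hence $V^\dagger$ and $W$) act on disjoint qubits and commute, and that $W$ is fixed by conjugation by $UPU^\dagger$. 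Next I would read off the support of the resulting operator $K = UPU^\dagger\, V\, (UPU^\dagger)^\dagger\, V^\dagger$. Any qubit on which $K$ acts nontrivially must be acted on nontrivially by at least one of the two factors $UPU^\dagger\, V\, (UPU^\dagger)^\dagger$ or $V^\dagger$, so $\supp[K] \subseteq \supp[UPU^\dagger\, V\, (UPU^\dagger)^\dagger] \cup \supp[V]$, which is exactly the claimed set in Eq.~(\ref{eq: constant depth support}). The hypothesis that $U$ is constant-depth is not actually needed for this algebraic identity, but it is what makes the conclusion useful downstream: $\supp[UPU^\dagger]$ is contained in a constant-radius neighborhood of $\supp[P]$ and $\supp[UPU^\dagger V (UPU^\dagger)^\dagger]$ is contained in a constant-radius neighborhood of $\supp[V] = \supp[Q] \cap \supp[UPU^\dagger]$.

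The only subtlety — and the step I would be most careful about — is the bookkeeping of Pauli phases and the precise meaning of ``$Q|_{\supp[UPU^\dagger]}$''. Since $Q$ is a Pauli operator one should fix a decomposition $Q = VW$ as an honest tensor product (absorbing any global phase into, say, $V$), so that $V$ and $W$ are themselves Pauli operators on complementary qubit sets and genuinely commute; then no phase ambiguity enters the cancellation of $W$ with $W^\dagger$. With that convention pinned down, the computation above is a short identity and there is no real obstacle. I would also remark that the same argument shows $K$ is supported on $\supp[V] = \supp[Q] \cap \supp[UPU^\dagger]$ after a further deformation, but for the lemma as stated the union form in Eq.~(\ref{eq: constant depth support}) is what is needed.
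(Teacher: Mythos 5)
Your proof is correct and follows essentially the same route as the paper's: both decompose $Q$ into the restriction $V = Q|_{\supp[UPU^\dagger]}$ and the complementary part (your $W$, the paper's $V^\dagger Q$), observe that the latter commutes with $UPU^\dagger$ by disjointness of supports, and arrive at $K = UPU^\dagger V (UPU^\dagger)^\dagger V^\dagger$ before reading off the support bound. Your version is slightly more explicit in spelling out the $W$/$W^\dagger$ cancellation and in noting that the constant-depth hypothesis is not needed for the algebraic identity, but the underlying argument is the same.
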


\begin{proof}
$UPU^\dagger$ acts nontrivially only on the qubits that is in the support of \( UPU^\dagger\). Thus, when we conjugate \( Q \) by \( UPU^\dagger \), only the non-trivial component of \( Q \) that overlap with the support of \( UPU^\dagger \) contribute nontrivially. Therefore, the term
\begin{equation}
UPU^\dagger Q (UPU^\dagger)^\dagger
\end{equation}
can be decomposed as having contribution from two parts:
\begin{equation}\label{eq:supp of const depth}
\left(UPU^\dagger V (UPU^\dagger)^\dagger\right) \cdot (V^{\dagger}Q).
\end{equation}
The support of the first term is contained in the region given by 
\[
\mathrm{supp}\left[UPU^\dagger V(UPU^\dagger)^\dagger\right],
\]
while the support of the second term is given by $\supp[V^\dagger Q ]$.
The product of (\ref{eq:supp of const depth}) with $Q^\dagger$ then gives:
\begin{equation}
UPU^\dagger V (UPU^\dagger)^\dagger \cdot V^{\dagger},
\end{equation}
The support of which is given by
\[
\mathrm{supp}\left[UPU^\dagger V(UPU^\dagger)^\dagger\right]\cup \mathrm{supp}[V^\dagger].
\]
This concludes the proof.
\end{proof}
When $U$ is a transversal gate, the task of finding sets of logical representatives to obtain a no-go result can be reduced to the following statement:
\begin{corollary}
Let \( U \) be a transversal unitary gate. Suppose there exists a set of logical representatives \( \{P_j\}_j \) where each \( P_j \) is a representative of the logical operator \( \overline{P}_j \) such that the intersection of their supports,
\[
\operatorname{supp}[P_1] \cap \operatorname{supp}[P_2] \cap \cdots \cap \operatorname{supp}[P_j],
\]
is a correctable region. Then the logical action of \( U \) is restricted to $\mathcal{P}_j$.
\end{corollary}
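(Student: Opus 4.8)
The plan is to reduce the corollary to the Generalized Bravyi--K\"{o}nig theorem (Lemma~\ref{theorem: generalize bk}) by showing that the recursively defined group commutators $K_1 = UP_1U^\dagger$, $K_2 = [K_1,P_2]$, $\dots$, $K_j = [K_{j-1},P_j]$ all collapse to a single operator $K_j$ supported on the claimed correctable region $\operatorname{supp}[P_1]\cap\cdots\cap\operatorname{supp}[P_j]$. The crucial simplification afforded by transversality is that for $U = \bigotimes_i u_i$, conjugation $P \mapsto UPU^\dagger$ does not enlarge supports at all: $\operatorname{supp}[UP_1U^\dagger] = \operatorname{supp}[P_1]$. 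So at the level of supports the constant-depth bookkeeping of Lemma~\ref{lemma: constant depth support} becomes sharp, with $c = 1$.

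First I would establish the base case: $K_1 = UP_1U^\dagger$ is supported exactly on $\operatorname{supp}[P_1]$. Then I would argue inductively that $\operatorname{supp}[K_m]\subseteq \operatorname{supp}[P_1]\cap\cdots\cap\operatorname{supp}[P_m]$. For the inductive step, write $K_{m+1} = K_m P_{m+1} K_m^\dagger P_{m+1}^\dagger$ and apply (the transversal specialization of) Lemma~\ref{lemma: constant depth support} with $UPU^\dagger$ replaced by $K_m$ and $Q$ replaced by $P_{m+1}$: the support of $K_{m+1}$ lies in $\operatorname{supp}[K_m P_{m+1}|_{\operatorname{supp}[K_m]} K_m^\dagger] \cup \operatorname{supp}[P_{m+1}|_{\operatorname{supp}[K_m]}]$. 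Here we use that $K_m$ is itself (a conjugate under $U$ of) a tensor-product-local object built from $U$ and Paulis, so conjugating a Pauli restricted to $\operatorname{supp}[K_m]$ by $K_m$ keeps the support inside $\operatorname{supp}[K_m]$. Both terms are therefore contained in $\operatorname{supp}[K_m]\cap\operatorname{supp}[P_{m+1}]$, which by the inductive hypothesis is inside $\operatorname{supp}[P_1]\cap\cdots\cap\operatorname{supp}[P_{m+1}]$. This is exactly the support-shrinking behavior underlying the original Bravyi--K\"{o}nig argument, just made trivial because the transversal gate has $c=1$.

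With the support containment in hand, since $\operatorname{supp}[P_1]\cap\cdots\cap\operatorname{supp}[P_j]$ is assumed correctable, $K_j$ is a logical operator supported on a correctable region; by the Remark following Lemma~\ref{lemma: cleaning-lemma-correctable} (or directly via the Knill--Laflamme argument) $K_j$ acts as a phase on the code space, i.e.\ $K_j\Pi = \pm\Pi$. Strictly speaking, to invoke Lemma~\ref{theorem: generalize bk} we want $K_j = \pm\mathbb{I}$ as a statement about the logical action, so I would note that $K_j$ is of the group-commutator form $[K_{j-1},P_j]$ with $P_j\in\mathcal{P}_2$ and apply Lemma~\ref{lemma: phase for trivial operator} (which is exactly designed for this) to conclude $K_j$ is trivial. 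Then the hypotheses of Lemma~\ref{theorem: generalize bk} are met for this choice of representatives, and since the argument works for every sequence $\{\overline{P}_j\}$ with the prescribed representatives, the conclusion is that $U$ is a logical gate constrained to $\mathcal{P}_j$.

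The main obstacle I anticipate is the careful justification that conjugating by $K_m$ does not spread support beyond $\operatorname{supp}[K_m]$ — one must track that each $K_m$ is a product of a $U$-conjugated Pauli and Paulis, all of which are supported on $\operatorname{supp}[K_m]$, so that $K_m(\,\cdot\,)K_m^\dagger$ is genuinely an operator acting within that support. This is where the transversal (tensor-product) structure of $U$ is essential: it guarantees $\operatorname{supp}[UP U^\dagger]=\operatorname{supp}[P]$ at every stage, whereas a general constant-depth circuit would inflate supports by a constant factor at each nesting level and the intersection argument would have to be replaced by the more delicate ``light-cone'' partitioning used in Section~\ref{sec:constant_depth}. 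Everything else is routine, amounting to an induction plus an appeal to the two lemmas already proved.
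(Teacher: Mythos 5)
Your proposal is correct and follows essentially the same route as the paper's proof: use transversality to collapse the constant-depth support bound of Lemma~\ref{lemma: constant depth support} to the exact equality $\operatorname{supp}[K_m] = \operatorname{supp}[P_1]\cap\cdots\cap\operatorname{supp}[P_m]$, then invoke correctability and Lemma~\ref{theorem: generalize bk}. The only difference is cosmetic: you spell out the induction and the tensor-product-locality of each $K_m$ more explicitly than the paper does (the paper implicitly relies on the same fact, namely that $K_m$ is a product of tensor-product operators and therefore conjugation by $K_m$ does not spread support), and you make explicit the appeal to Lemma~\ref{lemma: phase for trivial operator} to pass from "supported on a correctable region" to "acts as $\pm 1$," which the paper leaves tacit; both additions are harmless and arguably clarifying.
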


\begin{proof}
When $U$ is a transversal gate, (\ref{eq: constant depth support})  reduces to the following: 
    \begin{align}\label{eq: constant depth support2}
\mathrm{supp}\left[UPU^\dagger V (UPU^\dagger)^\dagger\right] \cup \mathrm{supp}\left[ V\right]
= \mathrm{supp}[V]
= \supp[Q] \cap \supp[P].
\end{align}
The support of $K_2$ is given by $\supp[P_2]\cap\supp[P_1]$, and the support of $K_3$ is then given by $\supp[P_3]\cap (\supp[P_1]\cap \supp[P_2])= \supp[P_3]\cap\supp[P_2]\cap\supp[P_1]$.
Doing this recursively yields $K_j = \operatorname{supp}[P_1] \cap \operatorname{supp}[P_2] \cap \cdots \cap \operatorname{supp}[P_j]$. Then, this implies that $K_j$ is a correctable region. Following the result from Lemma~\ref{theorem: generalize bk}, we conclude that $U \in \P_j$.  
\end{proof}

We end this section with a brief discussion of the connection between our understanding of the Clifford hierarchy with the number of logical qubits required in the proof of the Bravyi--König theorem (as in Lemma~\ref{theorem: generalize bk}). It will not be essential to the results in the subsequent sections, but motivates the treatment of logical gate limitations for three separate cases where we respectively consider (i) all the logical qubits as per the statement in Lemma~\ref{theorem: generalize bk}; (ii) only a finite number of logical qubits is involved for each element in the sequence $\{\overline{P}_j\}_j$, so that the total number of logical qubits scales linearly with the level of the Clifford hierarchy; (iii) only $O(1)$ number of logical qubits on which we can obtain the limitations of the logical gates.

There are several reasons why we treat these three cases separately. Firstly, for commonly considered logical gates including $T, \mathrm{CC}Z$ and $\mathrm{C}Z$ etc., there always exist a sequence of logical operators of a single logical qubit $\overline{P}_j$ to sequentially lower the level of Clifford hierarchy by using group commutator. So this motivates a theoretical understanding of no-go results specific to Case (ii), where we only need a sequence involving a few logical qubits each time. Case (ii) is interesting for known subgroups such as the diagonal Clifford gates which have been fully characterized in~\cite{PhysRevA.95.012329}. Secondly, we are often interested in logical gates acting on a finite number of logical qubits, such as in the consideration of addressable gates  \cite{he2025addressable,Quintavalle_2023}, and there are also examples of codes with only a constant number of logical qubits, so Case (iii) will be useful for such instances. Therefore, it is also a well-motivated case that should be separately checked.  

Theoretically speaking, we cannot rule out the possibility that a logical operator realizes its full level in the Clifford hierarchy only when we study its action on all the $k$ logical qubits in the code. Thus, Case (i) is required to understand the limitations on logical gates in its full generality.

Remarkably, it turns out that it is sufficient to consider just Case (iii) if we are able to show that the logical gates are restricted to $\P_3$ (which encompasses the most important gates including $T,  \mathrm{CC}Z, \mathrm{C}S$ and Clifford gates), where we only need to restrict our attention to 3 logical qubits.

\begin{lemma}[Generalized Bravyi--König theorem II]\label{lemma: clifford-hierachy-basis}
Let $U$ be a constant-depth circuit that preserves the code space. Suppose for any sequence $\{\overline{P}_j\}_j$, $j\leq3$, where $\overline{P}_j$ is a logical operator of {a single logical qubit} from the set generated by $2k$ X and Z logical operators $\langle \overline{L}_{x,i}, \overline{L}_{z,i}\rangle, i\in[k]$, there exist a corresponding sequence of logical representatives $\{{P}_j\}_j$, where $P_j$ is a logical representative for $\overline{P}_{j}$, such that for the following sequence of operators: 
\begin{align*}
K_1 &= UP_1 U^\dagger, \\
K_2 &= [K_1, P_2],\\
&\cdots\\
K_j &= [K_{j-1},P_j],
\end{align*}
we have that $K_j$ is supported on a correctable region. Then, $U$ is a logical gate constrained to $\P_j$.
\end{lemma}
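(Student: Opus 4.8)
The statement to prove is Lemma~\ref{lemma: clifford-hierachy-basis}, which asserts that if a constant-depth logical gate $U$ can be ``trivialized'' on a correctable region via a length-$\le 3$ group-commutator sequence built from \emph{single-logical-qubit} operators, then $U$ is constrained to $\P_j$ (with $j \le 3$). The approach is to reduce this to the already-established Lemma~\ref{theorem: generalize bk} (Generalized Bravyi--K\"{o}nig), where the commutator sequence is allowed to range over \emph{all} $2k$ logical generators. The only gap to close is that the hypothesis here is weaker: it only controls sequences $\{\overline{P}_j\}$ in which each $\overline{P}_j$ acts on a single logical qubit. So the crux is to argue that controlling single-qubit logical operators suffices to pin down membership in $\P_2$ and $\P_3$.

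\textbf{Step 1 (the cases $j=1,2$).} For $j=1$, $K_1 = UP_1U^\dagger = \pm\mathbb{I}$ forces $U$ to act trivially up to phase on the relevant sector, so $U\in\P_1$ after the standard argument; for $j=2$, $K_2 = [UP_1U^\dagger, P_2]$ supported on a correctable region gives, by Lemma~\ref{lemma: phase for trivial operator}, $K_2 = \pm\Pi$, hence $UP_1U^\dagger \in \P_1$ acting on the codespace, and by Eq.~\eqref{eq:cliffordhierarchy} this shows $U\in\P_2$ \emph{provided} this holds for every logical Pauli $P_1$. Here is the one subtlety: Eq.~\eqref{eq:cliffordhierarchy} requires $UPU^\dagger P^\dagger \in \P_1$ for \emph{all} $P\in\mathcal{P}_1$, i.e.\ all Pauli operators, not just single-qubit-logical ones. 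But the single-logical-qubit logical Paulis $\overline{L}_{x,i}, \overline{L}_{z,i}$ for $i\in[k]$ generate the full logical Pauli group, and conjugation $P\mapsto UPU^\dagger$ is a group homomorphism; so if $UL_{x,i}U^\dagger$ and $UL_{z,i}U^\dagger$ are each logical Pauli (level $\P_1$) for every $i$, then $UPU^\dagger\in\P_1$ for every logical Pauli $P$, giving $U\in\P_2$. This multiplicativity argument is the heart of why restricting to single-qubit generators loses nothing at level 2.

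\textbf{Step 2 (the case $j=3$).} Here one wants $U\in\P_3$, i.e.\ $[U, P]\in\P_2$ for all logical Pauli $P$. The hypothesis gives, for each single-qubit logical $\overline{P}_2$, that $K_2 = [UP_1U^\dagger, P_2]$ can be further reduced: $K_3 = [K_2, P_3]$ lands on a correctable region, so $K_3 = \pm\Pi$, hence $K_2 \in \P_1$ on the codespace, hence (running the induction in Lemma~\ref{theorem: generalize bk}) $K_1 = UP_1U^\dagger \in \P_2$ for each single-qubit $\overline{P}_1$. Again invoke multiplicativity: the map $P\mapsto UPU^\dagger$ sends a generating set of logical Paulis into $\P_2$; but $\P_2$ (the Clifford group) \emph{is a group}, so it sends the whole logical Pauli group into $\P_2$, i.e.\ $[U,P]\in\P_2$ for all $P$, i.e.\ $U\in\P_3$. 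This is exactly where the ``$j\le 3$'' restriction is essential: $\P_2$ being a group (unlike $\P_l$ for $l>2$, as noted after Eq.~\eqref{eq:cliffordhierarchy}) is what lets us upgrade ``generators map in'' to ``everything maps in.''

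\textbf{Main obstacle.} The routine part is re-running the induction of Lemma~\ref{theorem: generalize bk}; the genuinely load-bearing step is the closure/multiplicativity argument that replaces ``all $2k$ logical operators'' in Lemma~\ref{theorem: generalize bk} with ``single-qubit logical operators'' --- and this only works because $\P_1$ and $\P_2$ are groups closed under products and conjugation, so membership on a generating set propagates. One should also be careful that the definition of $\P_k$ in Eq.~\eqref{eq:cliffordhierarchy} quantifies over \emph{all} Pauli operators $P\in\mathcal{P}_1$ (physical, not just logical); but since $U$ is already assumed to preserve the code space, its Clifford-hierarchy level is governed by its action on logical Paulis, so this is not an issue --- this point should perhaps be stated explicitly rather than glossed over. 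With these observations the proof is short: apply Lemma~\ref{lemma: phase for trivial operator} to $K_j$, run the induction to push the level down to $K_1\in\P_{j-1}$ on single-qubit generators, then use group-closure of $\P_1$ (for $j=2$) or $\P_2$ (for $j=3$) to conclude $U\in\P_j$.
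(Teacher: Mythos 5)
Your proposal is correct and takes essentially the same approach as the paper's own proof: both rely on Lemma~\ref{lemma: phase for trivial operator} to trivialize $K_j$, then use the group-closure of $\P_1$ and $\P_2$ to propagate from single-qubit logical generators to the full logical Pauli group, with the $j\le 3$ restriction stemming precisely from the non-closure of $\P_3$. The only difference is presentational — the paper frames the argument as a demonstration that $j=4$ fails (from which $j\le 3$ follows), whereas you prove $j=2,3$ directly — and your phrase ``running the induction in Lemma~\ref{theorem: generalize bk}'' is slightly imprecise (that lemma's induction assumes access to the full generating set, so the closure argument is needed at that intermediate step too), though you do invoke the needed multiplicativity elsewhere.
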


\begin{proof}
We will show that the result does not hold for $j=4$ from which the proof for why $j<4$ cases work can be easily deduced.

We start with $K_4 = \pm 1$ and work our way back to $K_1$. Let $\mathcal{B}$ denote the basis $\{\overline{L}_{x,i},\overline{L}_{z,i}\}$. $K_4 = K_3 P_4 K_3^\dagger P_4^\dagger$ for any basis element $\overline{P}_4 \in \mathcal{B}$. Then, all Pauli operators in $\P_1$ must either commute or anti-commute with $K_3$, so $[K_3,P]=\pm1$ holds for all elements in $\P_1$, indicating that $K_3$ must be in $\P_1$.

Next, note that $K_3=[K_2,P_3]$ for all $\overline{P}_3 \in \mathcal{B}$, and $K_3 \in \P_1$, which implies that $K_2 P_3 K_2^\dagger \in \P_1$. Since the Pauli group is closed under multiplication, $K_2 P K_2^\dagger \in \P_1$ for all $P \in \P_1$. Therefore, we conclude that $K_2 \in \P_2$, i.e., in the Clifford group. 

For $K_2=[K_1,P_2]$ where $K_2 \in \P_2$, we use the fact that Clifford group is also closed under multiplication to conclude that $K_2 = [K_1,P] \in \P_2$ for all $P \in \P_1$. This indicates that $K_1$ must be in $\P_3$. 

Finally, $K_1 = UP_1U^\dagger$ is in $\P_3$. However, it is possible that there are some elements in $\P_1$ such that $UPU^\dagger$ is in some higher level of the Clifford hierarchy, even though we know that for all basis element $Q \in \mathcal{B}$ $UQU^\dagger \in \P_3$. As an aside, we can show that there must be a basis element $Q'$ such that $UQ'U^\dagger\in \P_3 -\P_2$, using the fact that $\P_2$ is closed. Therefore, this lemma only works for $j \leq 3$. 
\end{proof}

From the proof above, we show that up to logical gates in $\P_3$, Case (iii) is a sufficient condition for Case (i), but we are unable to show this for higher level gates due to the fact that $\P_n$ is not closed for $n\geq 3$.

An interesting fact we found from the proof is that for any logical gate in $\P_4$, we can sequentially lower the  Clifford hierarchy level each time by using a group commutator and a Pauli operator for a single logical qubit. One may then naturally wonder if for a logical gate in a higher level of the Clifford hierarchy, it is possible to use only a constant number of logical qubits for each round of group commutator to lower the level by one. If this is true, then we can potentially establish a simpler statement for Lemma~\ref{theorem: generalize bk} and it will substantially lower the search complexity of determining the level of Clifford hierarchy for any logical gate and also make it easier to obtain limitations of logical gates for a given code. This will not be essential to the present work and we leave its further investigation for future research.

We briefly remark that we implicitly require that any logical gate implementable by a constant-depth circuit lies in the Clifford hierarchy. We consider this requirement natural and mild: it has been shown to be true for transversal gates~\cite{disjointness_2018} and no reasons or instances of logical gates on stabilizer codes that do not belong to the
Clifford hierarchy are known.
Finally, note that given our knowledge about the logical gates, we cannot entirely rule out the possibility of a logical gate whose level in the Clifford hierarchy is only maximized when acting on a large number of logical qubits. In the subsequent proofs, we will always clarify if the proof works for a subset of logical qubits for $\{\overline{P}_j\}_j$, or if it holds for all logical qubits of a given code as well as any other assumptions that we make. 

\section{{On hypergraph product codes}}\label{sec: hgp}

Hypergraph product codes are quantum codes corresponding to chain complexes obtained from taking homological/tensor products of 1-complexes which represent classical codes. Here the number of seed complexes is referred to as the \emph{dimension} of the product, but note that this should not be confused with the common spatial notion of dimensions, for which we use the notation ``xD''. 

The standard and most understood notion of hypergraph product codes is simply the 2-dimensional ones obtained from the product of two classical codes~\cite{Tillich_2014}, but they can also be generalized to define higher-dimensional hypergraph product codes in analogy to topological codes in higher spatial dimensions~\cite{Zeng_2019}. Here we provide formal definitions and discuss various key properties and techniques.

\subsection{Two-dimensional hypergraph product codes}\label{sub: 2DHGP}    

We start with the standard 2-dimensional hypergraph product codes. A more comprehensive introduction can be found in \cite{Tillich_2014,BH:homological,Leverrier_2015,burton2020,Quintavalle_2023,golowich2024quantumldpccodestransversal}. 

\begin{definition}[2-dimensional hypergraph product code]
Let $\mathcal{A} = \partial_a: \mathbb{F}_2^{n_a} \xrightarrow{\partial_a} \mathbb{F}_2^{m_a}$ and $\mathcal{B} = \partial_b: \mathbb{F}_2^{n_b} \xrightarrow[]{\partial_b} \mathbb{F}_2^{m_b}$ be two 1-complexes. A 2-dimensional hypergraph product code can be obtained from their tensor product $\mathcal{A} \otimes \mathcal{B}$ as shown below. 
\[
\begin{array}{ccc}
\mathbb{F}_2^{n_a} \otimes \mathbb{F}_2^{n_b} & \xrightarrow{I_{n_a} \otimes \partial_b} & \mathbb{F}_2^{n_a} \otimes \mathbb{F}_2^{m_b} \\
\Big\downarrow{\scriptstyle \partial_a \otimes I_{n_b}} &  & \Big\downarrow{\scriptstyle \partial_a \otimes I_{m_b}} \\
\mathbb{F}_2^{m_a} \otimes \mathbb{F}_2^{n_b} & \xrightarrow{I_{m_a} \otimes \partial_b} & \mathbb{F}_2^{m_a} \otimes \mathbb{F}_2^{m_b}
\end{array}
\]
\end{definition}
Its code length $n$ is given by $m_a n_b + n_a m_b$. The physical qubits are split into two sectors with the first one having $m_a n_b$ physical qubits and the second one having $n_a m_b$ qubits. Each of them can be depicted as a 2-dim $m_a \times n_b$ ($n_a \times m_b$) lattice. 

The parity check matrices are given by:
\begin{align}\label{eq:2dHGP_XZ}
H_x = \begin{pmatrix} I_{m_a}\otimes \partial_b \quad \partial_a \otimes I_{m_b} \end{pmatrix}, \quad 
H_z = \begin{pmatrix} \partial_a^T \otimes I_{n_b} \quad I_{n_a} \otimes \partial_b^T \end{pmatrix}.
\end{align}
It is straightforward to verify that the orthogonality condition $H_x H_z^T = 0$ holds, i.e.,
\begin{align}  
	\begin{pmatrix} I_{m_a}\otimes \partial_b \quad \partial_a \otimes I_{m_b} \end{pmatrix}
	\begin{pmatrix} \partial_a \otimes I_{n_b} \\ I_{n_a} \otimes \partial_b \end{pmatrix} = 0,
\end{align}
confirming that this definition yields a valid CSS code.

Evidently, only if the seed classical codes are LDPC then the resulting quantum code is LDPC (So in general a hypergraph product code need not be a qLDPC code).

The $Z$ logicals are given by $\ker{H_X}/\Ima H_Z^T$ and the $X$ logicals are given by $\ker H_Z /\Ima H_X^T$. As it turns out, the idea of punctures can give a more explicit description of the logical representatives. Here, we adopt the following definitions from \cite{burton2020}, and use punctures to describe the logical representatives of hypergraph product codes.

\begin{definition}[Puncture]
	Let $M: \mathbb{F}_2^n \rightarrow \mathbb{F}_2^m$ be a binary linear map. We say a set $\gamma \subset [n]$ punctures $M$ if no vector from the row span of $M$ has support completely contained in $\gamma$. 
	Alternatively, we say that $\gamma$ is $|\gamma|$-\emph{puncturable} to $M$.
\end{definition}

If $\gamma$ is a set with the maximal number of punctures, then the subspace $\mathbb{F}_2^{\gamma} \subset \mathbb{F}_2^n$ has to be a complement of the row span. As complements are not unique, it is possible to choose different maximal subsets of punctures, which is a crucial observation in proving our main results later. 

\begin{definition}[Standard form] For any $k$-dimensional linear code over $\mathbb{F}^n_2$, generator matrix $G$ and parity check matrix $H$ can take the standard form
	\begin{align}
		G & =\left(\begin{array}{ll}
			I_k & J^T
		\end{array}\right), \\
		H & =\left(\begin{array}{ll}
			J & I_{n-k}
		\end{array}\right),
	\end{align}
	where $J \in \mathbb{F}^{(n-k) \times k}_2$.
\end{definition}
\begin{example}
	Once written in the standard form, it is quite obvious to read off a maximal set of punctures: $\gamma = \{1, 2, \cdots, k\}$ is a maximal $k$-puncture to $H$. As mentioned before, punctures might not be unique. 
\end{example}

Let
\begin{align}
	\ker \partial_a = k_a, & \quad \ker \partial_b = k_b, \\
	\ker \partial_a^T = k_a^T, & \quad \ker \partial_b^T = k_b^T.
\end{align}
Putting $\ker \partial_a, \ker \partial_a^T, \ker \partial_b$ and $\ker \partial_b^T$ into standard form. Then, one can also read off that the number of punctures of $\partial_a^T$ and $\partial_b^T$ is equal to $k_a^T$ and $k_b^T$ respectively. 

First fix a set of linearly independent vectors from the standard form of $\ker \partial_a$ as $\{ \xi^{(1)}_a, \cdots, \xi^{(k_a)}_a\} $, with $\xi_a^{(i)}\in\mathbb{F}_2^{n_a}$. Its associated \emph{pivots} are defined as the unit vectors (or one-hot vectors) $\{ e^{(1)}_a, \cdots, e^{(k_a)}_a \}$, $ e_a^{(i)}\in\mathbb{F}_2^{n_a}$, corresponding to the first $k_a$ columns of $\partial_a$ in standard form. We also fix a basis from $\ker \partial^T_a $ as $ \{ \zeta^{(1)}_a, \cdots, \zeta^{(k^T_a)}_a\}$, $\zeta_a^{(i)}\in\mathbb{F}_2^{m_a}$, with its pivots $\{ f^{(1)}_a, \cdots, f^{(k^T_a)}_a \}$ as the unit vectors corresponding to the first $k_a^T$ columns of $\partial_a^T$ in standard form. Similarly we define the linearly independent vectors and their associated pivots for $\partial_b$ and $\partial_b^T$ and indicate these with a change of subscript to $b$.

Then, by looking at Eq.~\eqref{eq:2dHGP_XZ}, it is easy to check that the following vectors span $\ker H_x$ and $\ker H_z$. We define these $X$ and $Z$ logical representatives as the \textit{canonical logical representatives}:
\begin{itemize}
	\item A canonical basis for the $Z$ logicals is given by:
	\begin{align*}
		\text{First sector: } f^{(i)}_a \otimes \xi^{(j)}_b \text{, where } i\in [k_a^T], j \in [k_b]. \\
		\text{Second sector: } \xi^{(i)}_a \otimes f^{(j)}_b\text{, where } i\in [k_a], j \in [k_b^T].
	\end{align*}
	
	\item A canonical basis for the $X$ logicals is given by
	\begin{align*}
		\text{First sector: } \zeta^{(i)}_a \otimes e^{(j)}_b \text{, where } i \in [k_a^T], j \in [k_b]. \\
		\text{Second sector: }
		e^{(i)}_a \otimes \zeta^{(j)}_b\text{, where } i\in [k_a], j \in [k_b^T].
	\end{align*}
\end{itemize}
Visually, each $X$ and $Z$ canonical logical representatives is either supported on a single column or a single row in one of the two sectors. By counting the number of punctures and dimension of the kernels, we have $k_a^T k_b + k_a k_b^T$ number of $X$ ($Z$)-logical representatives, which is identical to the dimension $k = k_a^T k_b + k_a k_b^T$ of the (co)homology or, the number of logical qubits \cite{Tillich_2014}. 

\begin{figure}
    \centering
    \includegraphics[width=0.5\linewidth]{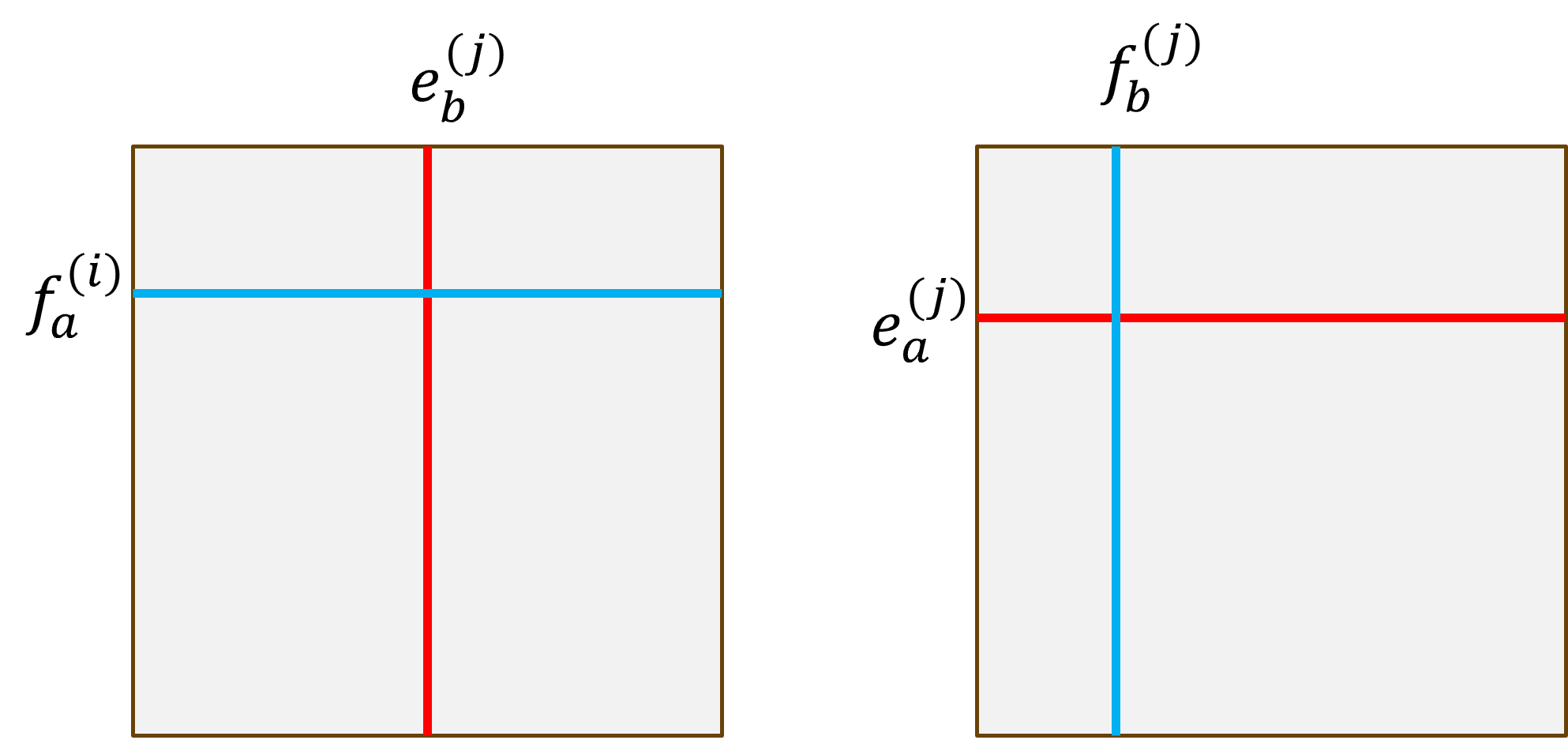}
    \caption{An illustration of $X$ and $Z$ canonical logical representatives supported on either sectors of a 2-dimensional hypergraph product code. An $X$ canonical logical representative, indicated by a red line, is supported on a single column (row) given by the puncture $e_b^{(j)}$ $(e_a^{(i)})$ on the first (second) sector. A $Z$ canonical logical representative, indicated by a blue line, is supported on a single row (column) given by the puncture $f_a^{(i)}$ $ (f_b^{(j)})$ on the first (second) sector.}
    \label{fig:canonical-log-rep}
\end{figure}

From our choice of linearly independent vectors and pivots, we have:
\begin{align}
	\langle f_a^{(i)},f_b^{(j)}\rangle =\delta_{ij}\delta_{ab} = \langle e_a^{(i)},e_b^{(j)}\rangle.
\end{align}
Defining $\xi_b^{(j)},\zeta_b^{(j)}$ by the standard form further indicates that
\begin{align}
	\langle f_a^{(i)},\zeta_b^{(j)}\rangle =\delta_{ij}\delta_{ab} =\langle e_a^{(i)},\xi_b^{(j)}\rangle.
\end{align}
As a result, each $X$ logical representative anti-commutes with only one $Z$ logical representative and vice versa:
\begin{align}\label{Eq: basis 1}
	\text{First sector: } \langle \zeta^{(i)}_a \otimes e^{(j)}_b, f^{(u)}_a \otimes \xi^{(v)}_b \rangle = \delta_{iu}\delta_{jv}, \\
	\text{Second sector: }\langle e^{(i)}_a \otimes \zeta^{(j)}_b ,\xi^{(u)}_a\otimes f^{(v)}_b \rangle = \delta_{iu}\delta_{jv}. \label{Eq: basis 2}
\end{align}
This shows that the canonical logical representatives form a complete basis for the (co)homology classes.

\subsection{General dimensions}\label{sub: generalHGP}

Hypergraph product codes can be defined in high dimensions by taking the tensor products of multiple classical parity check matrices. In the view of algebraic topology, this product is the homological (tensor) product of a set of $1$-chain complexes. 

\begin{definition}
    (Homological product codes) Given a set of $1$-chain complexes over $\mathbb{F}_2$ by $\mathcal{A} = \{ \mathcal{A}^{(1)}, \cdots, \mathcal{A}^{(t)} \}$ where $\mathcal{A}^{(i)}$ is associated with $A^{(i)}: \mathbb{F}^{n_i}_2 \rightarrow \mathbb{F}^{m_i}_2$. The homological product of $\mathcal{A}$ is a $t$-dimensional chain complex over $\mathbb{F}_2$,  denoted $\mathcal{F}_{\bullet}[\mathcal{A}]$,
\begin{align}
    \mathcal{F}_{\bullet}[\mathcal{A}]: \quad \mathcal{F}_t[\mathcal{A}] \xrightarrow{\partial_t} \mathcal{F}_{t-1}[\mathcal{A}]  \xrightarrow{\partial_{t-1}} \cdots \mathcal{F}_1[\mathcal{A}] \xrightarrow{\partial_0} \mathcal{F}_0[\mathcal{A}], 
\end{align}
where the $l$-chain is given by
\begin{align}
    \mathcal{F}_l[\mathcal{A}] = \bigoplus_{\substack{I \subset [t] \\
    |I| = l }}\left( \bigotimes_{i \in I} \mathcal{A}^{(i)}_1 \right) \otimes \left( \bigotimes_{j \notin I} \mathcal{A}^{(j)}_0 \right).
\end{align}
The chain map is given by the graded Leibniz rule. 
Let $x = x_1 \otimes x_2 \otimes \cdots \otimes x_t$ belongs to the sector labeled by $I$ in $\mathcal{F}_l[\mathcal{A}]$. Then
\begin{align}
    \partial_l (x_1 \otimes x_2 \otimes \cdots \otimes x_t)= \sum_{s \in I} x_1 \otimes \cdots \otimes A^{(s)}(x_{s}) \otimes \cdots \otimes x_t, 
\end{align}
where it is defined similarly on other sectors and extends by linearity . 
\end{definition}

A particular feature of the homological product is that the code parameters can be completely characterized by the $1$-chain complexes $\mathcal{A}$, as a consequence of the Künneth formula. 

\begin{lemma}[Künneth formula] Let $\mathcal{A}_{\bullet}$ and $\mathcal{B}_{\bullet}$ be chain complexes over $\mathbb{F}_2$. Let $\mathcal{C}_{\bullet} = \mathcal{A}_{\bullet} \otimes \mathcal{B}_{\bullet}$. Then for $i \in \mathbb{Z}$, 
    \begin{align}
        H_i(\mathcal{C}) \cong \bigoplus_{j \in \mathbb{Z}} H_{j}(\mathcal{A}) \otimes H_{i-j}(\mathcal{B}).
    \end{align}
\end{lemma}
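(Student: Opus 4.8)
The plan is to exploit the fact that $\mathbb{F}_2$ is a field, so that every $\mathbb{F}_2$-vector space is free and every short exact sequence of $\mathbb{F}_2$-vector spaces splits; this is precisely what makes the $\operatorname{Tor}$ correction terms in the general Künneth theorem vanish and collapses the usual Künneth short exact sequence to a plain direct sum. Concretely, I would first show that any chain complex $\mathcal{A}_\bullet$ of $\mathbb{F}_2$-vector spaces is isomorphic, \emph{as a chain complex}, to the direct sum of its homology $H_\bullet(\mathcal{A})$ (regarded as a complex with zero differentials) and a null-homotopic complex. To do this, write $Z_n = \ker\partial_n$ and $B_n = \operatorname{im}\partial_{n+1}$; splitting the short exact sequences $0 \to Z_n \to \mathcal{A}_n \to B_{n-1} \to 0$ and $0 \to B_n \to Z_n \to H_n(\mathcal{A}) \to 0$ yields an identification $\mathcal{A}_n \cong H_n(\mathcal{A}) \oplus B_n \oplus B_{n-1}$ under which $\partial_n$ annihilates the first two summands and maps the last one identically onto the copy of $B_{n-1}$ inside $\mathcal{A}_{n-1}$. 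Hence $\mathcal{A}_\bullet$ decomposes as $H_\bullet(\mathcal{A})$ plus a direct sum of elementary two-term complexes $B_n \xrightarrow{\operatorname{id}} B_n$, each of which is contractible. The same holds for $\mathcal{B}_\bullet$.

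Next, I would tensor the two decompositions together and use bilinearity of $\otimes$ over $\mathbb{F}_2$ to expand $\mathcal{A}_\bullet \otimes \mathcal{B}_\bullet$ into the summand $H_\bullet(\mathcal{A}) \otimes H_\bullet(\mathcal{B})$, which has zero differential in every degree (by the Leibniz rule, since both factors do), together with three families of ``mixed'' summands in which at least one tensor factor is an elementary contractible complex. The key lemma is that the tensor product of a contractible (null-homotopic) complex of $\mathbb{F}_2$-vector spaces with an arbitrary complex is again contractible: a null-homotopy $h$ on the first factor gives the null-homotopy $h \otimes \operatorname{id}$ on the tensor product, since $\otimes$ is a functor that preserves the chain-homotopy relation. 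Therefore all three mixed families are acyclic and contribute nothing to homology, so
\[
H_i(\mathcal{A}_\bullet \otimes \mathcal{B}_\bullet) \;\cong\; H_i\big(H_\bullet(\mathcal{A}) \otimes H_\bullet(\mathcal{B})\big) \;=\; \bigoplus_{j \in \mathbb{Z}} H_j(\mathcal{A}) \otimes H_{i-j}(\mathcal{B}),
\]
the last equality holding because the right-hand complex has vanishing differential.

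Finally, for a self-contained statement I would record that the isomorphism is the natural external product $\mu\colon \bigoplus_j H_j(\mathcal{A}) \otimes H_{i-j}(\mathcal{B}) \to H_i(\mathcal{A}_\bullet \otimes \mathcal{B}_\bullet)$ sending $[a] \otimes [b] \mapsto [a \otimes b]$, which is well defined by the Leibniz identity $\partial(a\otimes b) = \partial a \otimes b + a \otimes \partial b$ (cycles map to cycles, and a boundary in either factor maps to a boundary), and that the decomposition argument above identifies it as an isomorphism. The step I expect to be the crux is the vanishing of the mixed terms, i.e. that acyclicity is preserved under tensoring — this is exactly where the field hypothesis (flatness of every $\mathbb{F}_2$-module, hence $\operatorname{Tor}^{\mathbb{F}_2}_1 = 0$) enters and what distinguishes the clean direct-sum form from the general Künneth statement. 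Since in all our applications the seed complexes $\mathcal{A}^{(i)}_\bullet$, and hence $\mathcal{A}_\bullet, \mathcal{B}_\bullet$, are finite-dimensional and bounded, no convergence or boundedness subtleties arise.
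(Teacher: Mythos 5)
The paper states this Künneth lemma as a background fact and does not include a proof, so there is no in-paper argument to compare against. Your proof is correct and is the standard field-coefficient argument: split each complex as its homology (with zero differential) plus a direct sum of elementary contractible two-term complexes $B_n \xrightarrow{\mathrm{id}} B_n$ (possible since every short exact sequence of $\mathbb{F}_2$-vector spaces splits), tensor the decompositions, and observe that tensoring with a contractible complex stays contractible via $h \otimes \mathrm{id}$ (which works cleanly over $\mathbb{F}_2$ since the sign-free Leibniz rule gives $\partial(h\otimes\mathrm{id}) + (h\otimes\mathrm{id})\partial = (\partial h + h\partial)\otimes\mathrm{id}$), leaving only $H_\bullet(\mathcal{A}) \otimes H_\bullet(\mathcal{B})$ with vanishing differential. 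This is complete and correctly identifies the vanishing of $\operatorname{Tor}$ over a field as the reason the general Künneth exact sequence collapses to a direct sum.
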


Utilizing the Künneth formula, we can determine the code parameters for the homological product codes consisting of $t$ classical parity checks \cite{Tillich_2014, BH:homological,Zeng_2019,golowich2024quantumldpccodestransversal}.  

\begin{theorem}[Higher homological product code] 
	Given a set of $t$ $1$-chain complexes denoted as $\mathcal{A}$, any three consecutive terms from the homological product $t$-dimensional chain complex $\mathcal{F}_{\bullet}[\mathcal{A}]$ define a quantum CSS code for $1 \leq l \leq t-1$, 
    \begin{align}\label{eq:HGP}
        \mathcal{F}_{l+1}[\mathcal{A}] \xrightarrow{\partial_{l+1}} \mathcal{F}_{l}[\mathcal{A}] \xrightarrow{\partial_l} \mathcal{F}_{l-1}[\mathcal{A}].
    \end{align}
    This code has code parameters $[\![n, k, d]\!]$, where 
    \begin{align}
        \begin{aligned}
            & n = \sum_{[I]} \prod_{i \in I} n_i\prod_{j\in [t]\backslash I} m_j,  \\
	& k = \sum_{I}\prod_{i \in I} k_i \prod_{j \in [t]\backslash I} k_j^T,  \\
	& d_X = \min \left\{\prod_{j \in J}d_j^T : J\subset [t], |J|=r \right\}, \\
	& d_Z=\min \left\{\prod_{i \in I}d_i : I\subset [t], |I|=l\right\},
        \end{aligned}
    \end{align}    
    where $I\subset [t]$ is any subset consisting of $l$ many distinct indices and $J$ is any subset consisting of $r$ distinct indices. 
\end{theorem}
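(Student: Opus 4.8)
The plan is to establish $n$, $k$, and $d$ separately; the first two are essentially bookkeeping on top of the K\"unneth formula recorded above (and the CSS condition $H_xH_z^T=0$ is immediate from $\partial_l\partial_{l+1}=0$), so the distance formula is the real content. For $n$, one reads the dimension of the chain group straight off its definition: $\mathcal{F}_l[\mathcal{A}]$ is a direct sum over subsets $[I]\subset[t]$ with $|[I]|=l$, the summand indexed by $[I]$ being $\bigl(\bigotimes_{i\in[I]}\mathcal{A}^{(i)}_1\bigr)\otimes\bigl(\bigotimes_{j\notin[I]}\mathcal{A}^{(j)}_0\bigr)$ of dimension $\prod_{i\in[I]}n_i\prod_{j\notin[I]}m_j$, which sums to the stated $n$. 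For $k$, I would use the standard identification of the number of logical qubits of the CSS code on $\mathcal{F}_{l+1}\xrightarrow{\partial_{l+1}}\mathcal{F}_l\xrightarrow{\partial_l}\mathcal{F}_{l-1}$ with $\dim H_l(\mathcal{F}_\bullet[\mathcal{A}])$, then iterate the K\"unneth formula across the $t$ seeds, using only the elementary fact that a $1$-complex $\mathcal{A}^{(i)}=(\mathbb{F}_2^{n_i}\xrightarrow{A^{(i)}}\mathbb{F}_2^{m_i})$ has $H_1(\mathcal{A}^{(i)})=\ker A^{(i)}$ of dimension $k_i$ and $H_0(\mathcal{A}^{(i)})=\coker A^{(i)}$ of dimension $m_i-\rank A^{(i)}=k_i^T$; the degree-$l$ part of the iterated decomposition is then $\bigoplus_{|[I]|=l}\bigl(\bigotimes_{i\in[I]}\ker A^{(i)}\bigr)\otimes\bigl(\bigotimes_{j\notin[I]}\coker A^{(j)}\bigr)$, of dimension $\sum_{[I]}\prod_{i\in[I]}k_i\prod_{j\notin[I]}k_j^T$.

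For the distance I would obtain the upper bound by writing down explicit logical representatives as pure tensors, generalizing the canonical representatives of Section~\ref{sub: 2DHGP}. For $d_Z$: since the $Z$-logicals are the classes of $H_l(\mathcal{F}_\bullet[\mathcal{A}])$, fix a size-$l$ subset $[J]$ attaining the claimed minimum, place a minimum-weight element of $\ker A^{(j)}$ (weight $d_j^T$) in each factor $j\in[J]$, and a single basis vector representing a nonzero class of $\coker A^{(i)}$ in each factor $i\notin[J]$; the tensor product is automatically a cycle (all boundary terms vanish because the chosen factors are cycles), has nonzero class by the K\"unneth decomposition, and has weight exactly $\prod_{j\in[J]}d_j^T$. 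The bound on $d_X$ is symmetric via the dual cochain complex, identifying $X$-logicals with $H^l$ and placing minimum-weight elements of $\ker(A^{(i)})^T$ (weight $d_i$) in the $t-l$ degree-$0$ factors. Minimizing over the subset yields $d_Z\le\min_{|[J]|=l}\prod_{j\in[J]}d_j^T$ and $d_X\le\min_{|[I]|=t-l}\prod_{i\in[I]}d_i$.

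The matching lower bound is where the real work lies and is the step I expect to be the main obstacle. Take a minimum-weight nontrivial $Z$-logical $z\in\mathcal{F}_l$ and split it into its homogeneous components $z=\sum_{|[I]|=l}z_{[I]}$ on the disjoint qubit blocks indexed by the subsets $[I]$, so that $|z|=\sum_{[I]}|z_{[I]}|$; it then suffices to show that some block satisfies $|z_{[I]}|\ge d_Z$. One route is homological: argue that after adding a suitable boundary one may take $z$ in a ``reduced'' form in which each nonzero block $z_{[I]}$ is itself a cycle whose image in $\bigotimes_{i\in[I]}H_1(\mathcal{A}^{(i)})\otimes\bigotimes_{j\notin[I]}H_0(\mathcal{A}^{(j)})$ is nonzero, and then invoke the classical fact that a nonzero element of a tensor product of linear codes has weight at least the product of the factor distances, applied factor by factor (using also that a nonzero element of $\coker A^{(j)}$ has weight $\ge 1$). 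The delicate point is precisely this reduction: the boundary maps $\partial_{l+1},\partial_l$ couple different blocks, so decoupling them and controlling how a boundary can redistribute weight among the $z_{[I]}$ genuinely needs the full iterated K\"unneth isomorphism rather than a single application---this, and not the combinatorics over subsets and their complements, is what requires care. A second route I would pursue in parallel is a correctable-region argument in the spirit of the $2$-dimensional Tillich--Z\'emor distance proof and of the hyperplane/hypertube constructions of Section~\ref{sub: puncture}: exhibit, for each coordinate direction, a family of ``hyperplane'' qubit subsets such that any set of qubits missing an entire direction-pattern of hyperplanes is cleanable by the Cleaning and Union Lemmas, forcing every nontrivial logical operator to intersect enough hyperplanes that its weight is bounded below by the corresponding product of seed distances. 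Either way, once the weight-redistribution issue is settled the remaining steps are routine, and $d=\min\{d_X,d_Z\}$ follows.
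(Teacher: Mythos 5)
The paper does not actually prove this theorem; it states it and cites \cite{golowich2024quantumldpccodestransversal, BH:homological}, preceded only by the remark that the parameters ``can be completely characterized \ldots as a consequence of the K\"unneth formula.'' So there is no internal proof to compare against, and the right question is whether your proposal would close the gap the paper leaves to references.

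Your treatment of $n$ and $k$ is correct and routine: $n$ is read off the definition of $\mathcal{F}_l[\mathcal{A}]$, and $k = \dim H_l$ follows from iterating the K\"unneth formula with $H_1(\mathcal{A}^{(i)}) = \ker A^{(i)}$, $H_0(\mathcal{A}^{(i)}) = \coker A^{(i)}$. Your upper bound on $d_Z$ (and dually $d_X$) via pure-tensor representatives is also correct: choosing a minimum-weight element of $\ker A^{(j)}$ for $j\in[J]$ and a weight-$1$ representative of a nontrivial class in $\coker A^{(i)}$ for $i\notin[J]$ gives a cycle with nontrivial K\"unneth class of weight $\prod_{j\in[J]} d_j^T$.

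The genuine gap is the distance lower bound, which you correctly flag as the real content but do not resolve, and your proposed route (a) would not work as stated. You cannot in general ``add a suitable boundary'' so that every nonzero homogeneous block $z_{[I]}$ of a $Z$-logical is itself a cycle with nontrivial class: the component $\partial_l z_{[I]}$ lands in several blocks of $\mathcal{F}_{l-1}$ that are shared with other $z_{[I']}$, and the cycle condition $\partial_l z = 0$ holds precisely because of cancellations \emph{between} blocks, not within each block. Already in the $t=2$ case of Tillich--Z\'emor a minimum-weight representative need not be localizable to a single sector this way, and the published proofs instead use column/row restriction or expansion-type arguments, not a decomposition into per-block cycles. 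The classical tensor-code weight bound you want to invoke applies only once you have localized a nontrivial codeword to a single tensor-product factor, which is exactly what is not free. Your route (b) (a cleaning/union-lemma hyperplane argument as in Section~\ref{sub: puncture}) is the more promising direction and is closer in spirit to the Tillich--Z\'emor and Zeng--Pryadko proofs, but it is only gestured at. As it stands, the lower bound on $d$ is unproven, and since the upper bound alone does not give the claimed equality, the theorem is not established by this proposal.
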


\begin{remark}
    The $Z$ ($X$) distances are determined by examining the Künneth formula on the (co)chain complexes with respect to (co)homologies. For finite dimensional case, the coboundary operators are simply given by the transpose of the boundary operators. 
\end{remark}

We can further observe the close relation of the homological product codes with the underlying classical components by examining a set of basis elements for the logical subspace. 

The logical $Z$ operators' logical basis operators can be written explicitly using the information set associated with $A^{(i)}$ and $A^{(i)T}$:
    \begin{align}\label{eq: z-logical-basis-tdim}
        e^{(a_1)}_{A^{(1)}} \otimes \cdots \otimes e^{(a_r)}_{A^{(r)}} \otimes  \zeta^{(a_{r+1})}_{A^{(r+1)}} \otimes \cdots \otimes \zeta^{(a_t)}_{A^{(t)}}; \quad \cdots \quad ;
        \zeta^{(a_1)}_{A^{(1)}} \otimes \cdots \otimes  \zeta^{(a_l)}_{A^{(l)}} \otimes e^{(a_{l+1})}_{A^{(l+1)}} \cdots \otimes e^{(a_t)}_{A^{(t)}},
    \end{align}
    where $e^{(a_i)}_{A^{(i)}}$ is the unit vector whose nonzero entry $a_i$ associates with an index of an information set $I^T_i \subset [m]$ to $\ker A^{(i)T}$ and $\zeta^{(a_j)}_{A^{(j)}}$ is a logical basis element from $\ker A^{(j)}$. 
    
    The $X$ logical basis operators can be written explicitly as 
    \begin{align} \label{eq: x-logical-basis-tdim}
        \xi^{(a_1)}_{\mathcal{A}^{(1)}} \otimes \cdots \otimes  \xi^{(a_r)}_{\mathcal{A}^{(r)}} \otimes f^{(a_{r+1})}_{\mathcal{A}^{(r+1)}} \otimes \cdots \otimes f^{(a_t)}_{\mathcal{A}^{(t)}}; \quad \cdots \quad ; 
        f^{(a_1)}_{\mathcal{A}^{(1)}} \otimes \cdots \otimes  f^{(a_l)}_{\mathcal{A}^{(l)}} \otimes \xi^{(a_{l+1})}_{\mathcal{A}^{(l+1)}} \otimes  \cdots \otimes \xi^{(a_t)}_{\mathcal{A}^{(t)}},
    \end{align}
    where $f^{(a_i)}_{A^{(i)}}$ is the unit vector whose nonzero entry $a_i$ associates with an index of an information set $I_i \subset [n_i]$ to $\ker A^{(i)}$ and $\xi^{(a_j)}_{A^{(j)}}$ is a logical basis element from $\ker A^{(j)T}$. In this regard, in what follows, we use the notation that $e^{(a_i)}_{A^{(i)}}$ ($f^{(a_i)}_{A^{(i)}}$) as \emph{puncture index} or \emph{information bit} interchangeably.

\subsection{Position of a hyperplane}\label{sub: position}

Pictorially, each canonical logical representative can be supported on a hyperplane of a certain dimension in one of the sectors. Furthermore, multiplying by the same stabilizer across all the qubits with non-trivial support for a canonical logical representative will map this hyperplane to possibly several hyperplanes supported on different sectors, while preserving the dimension of each resulting hyperplane. Therefore, a way of labeling a collection of hyperplanes and specifying their coordinates and dimensions will be useful for keeping track of the support of logical representatives to prove logical gate restrictions for hypergraph product codes. 
	
Let $\{\mathcal{A}^{(i)}\}$ be the set of chain complexes that define the homological chain complex $\mathcal{F}_{\bullet}[\mathcal{A}]$ and define $ \mathcal{F}_{l+1}[\mathcal{A}] \xrightarrow{\partial_{l+1}} \mathcal{F}_{l}[\mathcal{A}] \xrightarrow{\partial_l} \mathcal{F}_{l-1}[\mathcal{A}]$ as a $t$-dimensional hypergraph product code. Recall from Section~\ref{sub: generalHGP}, the $l$-chain $\mathcal{F}_l(\mathcal{A})$ consists of $\binom{t}{l}$ sectors, corresponding to the combinatorial choices of having degree $l$. We identify each of the sector by $S_\mu$ with $\mu = 1, \dots, \binom{t}{l}$. For instance, one of them is of the following form:
\begin{align}
	 \mathcal{A}^{(1)}_{1} \times \cdots \times\mathcal{A}^{(l)}_{1} \times \mathcal{A}^{(l+1)}_{0} \times \cdots \times \mathcal{A}^{(t)}_{0} \cong \mathbb{F}^{n_1}_2 \times \cdots \times \mathbb{F}^{n_l}_2 \times \mathbb{F}^{m_{l+1}}_2 \cdots \times \mathbb{F}^{m_t}_2.
\end{align}
In the following context, the bold letter $\bvec{x}$ would be used to denote general vectors in $\mathcal{F}_l(\mathcal{A})$ including all possible sectors. The restriction of $\bvec{x}$ to one sector $S_\mu$ is denoted by $\mathrm{res}_\mu(\bvec{x}) \in S_\mu$. For vectors defined on a single sector, we simply write $x \in S_\mu$.


\begin{definition}[Coordinates on each sector]\label{def: coordinates on each sector}
Each sector \( S_\mu \subset \mathcal{F}_l(\mathcal{A}) \) corresponds to a unique choice of \( l \) indices out of \( t \), indicating which components appear with degree one. For each such sector \( S_\mu \), we can embed it into a $t$-dim lattice as follows:
\begin{align}
	S_\mu \cong \mathbb{F}_2^{k_\mu} \text{ with } 
	k_\mu = \prod_{i \in I} n_{i} \cdot \prod_{j\in J} m_{j},
\end{align}
where the index sets \( I \), $|I|=l$, and \(J\), $|J|=r$, form a partition of \( \{1, \dots, t\} \), corresponding respectively to the components in degree one and degree zero of the classical codes. Each sector is endowed with a coordinate system with axis labeled by $(x_1,x_2,\cdots,x_t)$, where \( x_{i} \in [n_{i}] \) if $i \in I$ and $x_{j}  \in [m_{j}]$ if $j \in J$, i.e. the ordering of the coordinates follow the same ordering as the indices in $\mathcal{A}^{(i)}$. For simplicity, we would write $t - l = r$ as used at the end of Section~\ref{sub: generalHGP}. 
\end{definition} 


\begin{definition}[Hyperplane]\label{def: hyperplane}
    An $l$-dimensional hyperplane $\xi_i$ is an $l$-dim affine subspace of a $t$-dimensional sector that can be specified by its \textit{orientation} and an $\textit{associated vector}$ (defined below) that together gives the position of the hyperplane on that sector. 
\end{definition}

\begin{definition}[Orientation and associated vector]\label{def: orientation and associated vector}
    We define the orientation of an $l$-dim hyperplane $\xi_i$ in sector $S_\mu$ to be $\mathrm{dir}_\mu (\xi) = b_\mu\subseteq [t]$ where $b_\mu$ is a subset of size $t-l$ with fixed coordinate directions. The complement $[t]\backslash b_\mu$ corresponds to coordinate directions along which $\xi_i$ varies. An associated vector  
    is a $t-l$-dimensional vector that gives the values of the fixed coordinates on the lattice. 
\end{definition}

When there is no ambiguity, the notation $\xi$ would refer to a hyperplane or a vector that is supported on that hyperplane. Following definition~\ref{def: coordinates on each sector}, any $X$ canonical logical representative sector $S_\mu$ is supported on a $r$-dimensional hyperplane with orientation given by $\mathrm{dir}_\mu(\xi_{L_X})= \{j_1, \dots, j_l\}$, and any $Z$ canonical logical representative is supported on a $l$-dimensional hyperplane with orientation $\mathrm{dir}_\mu(\xi_{L_Z})= \{i_1, \dots, i_r\}$. Clearly, $\mathrm{dir}_\mu(\xi_{L_X})$ equals the complement $\overline{\mathrm{dir}_\mu(\xi_{L_Z})}$ of $L_Z$. 

Definition \ref{def:orientationPreserve} from Section~\ref{subsection: artificial scenario} introduces a notion of orientation preserving map, which can be explained as follows: let $V$ be an operator supported on a hyperplane $\xi_V$ on sector $S_{\mu'}$, such that $\mathrm{dir}_{\mu'}(\xi_{L_X})\subseteq\mathrm{dir}(\xi_V)$, where $\mathrm{dir}_{\mu'}(\xi_{L_X})$ is the direction of a canonical $X$ logical representative on $S_{\mu'}$. Then the direction of hyperplanes supporting $\mathrm{res}_{\mu}(UVU^\dagger)$ on any sector $S_\mu$, denoted $\mathrm{dir}_\mu(UVU^\dagger)$, satisfies $\mathrm{dir}_\mu(\xi_{L_X})\subseteq \mathrm{dir}_\mu(UVU^\dagger)$.

\begin{example}
	Given a 2-dimensional homological product code, the first sector has a coordinate system $(x_1,x_2), x_1 \in [n_1],x_2\in [m_2]$. A canonical $X$ logical representative can be supported on a $1$-dim hyperplane $\xi_1$ with orientation given by $\mathrm{dir}_1(\xi_1)=\{2\}$. 
\end{example}

The positions of hyperplanes are useful in finding the intersections of two or more hyperplanes and their resultant dimensions:

\begin{example}
	Given a 3-dimensional homological product code, consider three hyperplanes  specified by $\xi_1$ with $\mathrm{dir}_1{(\xi_1)}= \{2,3\}$ and associated vector ${(2,1)}$ and $\xi_2$ with $\mathrm{dir}_1{(\xi_2)}= \{2\}$ with associated vector $(1)$. The hyperplane from the intersection of the first two planes is given by $u$ with $\mathrm{dir}_1{(u)}= \{2,3\}$ and associated vector $(2,1)$, so this is a 1-dimensional hyperplane.   
\end{example}

Given any vector $v \in S_\mu \cong \mathbb{F}^{n_1}_2 \times \cdots \times \mathbb{F}^{n_l}_2 \times \mathbb{F}^{m_{l+1}}_2 \cdots \times \mathbb{F}^{m_t}_2$, we also wish to know how many hyperplanes that can cover it. 

\begin{definition}
	Given $v \in S_\mu$, recall that the Hamming weight $\vert v \vert$ of $v$ is simply the number of nonzero components of $x$. Its \emph{block Hamming weight} $\vert v \vert_i$ is defined as the number of all nonzero vectors $v(j_1,...,j_{i-1},\text{ - },j_{i+1},...,j_t)$. Geometrically, this is just the total number of lines or 1D hyperplanes with orientation $[t] \setminus \{i\}$ that supports a nontrivial part of $v$. For general hyperplanes with direction $b_\mu \subset [t]$, we denote by $\vert x \vert_{b_\mu}$ the total number of hyperplanes that support the nontrivial part of $x$ with respect to $b_\mu$.
\end{definition}  

To analyze a set of hyperplanes with the same orientation, we use the term hypertube. Informally, one can understand an $r$-dim hypertube as consisting of a thin set of $r$-dim hyperplanes with the same orientation, i.e., with block Hamming weight $> 1$ in some orientation. This notion will also be important for the results in Section~\ref{subsection: constant-depth-general-case}. As a caveat, for any vector $v$, its block Hamming weights with respect to different dimensional hyperplanes and different orientations can be different. For example, the ordinary Hamming weight $\vert x \vert$ record the number of $0$-dimensional hyperplanes that can cover $x$. However, just one $t$-dimensional hyperplane -- the total subspace $S_\mu$ --  cover $x$. It will become clear on how to choose hyperplanes in Section~\ref{sec:constant_depth}.

\subsection{Finding an alternative set of logical representatives} \label{sub: alt}

In Section~\ref{sec: generalized BK}, Theorem~\ref{theorem: generalize bk} requires finding different sets of logical representatives, and roughly speaking, repeatedly taking intersections of the support of various sets of logical representatives. Our goal is to obtain a correctable set at $K_i$ for the smallest possible value of $i$. This is generally extremely difficult. For instance, for the recent explicit constructions of good qLDPC codes \cite{pk22,QuantumTanner2022,DHLV}, logical representatives satisfy well-behaved local checks but it is difficult to understand what the logicals look like globally, and thus it may be challenging to find two sets of representatives whose overlap of supports is correctable. Fortunately, in the setting of hypergraph product codes, we are able to construct nice logical representatives e.g.~using the canonical forms. By adopting  techniques from classical coding theory and applying Bravyi--König-type arguments, we can make use of the different logical representatives to bound the level of Clifford hierarchy for hypergraph product codes. In essence, for hypergraph product codes, it turns out that we can construct different logical representatives with small overlapping support so that we can eventually derive a trivial logical operator supported on a correctable region which then tells us the limitations on the logical gates. 


\begin{lemma}\label{lemma: classical-cleaning}
Let $\mathcal{C}$ be a $[n,k,d]$ classical code with parity check matrix $H$, then for any $\gamma \in [n]$ such that $\gamma \leq (d-1)/{2}$, there exist $h$ in the row span, $\Ima H^T$, such that its restriction $h|_\gamma$ to $\gamma$ is the all-ones vector. 
\end{lemma}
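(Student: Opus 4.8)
The plan is to reformulate the statement in terms of correctability/cleanability for a classical code and then invoke a counting/dual-distance argument. First I would note the hypothesis: $|\gamma| \le (d-1)/2$, where $d$ is the minimum distance of $\mathcal{C} = \ker H^T$. The conclusion asks for a row-span vector $h \in \mathrm{rs}\,H$ with $h|_\gamma = \mathbf{1}_\gamma$ (the all-ones vector on $\gamma$). The natural route is to show that the linear map $\mathrm{rs}\,H \to \mathbb{F}_2^\gamma$ given by restriction $h \mapsto h|_\gamma$ is surjective; surjectivity immediately yields a preimage of the all-ones vector. Surjectivity of the restriction map fails precisely when there is a nonzero vector $v$ supported on $\gamma$ that is orthogonal to every row of $H$, i.e. $v \in (\mathrm{rs}\,H)^\perp = \mathcal{C}$ with $\mathrm{supp}(v) \subseteq \gamma$.

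The key step is then: such a $v$ would be a nonzero codeword of $\mathcal{C}$ of weight $|v| \le |\gamma| \le (d-1)/2 < d$, contradicting that $d$ is the minimum distance of $\mathcal{C}$. Hence no such $v$ exists, the restriction map $\mathrm{rs}\,H \to \mathbb{F}_2^{\gamma}$ is surjective, and in particular the all-ones vector on $\gamma$ is attained by some $h \in \mathrm{rs}\,H$. (Note we only actually need $|\gamma| < d$ here; the stronger bound $(d-1)/2$ is presumably there for uniformity with the correction-radius setting elsewhere, or to leave room for a later pasting argument — I would simply carry it through, since $(d-1)/2 < d$.)

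I would phrase the surjectivity argument cleanly via dimensions or via the annihilator: $\{h|_\gamma : h \in \mathrm{rs}\,H\}^{\perp}$ inside $\mathbb{F}_2^\gamma$ equals $\{v \in \mathbb{F}_2^\gamma : Hv = 0 \text{ when extended by zero}\}$, which is $\{v \in \mathcal{C} : \mathrm{supp}(v)\subseteq \gamma\}$; the distance bound forces this to be $\{0\}$, so the image is all of $\mathbb{F}_2^\gamma$. This is essentially the classical analogue of the Cleaning Lemma (Lemma~\ref{lemma: correctability-stabilizer-code}): a small set $\gamma$ supports no nonzero codeword, so it is "cleanable", and cleanability is exactly the statement that the checks restricted to $\gamma$ span everything.

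The main obstacle, such as it is, is bookkeeping rather than substance: being careful about the identification between "vectors supported on $\gamma$" as elements of $\mathbb{F}_2^\gamma$ versus $\mathbb{F}_2^n$, and confirming that $(\mathrm{rs}\,H)^\perp = \mathcal{C}$ with the paper's convention that $\mathcal{C} = \{c : H^T c = 0\}$ (so rows of $H$ are the parity checks, consistent with the Definition of classical code given earlier). Once those conventions are pinned down, the proof is a two-line distance argument; there is no genuine difficulty, and I would not expect to need anything beyond the minimum-distance property of $\mathcal{C}$.
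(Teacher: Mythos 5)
Your proposal is correct and is essentially the same argument as the paper's: the paper phrases it as ``errors on $\gamma$ have distinct syndromes, so $H$ restricted to the columns in $\gamma$ has full column rank, and row-reduction then produces checks that are unit vectors on $\gamma$,'' while you phrase it as ``the restriction map $\mathrm{rs}\,H \to \mathbb{F}_2^\gamma$ is surjective because its annihilator consists of codewords supported on $\gamma$, which must be zero since $|\gamma|<d$.'' These are the same rank/duality fact; your observation that only $|\gamma|<d$ is actually used (and $(d-1)/2$ is just a convenient sufficient bound) is also accurate.
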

\begin{proof}
	Since $|\gamma|\leq \frac{d-1}{2}$, errors on each bit in $\gamma$ have a unique syndrome. Hence there exist $|\gamma|$ parity checks such that the syndromes restricted to these checks are linearly independent. In other words, we can construct a full-rank parity check matrix when restricted to the bits in $\gamma$ and restricted to these linearly independent checks. Applying elementary row operations, we can obtain a new set of parity check generators such that each parity check detects exactly 1 error on the set $\gamma$, when restricted to $\gamma$.
\end{proof}

The following result will be critical in showing our no-go results for hypergraph product codes in the most general setting.

\begin{lemma}[Information set]\label{lemma: information set}
    Let $\mathcal{C}$ be a classical code with distance $d$. For any index set $T \subseteq [n]$ with $\vert T \vert \geq n - d + 1$, there exists an \emph{information set} $I \subseteq T$ such that the followings hold: 
	\begin{enumerate}
        \item $\vert I \vert = k = \dim \mathcal{C}$. 
        
		\item For any codeword $c \in \mathcal{C}$, if $c \vert_I = \mathbf{0}$, then $c$ is the trivial codeword (i.e. the all-zeros vector on $n$).

		\item The restriction of the parity check matrix $H$ to the complement of $I$ denoted $H|_{\bar{I}}$ is full-rank. Hence, the information set serves as a $k$-puncture set to the parity check matrix $H$. 
	\end{enumerate}
\end{lemma}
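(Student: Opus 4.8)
The plan is to spend the distance hypothesis on one injectivity statement and let elementary linear algebra handle the rest. Write $\bar T := [n]\setminus T$, so that $|\bar T| = n - |T| \le d-1$ by assumption. For a subset $S \subseteq [n]$ let $\rho_S\colon \mathcal{C}\to\mathbb{F}_2^S$ denote the restriction map $c\mapsto c|_S$. The key first step is that $\rho_T$ is injective: if $c\in\mathcal{C}$ with $c|_T = \mathbf 0$, then $\supp(c)\subseteq\bar T$, so $|c|\le |\bar T| \le d-1$, which forces $c=\mathbf 0$ since every nonzero codeword of $\mathcal{C}$ has weight at least $d$. Consequently $\dim\rho_T(\mathcal{C}) = \dim\mathcal{C} = k$.

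Next I would extract $I$. Fix a generator matrix $G$ of $\mathcal{C}$ and let $G_T$ be the $k\times|T|$ submatrix of $G$ formed by the columns indexed by $T$; its row span is exactly $\rho_T(\mathcal{C})$, which has dimension $k$, so $\rank G_T = k$ and $G_T$ possesses $k$ linearly independent columns. Let $I\subseteq T$ index such a choice of $k$ columns. Then the $k\times k$ submatrix $G_I$ of $G$ is invertible, so $I$ is an information set by definition; this is the first assertion $|I| = k$. For the second assertion, write $c = uG$ with $u\in\mathbb{F}_2^k$, so that $c|_I = uG_I$; if $c|_I=\mathbf 0$ then $u=\mathbf 0$ because $G_I$ is invertible, hence $c=\mathbf 0$. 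Equivalently, $\rho_I$ is injective.

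For the third assertion, note $\bar I:=[n]\setminus I$ has size $n-k$, so $H|_{\bar I}$ is a square $(n-k)\times(n-k)$ matrix and it suffices to show it has trivial kernel. If $v\in\mathbb{F}_2^{\bar I}$ lies in that kernel, its zero-extension $\tilde v\in\mathbb{F}_2^n$ (zero on $I$) satisfies $H\tilde v=\mathbf 0$, i.e.\ $\tilde v\in\mathcal{C}$, while $\tilde v|_I=\mathbf 0$; by the second assertion $\tilde v=\mathbf 0$, hence $v=\mathbf 0$. Thus $H|_{\bar I}$ is invertible. Finally, if $w$ lies in the row span of $H$ with $\supp(w)\subseteq I$, write $w = y^\top H$; then $y^\top(H|_{\bar I}) = w|_{\bar I} = \mathbf 0$, and invertibility of $H|_{\bar I}$ forces $y=\mathbf 0$, so $w=\mathbf 0$. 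Hence no nonzero vector in the row span of $H$ is supported within $I$, i.e.\ $I$ is $k$-puncturable to $H$.

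I do not anticipate a genuine obstacle here: once the inequality $|T|\ge n-d+1$ has been used to make $\rho_T$ injective, the argument reduces to the standard equivalences among ``$G_I$ invertible'', ``$\rho_I$ injective'', ``$H|_{\bar I}$ invertible'', and ``$I$ punctures $H$''. The only point requiring care is bookkeeping the row-versus-column and transpose conventions, so that $H|_{\bar I}$ comes out square and the implications are chained in the right direction; after the second assertion is established, the first and third follow immediately.
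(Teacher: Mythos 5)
Your proof is correct and follows essentially the same route as the paper's: use $|\bar T|\le d-1$ to rule out nonzero codewords supported on $\bar T$, deduce that the generator matrix restricted to $T$ has full row rank $k$, extract $I\subseteq T$ indexing $k$ independent columns, and then characterize $H|_{\bar I}$ via injectivity. Your phrasing in terms of the restriction maps $\rho_T,\rho_I$ and the kernel of the square matrix $H|_{\bar I}$ is a cleaner formalization of the paper's ``distinct syndromes'' argument, but the underlying ideas are identical.
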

    \begin{proof}
	For the first statement, consider the generator matrix $G \in \mathbb{F}^{k \times n}_q$. We claim that any $G|_{T}$ must have full-rank $k$. Suppose not, then there must be a non-zero codeword that lies outside of $G|_{T}$. However, $|\bar{T}| < d$, so it cannot support any nonzero codeword. Hence, by row reduction and column swaps, we can always find $k$ pivots for $G$, so that $G$ is of the form
	\begin{equation*}
		G=\begin{pmatrix}
			\mathbb{I}_k \mid J
		\end{pmatrix}
	\end{equation*}
	for a $k$-by-$k$ identity matrix and some $k$-by-$(n-k) $ matrix $J$. The first $k$ coordinates form an information set $I \subseteq T$.
	
	For the second statement, if $c \vert_I = \mathbf{0}$, then clearly $c$ cannot be generated by $G$ unless $c$ is the all-zero vector.  
	
	Lastly, we show that $H_{\bar{I}}$ must be full-rank, i.e. $\mathrm{rk}(H) = n-k$. This follows from the fact that any vector whose support is entirely contained in \(\bar{I}\) must correspond to a distinct syndrome under \(H\). Suppose, for contradiction, that two such vectors \(u\) and \(v\) satisfy \(Hu = Hv\). Then their difference \(u - v\) lies in the code \(\mathcal{C}\), since \(H(u - v) = 0\). However, \(u - v\) is also supported entirely on \(\bar{I}\), which contradicts the fact that \(I\) is an information set—no nonzero codeword can be zero on all positions in \(I\).  Therefore, all such vectors produce distinct syndromes, implying that the columns of \(H|_{\bar{I}}\) are linearly independent and that \(H|_{\bar{I}}\) has full rank. Hence, any vector in $\operatorname{rs} H$ must have components supported on $H|_{\bar{I}}$, which implies that $I$ forms a $k$-puncture to the parity check matrix $H$. 
\end{proof}

\begin{remark}
    If the parity check matrix $H$ contains linearly dependent rows. That is, the number of rows in $H$ is larger than $n-k$. The above implies that $H|_{\bar{I}}$ is always full-column rank. Denote a puncture set $|\gamma|=t$ such that $\gamma \subset \bar{I}$. Then removing the columns in $\gamma$ would simply reduce the column rank of $H$ by $|\gamma|=t$. Denote the punctured parity check by $H'$ and $\operatorname{rk} H' = n - k -t $ so that the $\operatorname{rk} H' = n - k -t$ while $\ker H' = (n - t) - (n-k-t) = k$. Hence, removing columns purely from $\bar{I}$ would keep the code (kernel) dimension the same. 
\end{remark}

The following corollary of Lemma~\ref{lemma: information set} will be used to find a disjoint set of punctures for hypergraph product codes. 

\begin{corollary}\label{corollary: info set}
	Given any region $R \subset [n]$, suppose $|R| < d$, then one can always find an information set $I$ such that $R \cap I = \emptyset$. 
\end{corollary}

In particular, let $\mathcal{Q}$: $\mathcal{F}_{l+1} \rightarrow \mathcal{F}_l \rightarrow \mathcal{F}_{l-1}$ by the quantum code taken from a $t$-dimensional hypergraph product code. The canonical logical representatives are each supported on a hyperplane in some sector. Restricted to any sector, the support is given by a $r$-dimensional hypertube, i.e., a thin set of $r$-dimensional hyperplanes on $\prod_{i \in I} k_i$ punctures. As long as $k_i < d_i$ for any $i$, one can find a disjoint collection of $X$ logical representatives that are completely disjoint from the original canonical logical representatives in this sector. Similar results hold for the $Z$ logical operators. 
For general logical representatives, not just the canonical ones, Lemma~\ref{lemma: classical-cleaning} is useful in finding new set of logical representatives that has disjoint support from the given one. This technique will be very useful for our results later so we state it here as a lemma.



\begin{lemma}\label{lemma: constant-depth argument}
	Let $d_{\min}$ be the minimum distance of the classical codes used in constructing the hypergraph product code. Consider a canonical logical representative of an $X$- (or $Z$)-type logical operator in one sector, and let $r$ (or $l$) be its dimension. Then, for any collection of \( c \) hyperplanes in the same sector with the same orientation and dimension as the logical representative, where $c \leq \left\lfloor \frac{d_{\min} - 1}{2} \right\rfloor$, there exists an alternative logical representative for the same operator whose support is entirely disjoint from those $c$ hyperplanes. 
\end{lemma}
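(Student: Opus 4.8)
The plan is to deform the given canonical representative along a single coordinate direction, thereby reducing the statement to the classical cleaning lemma (Lemma~\ref{lemma: classical-cleaning}) applied to one of the seed codes. I treat the $X$-type case; the $Z$-type case is completely parallel, obtained by exchanging $A^{(i)}\leftrightarrow (A^{(i)})^T$, the kernels $\ker (A^{(i)})^T\leftrightarrow\ker A^{(i)}$, and $X$-stabilizers $\leftrightarrow$ $Z$-stabilizers.

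By the explicit description in Section~\ref{sub: generalHGP}, the canonical $X$-representative $L$ is supported in a single sector $S_\mu$ and is a simple tensor product
\[
L \;=\; \Bigl(\bigotimes_{i\in F}\xi^{(a_i)}_{A^{(i)}}\Bigr)\otimes\Bigl(\bigotimes_{j\in D}f^{(a_j)}_{A^{(j)}}\Bigr),
\]
where $D\subseteq[t]$ is the orientation of $L$ (the directions in which $L$ is pinned to a single coordinate), $F=[t]\setminus D$ collects the $r$ directions in which $L$ is spread out, each $\xi^{(a_i)}$ lies in $\ker (A^{(i)})^T$, and each $f^{(a_j)}$ is the unit vector $e_{a_j}\in\mathbb{F}_2^{n_j}$ with $a_j$ an information-set index for $\ker A^{(j)}$. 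By hypothesis each given hyperplane $\Xi_i$ has the same orientation $D$ and dimension $r$, so $\Xi_i=\{x\in S_\mu:x_j=v_{i,j}\text{ for all }j\in D\}$ for some associated vector $v_i=(v_{i,j})_{j\in D}$. The first observation is that it suffices to make $L'$ avoid all the $\Xi_i$ \emph{along one coordinate}: fixing any $j^\ast\in D$ (which is nonempty in the valid range of the construction), if $L'$ is a representative whose support, projected onto the $j^\ast$-axis, misses the set $\gamma:=\{v_{i,j^\ast}:i=1,\dots,l\}\subseteq[n_{j^\ast}]$, then since every point of $\Xi_i$ has $j^\ast$-coordinate $v_{i,j^\ast}\in\gamma$ we get $\mathrm{supp}(L')\cap\Xi_i=\emptyset$ for all $i$. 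Note $|\gamma|\le l\le\bigl\lfloor\frac{d_{\min}-1}{2}\bigr\rfloor$.

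The second step is a one-direction deformation lemma: for any $h\in\mathrm{rs}\,A^{(j^\ast)}=\mathrm{im}\,(A^{(j^\ast)})^T$, the operator $L'$ obtained from $L$ by replacing its $j^\ast$-factor $f^{(a_{j^\ast})}$ with $f^{(a_{j^\ast})}+h$ (all other factors unchanged) is homologous to $L$ and still supported entirely in $S_\mu$. I would prove this by writing out the stabilizer explicitly: put $h=(A^{(j^\ast)})^T(w)$ and let $c$ be the combination of $X$-check generators obtained from $L$ by replacing the $j^\ast$-factor with $w$; applying $H_x^T$ and expanding by the graded Leibniz rule gives one summand for each direction in which a seed transpose acts on a factor of $c$, but every summand coming from a direction $i\in F$ carries the factor $(A^{(i)})^T\xi^{(a_i)}=0$ and hence vanishes, leaving exactly
\[
H_x^T(c)\;=\;\Bigl(\bigotimes_{i\in F}\xi^{(a_i)}\Bigr)\otimes (A^{(j^\ast)})^T(w)\otimes\Bigl(\bigotimes_{j\in D\setminus\{j^\ast\}}f^{(a_j)}\Bigr)\;=\;L-L',
\]
which lies in $S_\mu$ and in $\mathrm{im}\,H_x^T$. (This is the higher-dimensional analogue of the standard two-dimensional computation, where the cross-sector term drops out precisely because the spread-out factor lies in the kernel of the seed transpose.)

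Finally I would pick $h$ using Lemma~\ref{lemma: classical-cleaning}. The relevant seed code $\ker A^{(j^\ast)}$ has parity-check matrix $A^{(j^\ast)}$ and distance at least $d_{\min}$, and $|\gamma|\le\lfloor(d_{\min}-1)/2\rfloor$; hence Lemma~\ref{lemma: classical-cleaning} --- more precisely, the full-rank restricted check matrix produced in its proof --- furnishes $h\in\mathrm{rs}\,A^{(j^\ast)}$ with $h|_\gamma=f^{(a_{j^\ast})}|_\gamma$ (a target of Hamming weight at most one, since $f^{(a_{j^\ast})}=e_{a_{j^\ast}}$; if $a_{j^\ast}\notin\gamma$ one simply takes $h=0$ and $L'=L$). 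Then $(f^{(a_{j^\ast})}+h)|_\gamma=0$, so the $j^\ast$-projection of $\mathrm{supp}(L')$ lies in $[n_{j^\ast}]\setminus\gamma$; together with the deformation lemma, $L'$ is a representative of the same logical operator whose support is disjoint from all of $\Xi_1,\dots,\Xi_l$. The crux, and the main obstacle, is the second step: pinning down the index bookkeeping of the $t$-fold homological product so that altering a single pinned coordinate by a row-span vector of the corresponding seed check is a legitimate stabilizer move confined to the original sector --- this is exactly where the defining property $\xi^{(a_i)}\in\ker (A^{(i)})^T$ of the canonical representatives is used to annihilate all the cross terms. Everything after that is a direct invocation of Lemma~\ref{lemma: classical-cleaning}.
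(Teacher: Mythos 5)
Your proof is correct and follows essentially the paper's own route: both pick one pinned coordinate direction $j^\ast$, collect the $j^\ast$-values of the given hyperplanes into a small set $\gamma$ of size at most $\lfloor (d_{\min}-1)/2\rfloor$, and invoke Lemma~\ref{lemma: classical-cleaning} on the corresponding seed code to deform the representative off $\gamma$ along that axis. The only substantive difference is that you make the deformation step rigorous via the graded Leibniz computation --- exhibiting the chain $c$ and showing the cross-terms vanish because $\xi^{(a_i)}\in\ker (A^{(i)})^T$ --- where the paper appeals informally to ``translational symmetry,'' and you state explicitly the projection observation (missing $\gamma$ along a single axis suffices) that the paper's phrasing about ``the line intersects each hyperplane at one qubit'' leaves somewhat loose; this is a welcome sharpening but not a different method.
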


\begin{proof}
Without loss of generality, assume we are working with a canonical \( X \)-type logical representative. This operator is supported on an \( r \)-dim hyperplane, and the corresponding \( X \)-stabilizer generators are supported on \( (t - r) \)-dim hyperplanes that are perpendicular to it. If the canonical logical representative is disjoint from the $c$ hyperplanes, we are done. Otherwise, suppose it has support on some of the $c$ hyperplanes.

Consider a 1-dimensional hyperplane (i.e., a line) that is perpendicular to the \( r \)-dimensional hyperplane supporting the $X$ logical operator. This line intersects each of the \( l \) given \( r \)-dimensional hyperplanes at one qubit each, so there are a total of \( c \) intersection points.

Now, interpret these \( c \) intersection points as locations of \( Z \)-type errors, and denote this set by $\gamma$. This 1-dimensional line of qubits can be viewed as a classical code: it supports \( X \)-type stabilizers and is responsible for detecting and correcting \( Z \)-errors. Since the underlying classical code has distance at least \( d_{\min} \), and \( c\leq \left\lfloor \frac{d_{\min} - 1}{2} \right\rfloor \), these \( c \) \( Z \)-errors are correctable. Therefore, using Lemma~\ref{lemma: classical-cleaning}, there exists a set of $c$ \( X \)-stabilizer generators such that the syndromes restricted to these generators are linearly independent, and we can obtain a set of \( X \)-stabilizers such that each one anti-commutes with exactly one of the \( Z \)-errors on $\gamma$.

By the translational symmetry of the hypergraph product code, these \( X \)-stabilizers are replicated along the full \( r \)-dimensional hyperplane of the canonical logical operator. Hence, we can multiply the canonical logical representative by appropriate \( X \)-stabilizers---specifically, the stabilizers supported on the same intersection point(s) as the canonical logical representative. The resulting operator intersects trivially with the $c$ hyperplanes. This completes the proof. 
\end{proof}

\section{No transversal non-Clifford gates}\label{sec:transversal}

We now examine the simplest transversal case as defined in Definition~\ref{def: transversal}.
A previous work \cite{burton2020} showed that 2-dimensional hypergraph product codes have transversal gates restricted to the Clifford group under several rather restrictive assumptions on the codes. Their results are only shown to hold for codes that support logical qubits on a single sector. Furthermore, the two seed classical codes are required to have the so-called robustness property which demands that both the generator matrices and parity check matrices of the classical codes have to be simultaneously $k$-bipuncturable. 
Here, we remove all these conditions and establish a universal no-go theorem for transversal non-Clifford gates. We show that any transversal logical gate taking the form of a tensor product of single qubit physical gates belongs to $\P_2$, i.e.,~ the Clifford group, for hypergraph product codes of any dimension. We emphasize that this result applies to all hypergraph product code constructions: it does not impose any constraint on the number of logical qubits involved and does not require any sector restriction or additional conditions on the classical codes. 

As a reminder, implementing multi-controlled-$Z$ gates using several hypergraph product code block to realize multi-controlled logical gates \cite{Chen_2023,Breuckmann2024Cups,golowich2024quantumldpccodestransversal} (see also the examples in Section~\ref{sec: yes-go}) is not prohibited by Theorem~\ref{thm:transversal}. The transversal gate definition being used here does not rule out, for instance, a $\mathrm{C}\mathrm{C}Z$ gate acting on three physical qubits in three separate code blocks. Instead, for our purpose, such $\mathrm{C}\mathrm{C}Z$ gates are classified as gates implemented by constant-depth circuits following Definition~\ref{def: constant-depth}, whose restrictions will be examined later in Section~\ref{sec:constant_depth}.

\begin{theorem}\label{thm:transversal}
Let $\mathcal{Q}$ be a $t$-dimensional ($t\geq 2$) hypergraph product code with $k$ logical qubits for any $k\geq 1$. If $U$ is a logical unitary implementable by transversal gates and the distance of the code satisfies $d\geq 3$, then $U $ is restricted to the Clifford group $ \P_2$.
\end{theorem}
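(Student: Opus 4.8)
\emph{Proof plan.}
The plan is to feed the canonical hyperplane description of hypergraph product codes into the transversal version of the generalized Bravyi--K\"onig theorem, so that the whole statement collapses to a short incidence analysis of hyperplanes. Since $U=\bigotimes_{i=1}^n u_i$ is transversal, conjugation by $U$ (and by any Pauli) acts qubit-wise and preserves supports, so $\mathrm{supp}[UPU^\dagger]=\mathrm{supp}[P]$; hence in the notation of Lemma~\ref{lemma: constant depth support} the group commutator $K_2=[UP_1U^\dagger,P_2]$ is supported on $\mathrm{supp}[P_1]\cap\mathrm{supp}[P_2]$. By Lemma~\ref{lemma: clifford-hierachy-basis} with $j=2$ it therefore suffices to show: for every ordered pair $(\overline{P}_1,\overline{P}_2)$ of canonical single-logical-qubit generators $\overline{P}_1,\overline{P}_2\in\{\overline{L}_{x,i},\overline{L}_{z,i}\}_{i\in[k]}$ there are representatives $P_1,P_2$ of $\overline{P}_1,\overline{P}_2$ with $\mathrm{supp}[P_1]\cap\mathrm{supp}[P_2]$ correctable (equivalently, cleanable, by Lemma~\ref{lemma: correctability-stabilizer-code}). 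Ranging over the generating set $\{\overline{L}_{x,i},\overline{L}_{z,i}\}$ is enough because the Clifford-hierarchy bookkeeping inside Lemma~\ref{lemma: clifford-hierachy-basis} only asks that $UP_1U^\dagger$ commute or anticommute with each member of a generating set of the logical Pauli group. Note that up to this point no restriction on $k$, on the number of sectors, or on the seed codes has entered, which matches the generality asserted in the theorem.

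What remains is geometry. By the constructions of Section~\ref{sec: hgp} and Section~\ref{sub: alt}, each canonical logical representative sits on a single hyperplane (Definition~\ref{def: hyperplane}) inside a single sector $S_\mu$ of the chain space, and inside a fixed sector every $X$-type canonical representative carries the orientation $\mathrm{dir}_\mu(\xi_{L_X})$ while every $Z$-type one carries the complementary orientation $\mathrm{dir}_\mu(\xi_{L_Z})=\overline{\mathrm{dir}_\mu(\xi_{L_X})}$. I would split into three cases according to the canonical representatives of $\overline{P}_1,\overline{P}_2$. (i) \emph{Different sectors:} their supports are disjoint sets of physical qubits, so the intersection is empty. (ii) \emph{Same sector, opposite type:} the two hyperplanes have complementary orientations, so together they fix all $t$ coordinate directions; hence they meet in at most one qubit, and a one-qubit region is correctable since $d\ge 3>1$. (iii) \emph{Same sector, same type:} the two hyperplanes share an orientation, so either they differ in the value of some fixed coordinate — in which case the supports are already disjoint — or they coincide.

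The single non-immediate subcase is the coincident one in (iii): two distinct canonical logicals supported on one common hyperplane. I would resolve it with Lemma~\ref{lemma: constant-depth argument}, taking the ``collection of $l$ hyperplanes'' to be the single hyperplane carrying $P_1$; since $l=1$ and $d\ge 3$, the hypothesis $1\le\lfloor(d_{\min}-1)/2\rfloor$ holds, so there is an alternative representative of $\overline{P}_2$ whose support avoids that hyperplane, whence $\mathrm{supp}[P_1]\cap\mathrm{supp}[P_2]=\emptyset$. In every case the intersection is correctable, so Lemma~\ref{lemma: clifford-hierachy-basis} yields $U\in\P_2$. The main obstacle is precisely this last subcase, and it is where $d\ge 3$ is used essentially: it simultaneously makes single qubits correctable in case (ii) and supplies the room $l=1$ needed to invoke Lemma~\ref{lemma: constant-depth argument}. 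A secondary subtlety is checking the hypothesis of Lemma~\ref{lemma: constant-depth argument}, namely that the seed code governing cleaning along the direction transverse to the hyperplane has distance at least $3$: this is immediate from $d\ge 3$ when the logical representatives are string-like, and in the general $t$-dimensional case needs a short argument choosing the cleaning direction so that the relevant seed distance is bounded below by $d$. Finally I would emphasize that the incidence analysis in cases (i)--(iii) is entirely dimension-independent, which is why the conclusion holds simultaneously for hypergraph product codes of every product dimension $t\ge2$.
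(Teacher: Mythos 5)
Your proposal is correct and matches the simplified argument given in the remark following Theorem~\ref{thm:transversal}: both invoke Lemma~\ref{lemma: clifford-hierachy-basis} with $j=2$, reduce to pairs of canonical single-logical-qubit generators, observe that transversality collapses the group-commutator support to the intersection of hyperplane supports, and appeal to Lemma~\ref{lemma: constant-depth argument} for the same-hyperplane subcase. If anything, your case split is slightly more careful than the remark's---its Case~A asserts that $L_{x,i}$ and $L_{x,j}$ ($i\neq j$) have disjoint support, but two distinct canonical $X$-representatives can share a sector and the same puncture/hyperplane while overlapping in their $\zeta$-patterns, a subcase you correctly route through Lemma~\ref{lemma: constant-depth argument}.
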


\begin{proof}
Let $\overline{L}_\gamma, \overline{L}_\delta$ be 2 arbitrary logical operators in $\langle \overline{L}_{x_i}, \overline{L}_{z_i}\rangle$, $i\in [k]$ . Without loss of generality, we simply consider $\overline{L}_\gamma = \overline{L}_\delta$, which can be obtained by redefining each of them as the union of $\overline{L}_\delta \cup \overline{L}_\gamma$. Let $L_\delta$ be the canonical logical representatives for $\overline{L}_\delta$ and $L_\gamma$ be some logical representatives for $\overline{L}_\gamma$ to be specified later. 

We can expand $L_\delta$ as the products of $r_1, r_2$ $X$ and $Z$ canonical logical representatives such that each logical representative of the form $L_{p,i}$ is supported by a single puncture, using the canonical logical representatives that we have defined in Section~\ref{sub: 2DHGP} and \ref{sub: generalHGP}. Since there are possibly multiple logical operators with the same puncture, each $L_{p,i}$ is defined to be the product of these logical operators
\begin{align}
	L_\delta = \pm L_{x,1} L_{x,2} \cdots L_{x,r_1} L_{z,1} L_{z,2} \cdots L_{z,r_2}.
\end{align}
Let $U$ be a transversal logical operator. Consider the group commutator:
\begin{align}
	K = L_\gamma ( U L_\delta U^{\dagger} ) L_\gamma^{\dagger} ( U L_\delta^{\dagger} U^{\dagger} ).
\end{align}
We are going to show that $K = \pm 1$ and hence $U$ must be a Clifford gate.

Absorbing the $\pm$ sign into $K$, and inserting identities of the form $\mathbb{I} = U^\dagger U$ and $\mathbb{I} =L_\gamma^\dagger L_\gamma$, we obtain
\begin{align}
	\begin{aligned}
		K = & L_\gamma ( U L_{x,1} \cdots L_{x,r_1} L_{z,1} \cdots L_{z,r_2} U^{\dagger} ) L_\gamma^{\dagger} ( U  L_{z,r_2} \cdots L_{z,1} L_{x,r_1} \cdots L_{x,1}  U^{\dagger} ) \\
		= & ( L_\gamma U L_{x,1} U^{\dagger} L_\gamma^{\dagger} ) (L_\gamma U L_{x,2} U^{\dagger} L_\gamma^{\dagger}) \cdots (L_\gamma U L_{x,r_1} U^{\dagger} L_\gamma^{\dagger} ) \\
		& ( L_\gamma U L_{z,1} U^{\dagger} L_\gamma^{\dagger} ) (L_\gamma U L_{z,2} U^{\dagger} L_\gamma^{\dagger} ) \cdots (L_\gamma U L_{z,r_2} U^{\dagger} L_\gamma^{\dagger} ) \\
		& ( U L_{z,r_2} U^{\dagger} ) ( U L_{z,r_2-1} U^{\dagger} ) \cdots ( U L_{z,1} U^{\dagger} ) \\
		& (U L_{x,r_1} U^{\dagger} ) ( U L_{x,r_1-1} U^{\dagger} ) \cdots (U L_{x,1} U^{\dagger} ). 
	\end{aligned}
\end{align} 

Now consider commuting the last term $U L_{x,1} U^{\dagger}$ to the front. It (anti-)commutes with $U L_{x,i} U^{\dagger}$, $ U L_{z,j} U^{\dagger}$ but the factor $\pm$ is again absorbed into $K$. Consider $[U L_{x,1} U^{\dagger}, L_\gamma U L_{z,i} U^{\dagger} L_\gamma^{\dagger}]$. For each term, since $U$ is transversal, 
\begin{align}
	\mathrm{supp}( U L_{x,i} U^{\dagger} ) = \mathrm{supp}( L_{x,i}), \quad
	\mathrm{supp}(U L_{z,j} U^{\dagger} ) = \mathrm{supp}( L_{z,j}).
\end{align}
Since $L_{x,1}$ is supported on a single hyperplane, and the logical $Z$ representatives are supported either on a different sector, or on the same sector but with overlap of at most a single physical qubit, we find that the support of the commutator is correctable, so the commutator is a trivial logical operator. Thus, we can commute $L_{x,1}$ through with only an extra $\pm$ sign which can again be absorbed into $K$. Now, consider $[U L_{x,1} U^{\dagger}, L_\gamma U L_{x,i} U^{\dagger} L_\gamma^{\dagger} ]$, which depends only on the support of the intersection of $L_{x,1} \cap L_{x,i}$, since $U$ is transversal. If $i \neq 1$, they commute since $\mathrm{supp} (L_{x,1}\cap L_{x,i})= \emptyset$. 

Therefore, we can rewrite $K$ as
\begin{align}\label{eqn: transversal-k-expanded}
	\begin{aligned}
		K = & ( L_\gamma U L_{x,1} U^{\dagger} L_\gamma^{\dagger} U L_{x,1} U^{\dagger} ) \cdots (L_\gamma U L_{x,r_1} U^{\dagger} L_\gamma^{\dagger} U L_{x,r_1} U^{\dagger} )   \\
		& ( L_\gamma U L_{z,1} U^{\dagger} L_\gamma^{\dagger} U L_{z,1} U^{\dagger} ) \cdots (L_\gamma U L_{z,r_2} U^{\dagger} L_\gamma^{\dagger} U L_{z,r_2} U^{\dagger} ) \\
		= & \pm K_1 \cdots K_{r_1} K_{r_1+1} \cdots K_{r_1 +r_2}.
	\end{aligned}
\end{align}
Each $K_i$ is a valid logical operator, so we just need to determine the Clifford hierarchy level for each $K_i$.

Notice that $L_{\gamma}$ appears in each $K_i$. Even though each $L_\gamma$ must represent the same logical operator $\overline{L}_\gamma$, they need not be the same choice of logical representative. We shall proceed to replace $L_\gamma$ with suitable logical representatives to get the best bound for each term $K_i$. 

Without loss of generality, consider the first term $K_1 = ( L_\gamma U L_{x,1} U^{\dagger} L_\gamma^{\dagger} ) (U L_{x,1} U^{\dagger} )$.  First choose $L_\gamma$ to be the canonical logical representatives  
\begin{align}
	L_\gamma = \pm L_{x,1} L_{x,2} \cdots L_{x,r_1} L_{z,1} L_{z,2} \cdots L_{z,r_2}.
\end{align}
However, $L_{x,1}$ can share a large overlapping subset of physical qubits with $U L_{x,1} U^{\dagger}$. But it turns out that we can choose a different representative $L_{x,1}^{(1)}$ instead of $L_{x,1}$, i.e, redefine $L_\gamma$ but keep the rest of the terms the same:
\begin{align}
	L_\gamma =L_{x,1}^{(1)}\cdot L_{x,2}\cdot ...\cdot L_{x,r_1}L_{z,1} L_{z,2} \cdots L_{z,r_2}.
\end{align}
The operator $L_{x,1}^{(1)}$ is chosen by using Lemma~\ref{lemma: constant-depth argument} to find an alternative logical representative that is disjoint from the support of $L_x$. This gives new logical representatives that are no longer supported on the same hyperplane as $U L_{x,1} U^{\dagger}$, that is we have $\mathrm{supp}\,L_{x,1}^{(1)} \cap \supp{L_{x,1}} = \emptyset $.  So
\begin{align}
	\begin{aligned}
		& ( L_\gamma U L_{x,1} U^{\dagger} L_\gamma^{\dagger} ) (U L_{x,1} U^{\dagger} ) \\
		= \pm & ( L_{x,1}^{(1)} L_{x,2} \cdots L_{x,r_1}  L_{z,1} L_{z,2} \cdots L_{z,r_2} ) U L_{x,1} U^{\dagger} \\
		& (L_{x,1}^{(1)} L_{x,2} \cdots L_{x,r_1}  L_{z,1} L_{z,2} \cdots L_{z,r_2} )^{\dagger}  (U L_{x,1} U^{\dagger} ) \\
		= \pm & (  L_{z,1} L_{z,2} \cdots L_{z,r_2}  L_{x,1}^{(1)} L_{x,2} \cdots L_{x,r_1} ) U L_{x,1} U^{\dagger} \\
		& ( L_{z,1} L_{z,2} \cdots L_{z,r_2}  L_{x,1}^{(1)} L_{x,2} \cdots L_{x,r_1} )^{\dagger}  (U L_{x,1} U^{\dagger} ) \\
		=  \pm & (  L_{z,1} L_{z,2} \cdots L_{z,r_2} ) U L_{x,1} U^{\dagger} 
		( L_{z,1} L_{z,2} \cdots L_{z,r_2} )^{\dagger}  (U L_{x,1} U^{\dagger} ).
	\end{aligned}
\end{align}
Lastly, using again the fact that the support of $L_{z,i}$ overlaps with the support of $U L_{x,1} U^{\dagger}$ at at most a single physical qubit, we conclude that the above operator must be a trivial logical operator. 

Therefore, we have shown that each $K_i$ in Eq.~(\ref{eqn: transversal-k-expanded}) is an identity logical operator  modulo at most a $\pm 1$ phase. Since $K$ is in $\P_0$, and this holds true for any sequence of $\{\overline{P}_j\}_{j \in [2]}$, using the generalized Bravyi--K\"{o}nig theorem we formulated as Lemma~\ref{theorem: generalize bk}, we conclude that $U \in \P_2$, i.e., $U$ is in the Clifford group. 
\end{proof}

\begin{remark}
	With the fact that Clifford operators form the group $\mathcal{P}_2$, by Lemma~\ref{lemma: clifford-hierachy-basis}, a simpler proof of Theorem~\ref{thm:transversal} can be made as follows.
	
	Let $\overline{L}_\delta, \overline{L}_\gamma$ be arbitrary elements from the generating set of logical operators $\{\overline{L}_{x_i}, \overline{L}_{z_i}\}$ for $i \in [k]$, and let their corresponding canonical logical representatives be $L_\delta$ and $L_\gamma$.
	
	Define:
	\[
	K = L_\gamma ( U L_\delta U^{\dagger} ) L_\gamma^{\dagger} ( U L_\delta^{\dagger} U^{\dagger} ).
	\]
	
	We consider three possible cases for the pair $(L_\delta, L_\gamma)$, based on the sets:
	\begin{itemize}
		\item $A = \{\{L_{x_i}, L_{x_j}\}, \{L_{z_i}, L_{z_j}\}\}$ with $i \ne j$,
		\item $B = \{\{L_{x_i}, L_{x_i}\}, \{L_{z_i}, L_{z_i}\}\}$, i.e., $L_\delta = L_\gamma$,
		\item $C = \{\{L_{x_i}, L_{z_j}\}\}$, where one is a logical $X$ and the other a logical $Z$.
	\end{itemize}

	In cases $A$ and $C$, the support of the intersection of $L_\delta$ and $L_\gamma$ is at most a single physical qubit, so it follows that $K = \pm 1$, by Lemma~\ref{lemma: phase for trivial operator}. In fact, for the case of $A$, the support of $L_\delta$ and $L_\gamma$ is disjoint.
	
	In the case for $B$, we can apply Lemma~\ref{lemma: constant-depth argument} to construct an alternative logical representative $L_\gamma'$ such that it is disjoint from the canonical logical representative $L_\gamma$. Redefining $L_\gamma$ to be $L_\gamma'$, we find that the support of $K$ becomes empty, so $K = 1$.
	
	In all three cases, we conclude that $K = \pm 1$. Then by Lemma~\ref{lemma: clifford-hierachy-basis}, it follows that $U \in \mathcal{P}_2$.
\end{remark}

Theorem~\ref{thm:transversal} can be immediately generalized to multi-HGP code blocks where we take arbitrary tensor products of stabilizers and logicals from each block. Consequently, the theorem indicates that there is no hope of directly getting transversal $T$ gate on any hypergraph product codes.
{
\begin{example} 
	It is instructive to compare this with the transversal implementation of $T$ gates on color codes \cite{Bombin_2007,Bombin_2013,Kubica2015}. Let $M$ be a $t$-dimensional manifold with boundary. Suppose it can be triangulated into a simplicial complex, still denoted by $M$.  Let $\mathtt{TC}_k(M)$ denote the toric code defined on the complex with qubits on $k$-simplices, $X$-stabilizer generators and $Z$-stabilizer generators supported on $(k-1)$- and $(k+1)$-simplices respectively. If the complex $M$ is $(t+1)$-colorable, we can define the color code $\mathtt{CC}_k(M)$ with qubits on $t$-simplices, $X$-stabilizer generators and $Z$-stabilizer generators supported on $(t - k - 2)$- and $k$-simplices respectively. To guarantee the commutativity of the stabilizer generators, they need to be carefully defined near the boundary (see, e.g., \cite{Kubica2015Unfolding} for more details). Then the so-called \emph{color code unfolding} \cite{Yoshida_2011,Bomb_2014,Kubica2015Unfolding} shows that there exists a local Clifford unitary $U$, cell complexes $M_i$ obtained from $M_i$, stabilizer groups $S_1,S_2$ of certain ancilla qubits such that
	\begin{align}\label{eq:unfolding}
		\mathtt{CC}_k(M) \otimes S_1 =  U [\mathtt{TC}_{k+1} (\#_i M_i) \otimes S_2 ] U^\dagger,
	\end{align}
	where $\#_i M_i$ means attaching $M_i$ together by identifying some of their boundaries. The ancilla qubits can be added locally on the side of either the color code or the toric code, depending on the local structure of the complexes.
	
	With or without boundary, by Theorem~\ref{thm:transversal}, toric codes do not support transversal $T$ gates and their higher order analogues $R_l = \text{diag}(1, e^{2\pi i /2^l})$. On the other hand, it is well known that these gates can be transversally implemented on $l$-dimensional color codes by physical $R_l$ and $R_l^\dagger$. Eq.~\eqref{eq:unfolding} reveals the relationship between two cases: the implementation of these logical non-Clifford gates on toric codes is enabled by introducing a constant-depth local Clifford circuit $U$ and ancilla qubits, which is consistent with Theorem~\ref{thm:transversal}.
\end{example} 
}

\section{Constant-depth circuit gates}\label{sec:constant_depth}

In Section~\ref{sec:transversal} we have shown that strictly transversal gates alone are not capable of implementing any non-Clifford logical gates in hypergraph product codes. However, as explained earlier, relaxing the requirement of transversal gates to constant-depth circuits still guarantees fault tolerance, and by the above result they are necessary to access non-Clifford gates.

As hypergraph product codes may not be geometrically local, with stabilizers having support on all sectors of a given code, we run into the problem of how the constant-depth circuit change the support of a given logical representative. 

Rather than analyzing arbitrary constant-depth circuits which would be intractable, we focus on a natural constrained class: circuits that map hyperplanes to hyperplanes or hypertubes to hypertubes. The allowable hyperplanes is as defined in Section~\ref{sub: position}, those that can be specified by an orientation and an associated vector. More precisely, the assumption is given below:

\begin{assumption}\label{assmuption0}
(Dimension-preserving) Suppose $A$ is an operator with support on a $l$-dimensional hypertube in any sector. Under a constant-depth circuit that implements the unitary gate $U$, the support of $UAU^\dagger$ is given by a constant number of hypertubes supported on (possibly multiple) sector(s). Further, the dimension of each of them must be at most $l$.
\end{assumption}

Note that this assumption neither requires the code to be geometrically local nor the gate to be short range (cf. the original Bravyi--König theorems in \cite{bravyi_classification_2013}), since the hypertubes for $UAU^\dagger$ can be supported across different sectors. Even in one sector, hyperplanes need not be stacked together.

This assumption ensures the dimension of hyperplanes does not increase under the constant-depth circuit, so that we can eventually argue for correctability as we recursively attempt to perform dimension reduction through finding alternative logical representatives and taking intersections. It is also motivated by the way logical gates are implemented to perform a multi-qubit controlled gate in the high dimensional toric codes.

With Assumption \ref{assmuption0}, our analysis proceeds in three parts: (1) orientation-preserving (2) sector-preserving (3) shaping-preserving in the following subsections.
\begin{enumerate}
    \item In Section~\ref{subsection: artificial scenario}, we study the artificial scenario where the constant-depth circuit is orientation-preserving, namely it maps hyperplanes with a given orientation to hyperplanes with the same orientation (see Definition~\ref{def:orientationPreserve}). Theorem~\ref{thm:orientation} demonstrates that without different orientations, non-Clifford gates cannot be realized. We use this artificially constructed scenario to argue that it is important to build constant-depth circuits that change orientations of hyperplanes for non-Clifford gates. 
    
    \item In Section~\ref{subsection: single sector constant depth}, we consider the case where logical gates are confined to a single sector, and show that for any family of hypergraph product codes with growing distance, the logical gates are restricted to $\mathcal{P}_{\lfloor r/l\rfloor +1}$. This is a more restricted model leading up to Section~\ref{subsection: constant-depth-general-case}. where instead of single sector we consider multiple sector. However, unlike the previous case in Section~\ref{subsection: artificial scenario}, we now consider a constant-depth circuit that can, roughly speaking, send a hyperplane to hyperplanes with different orientations.

   \item In Section~\ref{subsection: constant-depth-general-case}, we address the most general setting: constant-depth circuits entangle qubits across multiple sectors and map hyperplanes to hyperplanes with potentially different orientations. The requirement on the distance of the code depends on whether we impose the shape-preserving condition (see Definition \ref{def: shape-preserving}) for the constant-depth circuit. In general, the distance requirement is less if the circuit is also shape-preserving. 
\end{enumerate}

Collectively, these results and Theorem~\ref{thm:transversal} are summarized in Table~\ref{tab:summary of results}.

\begin{table}
\begin{tabular}{|p{3.8cm}|p{3cm}|p{3.8cm}|p{3.8cm}|}
\hline
 & Transversal
 & \multicolumn{2}{c|}{Constant-depth} \\
 \hline
 &
 & Orientation preserving 
 & Single/Multi-sector constant-depth circuit \\
\hline
Arbitrary $k$ 
& $\mathcal{P}_2$ 
& $\mathcal{P}_2$ 
& Not known \\
\hline
$c(t,r) k^{g(t,r)} < d$ 
& $\mathcal{P}_2$ 
& $\mathcal{P}_2$ 
& $\mathcal{P}_{\lfloor r/l\rfloor +1}$ \\
\hline
\end{tabular}
\caption{Summary of fundamental limitations on logical gates for all dimensions of hypergraph product codes. Here, $l$ ($r$) is the dimension of a canonical $Z$ ($X$) logical representative, with $l+r =t$ the dimension of the hypergraph product code. We also set $k$ to be the code dimension, and $d = \min\{d_i,d_i^T\}$ be the minimum classical code distance. The functions $c(t,r)$ and $g(t,r)$ are determined based on more refined assumptions on the circuits presented in Section~\ref{subsection: single sector constant depth} and \ref{subsection: constant-depth-general-case}. The table shows the limitation across different scenarios. The \textit{rows} represent constraints on the number of logical qubits. The \textit{columns} compare transversal and constant-depth implementations under various structural assumptions. Transversal gates in HGP cannot implement non-Clifford gates. For constant-depth circuits, if the unitary operators are orientation preserving then it is at most a Clifford gate, but in general it can implement a logical gate in $(\lfloor r/l\rfloor +1)$-th level of Clifford hierarchy, if the distance of the code is sufficiently larged compared to the number of logical qubits.}
\label{tab:summary of results}
\end{table}

\subsection{Orientation-preserving constant-depth circuits}\label{subsection: artificial scenario}

To show that having hyperplanes of operators with different orientations is important to obtain non-Clifford gates, we construct an artificial case of orientation-preserving constant-depth circuit.

\begin{definition}\label{def:orientationPreserve}
	Let $V$ by arbitrary logical operator supported on any hypertube $\bvec{\xi}$ in any sector. A constant-depth circuit $U$ is \emph{orientation-preserving} if it satisfies Assumption \ref{assmuption0} and $U P U ^\dagger$ consists of hypertubes with the same orientation possibly across different sectors.
\end{definition}

Explicitly, if $V$ is supported on a single $r$-dim hyperplane that has the same orientation as an $X$ (or $Z$) canonical logical representative in the same sector as $V$, then the support of $UVU^\dagger$ consists of hyperplanes with dimension at most $r$. These hyperplanes can be supported on multiple sectors, but each hyperplane has the same orientation with respect to either $X$ or $Z$ canonical logical representatives in the same sector. In the following proof, it is sufficient to work with individual logical basis elements. As a result, we only consider hyperplanes instead of hypertubes. Since $U$ is a constant depth circuit, the single hyperplane is mapped to at most $c$ hyperplanes.

\begin{theorem}\label{thm:orientation}
	Suppose a unitary operator $U$ is orientation-preserving and is implementable by a constant-depth circuit that maps a hyperplane to at most $c$ hyperplanes on a \( t \)-dimensional hypergraph product code where the canonical logical \( X \) and \( Z \) representatives are supported on hyperplanes of dimensions \( r\) and \( l \) respectively, with \( l + r = t \). Then, if $d>c^2 +c =O(1)$, \( U \) implements an encoded gate from the Clifford group $\mathcal{P}_2$.
\end{theorem}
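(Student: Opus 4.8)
The strategy is to invoke the generalized Bravyi--K\"{o}nig theorem (Lemma~\ref{theorem: generalize bk}, or equivalently its three-qubit version Lemma~\ref{lemma: clifford-hierachy-basis}): it suffices to exhibit, for any pair of logical operators $\overline{L}_\delta,\overline{L}_\gamma$ drawn from the generating set $\langle\overline{L}_{x,i},\overline{L}_{z,i}\rangle$, a choice of representatives so that the group commutator $K=[UL_\delta U^\dagger, L_\gamma]$ is supported on a correctable region and hence equals $\pm\mathbb{I}$. As in the proof of Theorem~\ref{thm:transversal}, I would reduce to the case $\overline{L}_\delta=\overline{L}_\gamma$ (redefining each as the union $\overline{L}_\delta\cup\overline{L}_\gamma$) and expand $L_\delta$ as a product of canonical logical representatives $L_{p,i}$, each supported on a single hyperplane. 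The key difference from the transversal case is that $UL_{p,i}U^\dagger$ is no longer supported on the same hyperplane as $L_{p,i}$; instead, by Assumption~\ref{assmuption0} and the orientation-preserving hypothesis (Definition~\ref{def:orientationPreserve}), it is supported on at most $c$ hyperplanes spread across possibly several sectors, \emph{each of which has the same orientation} (with respect to an $X$ or $Z$ canonical representative in its sector) and dimension at most that of $L_{p,i}$.

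The heart of the argument is a dimension-counting / cleaning step. Fix one term $K_i = (L_\gamma\, UL_{p,i}U^\dagger\, L_\gamma^\dagger)(UL_{p,i}U^\dagger)$ appearing in the expansion of $K$ (after absorbing $\pm$ signs and using the support lemma, Lemma~\ref{lemma: constant depth support}, to discard the cross-terms with distinct $L_{p,j}$). I want to choose the representative $L_\gamma$ so that its overlap with $UL_{p,i}U^\dagger$ lies in a correctable region. Write $L_\gamma$ in canonical form and consider the single canonical factor $L_{p,i}$ inside it (the only one whose support can genuinely collide with $UL_{p,i}U^\dagger$, since the others are disjoint or meet in a point). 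Now $UL_{p,i}U^\dagger$ occupies at most $c$ hyperplanes; since the circuit is orientation-preserving, on the sector of $L_{p,i}$ these hyperplanes have the \emph{same orientation} as $L_{p,i}$ itself. Applying Lemma~\ref{lemma: constant-depth argument} with $l\le c$ (we need $c\le\lfloor(d_{\min}-1)/2\rfloor$, hence a bound like $d>2c$; the statement's $d>c^2+c$ comfortably covers this and the second application below), I can replace $L_{p,i}$ inside $L_\gamma$ by an alternative representative disjoint from all $c$ of those hyperplanes, so that after this replacement $L_\gamma$ is entirely disjoint from $UL_{p,i}U^\dagger$ \emph{on that sector}. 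But $UL_{p,i}U^\dagger$ may also touch other sectors; there, $L_\gamma$'s remaining canonical factors (the ones not of type $p,i$) again meet each hyperplane of $UL_{p,i}U^\dagger$ in at most a point — or I perform a second round of Lemma~\ref{lemma: constant-depth argument} to clean those as well (this second cleaning is why the stated bound is quadratic in $c$: we first have up to $c$ hyperplanes from $U$, and then up to $c$ more from pushing through a $c$-hyperplane object, giving $\sim c^2+c$ intersection hyperplanes to avoid). The upshot is that the support of $K_i$ collapses to at most a constant number of points — an intersection of hyperplanes of complementary/transverse orientations — which has size $O(1)<d$ and is therefore correctable. By Lemma~\ref{lemma: phase for trivial operator} each $K_i=\pm\mathbb{I}$, hence $K=\pm\mathbb{I}$.

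Having established $K=\pm\mathbb{I}$ for every pair of canonical generators (equivalently every $\overline{P}_1,\overline{P}_2$ in the generated group, since redefining by unions reduces to this), the generalized Bravyi--K\"{o}nig theorem, Lemma~\ref{theorem: generalize bk} with $j=2$, gives $U\in\mathcal{P}_2$, i.e.\ $U$ implements an encoded Clifford gate. I would also remark that the bound $d>c^2+c$ is not claimed to be tight; any bound large enough to run the (at most two) invocations of Lemma~\ref{lemma: constant-depth argument} on collections of $O(c^2)$ hyperplanes would do.

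\textbf{Main obstacle.} The delicate point is the bookkeeping of orientations and sectors when $UL_{p,i}U^\dagger$ splits across multiple sectors: I must verify that on \emph{every} sector that $UL_{p,i}U^\dagger$ reaches, either (a) $L_\gamma$'s canonical factors there already meet it only in correctable (pointlike) sets because of the orientation mismatch built into the canonical basis (Eqs.~(\ref{Eq: basis 1})--(\ref{Eq: basis 2}) and their higher-dimensional analogues), or (b) the orientation \emph{agrees}, in which case Lemma~\ref{lemma: constant-depth argument} applies to clean it — and crucially the orientation-preserving hypothesis is exactly what guarantees we are always in one of these two cases, never in a situation where a hyperplane of intermediate, uncontrollable orientation blocks the cleaning. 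Making this case analysis airtight, and confirming that the total number of hyperplanes to be avoided never exceeds $\lfloor(d-1)/2\rfloor$ under $d>c^2+c$, is the technical crux; everything else parallels the proof of Theorem~\ref{thm:transversal}.
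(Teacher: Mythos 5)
Your high-level strategy (generalized Bravyi--K\"{o}nig, choose representatives so that the group commutator lands on a correctable set) matches the paper, and your instinct to use the orientation-preserving hypothesis as the enabler of the cleaning step is exactly right. However, there are two concrete places where your plan diverges from, and is weaker than, the paper's proof.

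\textbf{The product decomposition is unnecessary and introduces a gap.} You carry over from Theorem~\ref{thm:transversal} the expansion $L_\delta = L_{x,1}\cdots L_{x,r_1} L_{z,1}\cdots L_{z,r_2}$ and the rewriting of $K$ as a product $\prod_i K_i$. That rewriting required commuting $U L_{p,1} U^\dagger$ past each $L_\gamma U L_{p,j} U^\dagger L_\gamma^\dagger$ ($j\neq 1$), which in the transversal case followed immediately because $\supp(U L_{p,1} U^\dagger) = \supp(L_{p,1})$ and the canonical factors are on disjoint hyperplanes. For constant-depth $U$, $\supp(U L_{p,1}U^\dagger)$ and $\supp(U L_{p,j}U^\dagger)$ can overlap across sectors, so $[U L_{p,1} U^\dagger,\, L_\gamma U L_{p,j} U^\dagger L_\gamma^\dagger]$ is not trivially a phase, and Lemma~\ref{lemma: constant depth support} (which bounds $[UPU^\dagger,Q]$, not $[A, CBC^\dagger]$) does not directly discard these cross-terms. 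The paper sidesteps all of this by invoking Lemma~\ref{lemma: clifford-hierachy-basis}: it suffices to take $L_1, L_2$ to be \emph{single} canonical basis elements $\overline{L}_{x_i}$ or $\overline{L}_{z_i}$, each supported on one hyperplane, and show $K_2=[UL_1U^\dagger, L_2]=\pm\mathbb{I}$. No product decomposition, no cross-terms.

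\textbf{You misidentify the source of $c^2+c$.} You attribute the quadratic to a ``second round of cleaning'' to avoid $\sim c^2+c$ hyperplanes. In the paper there is only \emph{one} cleaning step (requiring only $d > c$, done via Lemma~\ref{lemma: information set} and Corollary~\ref{corollary: info set}, not Lemma~\ref{lemma: constant-depth argument}), which makes the chosen $L_2$ disjoint from the $\leq c$ same-orientation hyperplanes of $K_1 = UL_1U^\dagger$. After that, $V = L_2|_{\supp K_1}$ has $\leq c$ qubits because the remaining intersections are with complementary-orientation hyperplanes. The quadratic term comes from the \emph{other} piece of Lemma~\ref{lemma: constant depth support}: $\supp[K_2]\subseteq \supp[K_1 V K_1^\dagger]\cup\supp[V]$, and conjugation by the constant-depth $K_1$ spreads the $\leq c$ qubits of $V$ to $\leq c^2$ qubits. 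Your write-up never addresses the $\supp[K_1 V K_1^\dagger]$ term at all, which is precisely where the $c^2$ in the hypothesis $d>c^2+c$ is spent. As a minor note, your reliance on Lemma~\ref{lemma: constant-depth argument} also asks for $d > 2c+1$ for the cleaning alone, which is not implied by $d>c^2+c$ when $c=1$; the paper's information-set cleaning only needs $d>c$.
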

\begin{figure}[ht]
    \centering
    \includegraphics[width=0.75\linewidth]{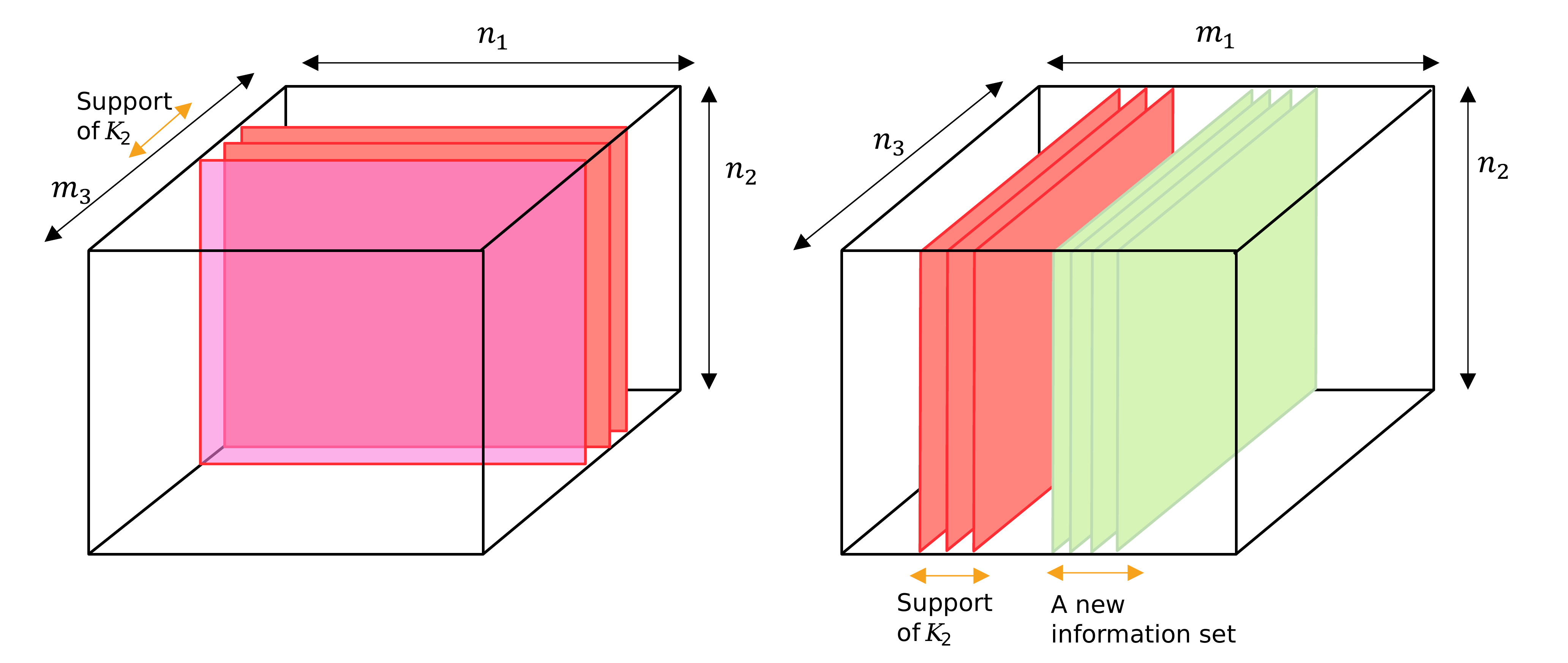}
    \caption{\textbf{Support of logical operators in two sectors of a hypergraph product code under a constant-depth circuit.} This diagram illustrates the case where both \( L_1 \) and \( L_2 \) are \( X \)-type logical operators, with the canonical representatives for $L_1$ and $L_2$ found on the left and right sector respectively, but the argument can be easily extended from here. 
    We begin with a pink hyperplane representing the support of a canonical \( X \)-type logical operator \( L_1 \). 
Under conjugation by a constant-depth circuit \( U \), the resulting operator \( K_1 = U L_1 U^\dagger \) has support on a bounded number of red hyperplanes in each sector. 
In the right sector, a new logical basis \( \mathcal{B}_2 \) is chosen such that each logical operator \( L_2 \in \mathcal{B}_2 \) is supported on a single green hyperplane that is disjoint from the red hyperplanes in that sector. 
As a result, the intersection \( \mathrm{supp}(L_2) \cap \mathrm{supp}(K_1) \) is empty, so the commutator \( K_2 = K_1 L_2 K_1^\dagger L_2^\dagger \) acts trivially on the code space. 
}

    \label{fig:orientation-preserving-map}
\end{figure}
\begin{proof}
	Let $\mathcal{Q}$ be a $t$-dim hypergraph product code with $k$ logical qubits. WLOG, assume, $r \geq l$. Assume that $U$ maps a hyperplane to at most $c = O(1)$ hyperplanes. Assume that the code has a distance $d >c^2 $. 

    Following the proof in Lemma~\ref{lemma: clifford-hierachy-basis}, we will show that for any $L_1,L_2$ in some basis set $\mathcal{B}_{i = 1,2}$, we can find logical representatives for $L_1$ and $L_2$, such that for the following equations
    \begin{align*}
        K_1 = U L_1 U^\dagger,\\
        K_2 = K_1 L_2 K_1^\dagger L_2^\dagger,
    \end{align*}
    we have that $K_2$ is supported on a correctable set, so this implies that $K_2 = \pm 1$, then $U \in \P_2$, the Clifford group. 
  
	First let $\overline{L_1}$ be any logical operator in the canonical basis set $\mathcal{B}_1 = \{ L_{x_i},L_{z_i}\}, i \in [k]$, i.e. $L_1$ is either a single qubit $X$ or $Z$ logical operator. Let $L_1$ be the canonical $X$ (or $Z$) logical representative for $\overline{L_1}$. The support of $K_1$ is thus given by at most $c$ hyperplanes of at most $r$ (or $l$) dimension. Since $U$ is orientation preserving, the orientation of hyperplanes on each sector $S_\mu$ satisfies $\mathrm{dir}_\mu(L^{(\mu)})\subseteq \mathrm{dir}(\mathrm{res}_\mu (K_1))$, where $L^{(\mu)}$ is a canonical logical $X$ (or $Z$) representative on $S_\mu$. See Figure~\ref{fig:orientation-preserving-map} of an illustration of an orientation preserving constant depth circuit $U$ on an $X$ logical representative. 
	
	We will examine the commutator $K_2 = K_1 L_2 K_1^\dagger L_2^\dagger$, with $L_2$ specified later.
    
	Since $K_1$ consists of at most $c$ hyperplanes, the number of hyperplanes with the same orientation as an $X$ (or $Z$) canonical logical representatives in any sector is at most $c$. If $d>c$, then we can use Lemma~\ref{lemma: information set} and Corollary \ref{corollary: info set} on each sector, so that we have a set of $X$ (or $Z$) logical representatives that are disjoint from the hyperplanes of $K_1$ and then we will pick the canonical $Z$ (or $X$) logical representatives. 
	
	For the new set of $X$ (or $Z$) logical representatives, there exist a choice of basis $\mathcal{B}_2$ of logical operators such that each element has a corresponding logical representative supported on a single hyperplane. This basis is not necessarily the same as the canonical one. Pick any $\overline{L}_2 \in \mathcal{B}_2$ and denote the logical representative as $L_2$.
	
	Applying Lemma~\ref{lemma: constant depth support}, the support of $K_2$ is given by:
	\begin{align*}
		\supp[K_1 V K_1^\dagger] \cup \supp[V],
	\end{align*} 
	where $V = L_2|_{\supp[K_1]}$, so that $\supp[V] = \supp[L_2] \cap \supp[K_1]$.
	
	The support of $V$ consists of at most $c$ physical qubits, because it is given by the intersection of a single $r$ (or $l$) dimensional hyperplane in one of the sectors with the $c$ hyperplanes from $K_1$.
	
	So the support of $K_2$ consists of at most $c^2 + c= O(1)$ number of physical qubits which forms a correctable subset, assuming that $c^2 +c <d$. 
	
	Since $K_2$ is a trivial logical operator for any $\overline{L}_2 \in \mathcal{B}_2$, $K_2$ must be a trivial logical operator for any logical Pauli operator in $L_1$. This implies that $K_1$ must be a Pauli operator itself. 
	
	Further $K_1 = UL_1U^\dagger$ for any $L_1$ that acts non-trivially on a single logical qubit. Since $\P_1$ forms a group, this is true if $L_1$ is any element in $\P_1$. Thus, $U$ must be in the Clifford group $\P_2$.
\end{proof}

\subsection{Single-sector constant-depth circuits}\label{subsection: single sector constant depth}

In this subsection, we study the limitations of logical gates implemented by constant-depth circuits that act within a single sector of a hypergraph product code. Building on the previous subsection, we know that orientation-preserving circuits are too restrictive to allow nontrivial gates. Here, we consider the logical gate $U$ implementable by a constant-depth circuit that satisfies Assumption \ref{assmuption0}. Unlike before, we do not require the circuit to preserve hyperplane orientations; instead, we only assume that it maps operators to other operators supported within the same sector. Our goal is to understand the constraints that this locality imposes on fault-tolerant logical gates acting on an arbitrary subset of the $k$ logical qubits.

We will adopt the previous notations in Section~\ref{sub: position} for the rest of this subsection, and apply them to a single sector of an arbitrary hypergraph product code. We describe the notations below again for convenience of reference. 

\paragraph{Set-up} Let $\mathcal{Q}$ be the $t$-dim hypergraph product code defined using the homological product of $t$ classical codes, $A^{(i)}: \mathbb{F}^{n_i}_2 \rightarrow \mathbb{F}^{m_i}_2$, encoding a total of $k$ logical qubits. Let $U$ be a logical gate implementable by a constant depth circuit that maps a hyperplane to at most $c$ hyperplanes. Let $\{\overline{L}_j\}_{j\in[\lfloor\frac{r}{l}+1\rfloor]}$ be $r/l+1$ arbitrary logical operators in $\langle \overline{L_{x_i}},\overline{L_{z_i}}\rangle$, $i\in [k]$ and let $L_i$ be the logical representative for $\overline{L}_i$ to be determined later. Let $L_i = L_{i,X}\cdot L_{i,Z}$, where $L_{i,X}(L_{i,Z})$ are the $X(Z)$ logicals of $L_i$. Let $\kappa_i=\kappa_{i,x}+\kappa_{i,z}$ denote the number of non-trivial logical operators in defining $L_i$. Let the hyperplanes for canonical $X$ and $Z$ logical representatives be $r$ and $l$-dimensional respectively, such that $r + l = t$.

$U$ is assumed to implement a logical gate on a single sector $S_{\mu =1}$, with the partitioning given by index sets $\{1,2,\cdots,r \}$ and $\{r+1,\cdots, t\}$ for components with degree one and zero respectively. WLOG let the lattice's coordinate system be $(x_1, \cdots, x_t)$, with
\begin{align*}
    &x_{i} \in [n_i], \quad 1\leq i\leq l,\\
    &x_{j} \in [m_j] , \quad l+1\leq j\leq t.
\end{align*}

Then, a canonical $X$ logical representative $L_X$ on $S_1$ has orientation given by $\mathrm{dir}(L_X) = (r+1,\cdots,t)$, while a canonical $Z$ logical representative $L_Z$ has orientation given by $\mathrm{dir}(L_Z) = (1,\cdots, r)$ (Definition~\ref{def: hyperplane}). We will also use Definition~\ref{def: orientation and associated vector} heavily for the proof below.  

\begin{theorem}\label{thm:single-sector}
Suppose a unitary operator \( U \) is implementable by a constant-depth circuit that maps logical operators supported on hyperplanes within a single sector of a \( t \)-dimensional hypergraph product code to operators supported on hyperplanes (possibly with different orientations) within the same sector. Let the canonical logical \( X \) and \( Z \) representatives be supported on hyperplanes of dimensions \( r \) and \( l \) respectively, with \( l + r = t \). Then, if the distance is at least $d=O(\kappa_{\max\{[\lfloor\frac{r}{l}\rfloor+1]\}}^{2^{\lfloor\frac{r}{l}\rfloor-1}})$, \( U \) implements an encoded gate from the \( \mathcal{P}_{\lfloor r/l \rfloor + 1} \) level of the Clifford hierarchy.
\end{theorem}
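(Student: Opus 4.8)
The plan is to iterate the generalized Bravyi--K\"onig machinery of Lemma~\ref{theorem: generalize bk} (in its single-sector, all-$k$-logical-qubits incarnation) by building the group-commutator tower $K_1 = U L_1 U^\dagger$, $K_{j+1} = [K_j, L_{j+1}]$ for $j+1 \leq \lfloor r/l\rfloor + 1$, and showing that at the last step the support of $K_{\lfloor r/l\rfloor+1}$ lands on a correctable region, so that $K_{\lfloor r/l\rfloor+1} = \pm\mathbb I$ and hence $U \in \mathcal P_{\lfloor r/l\rfloor+1}$. The key structural point, which we track inductively, is that each $K_j$ is supported on a bounded number of hypertubes inside the single sector $S_1$, and that the \emph{dimension} of these hypertubes drops by at least $l$ each time we take a group commutator with a suitably chosen $L_{j+1}$: conjugating a $d_j$-dimensional object by a constant-depth circuit keeps its dimension $\leq d_j$ by Assumption~\ref{assmuption0}, and intersecting with a fresh logical representative (which is supported on an $l$-dimensional $Z$-hyperplane, say, or an $r$-dimensional $X$-hyperplane) using Lemma~\ref{lemma: constant depth support} cuts the dimension down appropriately. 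Starting from $K_1$ of dimension at most $r$ and subtracting $l$ at each of $\lfloor r/l\rfloor$ commutator steps leaves a $0$-dimensional set — a constant number of isolated qubits — which is correctable once $d$ exceeds that constant.

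First I would set up the bookkeeping: write each $\overline L_j$ in the canonical basis as a product of $\kappa_j = \kappa_{j,x} + \kappa_{j,z}$ single-puncture canonical representatives, and at step $j$ choose the representative of $\overline L_j$ adversarially. The crucial use of classical coding theory is Lemma~\ref{lemma: constant-depth argument} (via Lemma~\ref{lemma: classical-cleaning} and Corollary~\ref{corollary: info set}): given that $K_{j-1}$ is supported on at most $N_{j-1}$ hyperplanes of the relevant orientation in $S_1$, as long as $N_{j-1}$ is below the cleaning threshold $\lfloor (d_{\min}-1)/2\rfloor$ we can replace each of the $\kappa_j$ constituent representatives of $L_j$ by an alternative one whose support is disjoint from those $N_{j-1}$ hyperplanes \emph{in their shared orientation}, while the orientations that differ from $K_{j-1}$'s contribute a genuine intersection that is itself a hypertube of dimension $\leq (\text{current dim}) - l$. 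Then apply Lemma~\ref{lemma: constant depth support}: $\operatorname{supp}[K_j] \subseteq \operatorname{supp}[K_{j-1} V K_{j-1}^\dagger] \cup \operatorname{supp}[V]$ with $V = L_j|_{\operatorname{supp}[K_{j-1}]}$, and since $V$ is supported on a lower-dimensional hypertube and $U$'s constant-depth conjugation multiplies hyperplane count by at most $c$ without raising dimension, $K_j$ is supported on at most $c\cdot N_{j-1}\cdot(\text{const}) $ hyperplanes, each of dimension at most (dim of $K_{j-1}$) $- l$. Iterating the count of hyperplanes gives a bound that grows like $\kappa_{\max}^{2^{\lfloor r/l\rfloor - 1}}$ (the doubly-exponential exponent coming from the fact that the commutator roughly squares the number of constituent pieces at each stage while $c$ is a constant), which is exactly the stated distance requirement $d = O(\kappa_{\max\{[\lfloor r/l\rfloor+1]\}}^{2^{\lfloor r/l\rfloor-1}})$; the final $0$-dimensional set then has $O(1)$ qubits and is correctable by the Union/Cleaning lemmas, whence $K_{\lfloor r/l\rfloor+1}$ is trivial and Lemma~\ref{theorem: generalize bk} delivers $U \in \mathcal P_{\lfloor r/l\rfloor+1}$.

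The main obstacle is the careful dimension-and-orientation accounting across a single commutator step when $K_{j-1}$ is a \emph{collection} of hypertubes with possibly several different orientations, not a single clean hyperplane. One has to argue that for \emph{each} orientation present in $K_{j-1}$ we can clean $L_j$ away from that orientation's hyperplanes (using the classical-code distance along the transverse lines), while for the orientations along which $L_j$ genuinely varies, the intersection $\operatorname{supp}[L_j]\cap\operatorname{supp}[K_{j-1}]$ has dimension at most $\dim(K_{j-1}) - l$ — this is where the choice of $\overline L_j$ as an $X$- versus $Z$-type (and the fact that canonical $X$ and $Z$ representatives have complementary orientations $\operatorname{dir}_\mu(\xi_{L_X}) = \overline{\operatorname{dir}_\mu(\xi_{L_Z})}$) must be invoked, and where one must verify that after $\lfloor r/l\rfloor$ steps every remaining direction has been pinned down. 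Secondarily, propagating the bound on the number of hyperplanes through $\lfloor r/l\rfloor$ rounds of (conjugate-by-$U$, multiply, restrict) must be done carefully to land on the stated $2^{\lfloor r/l\rfloor - 1}$ exponent, and one must confirm at every intermediate step that this running count stays below the cleaning threshold so that Lemma~\ref{lemma: constant-depth argument} remains applicable — this is precisely what the hypothesis on $d$ guarantees, closing the induction.
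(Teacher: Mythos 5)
Your proposal follows the paper's approach closely: the same generalized Bravyi--K\"onig commutator tower, the same use of Lemma~\ref{lemma: classical-cleaning}/Lemma~\ref{lemma: constant-depth argument} to clean $L_j$ away from the hyperplanes carrying $K_{j-1}$, the same per-step dimension drop by $l$, and the same multiplicative hyperplane-count bookkeeping feeding into the distance hypothesis. In that sense you have reconstructed the argument's skeleton correctly.

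There is, however, one genuine gap. You assert that after $\lfloor r/l\rfloor$ commutator steps one ``leaves a $0$-dimensional set --- a constant number of isolated qubits --- which is correctable once $d$ exceeds that constant.'' This is only true when $l\mid r$. In general, starting from dimension $r$ and subtracting $l$ per step, the recursion stops when $r - l\cdot i < l$, i.e.\ at $i = \lfloor r/l\rfloor$, leaving $K_{\lfloor r/l\rfloor + 1}$ supported on a constant number of hyperplanes of dimension $r \bmod l$, which is strictly less than $l$ but not necessarily $0$. For these you cannot simply invoke ``few qubits is correctable''; you need a separate argument that any constant collection of hyperplanes of dimension $< l$ inside a single sector is a correctable region. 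The paper establishes this with a dedicated lemma, proved by induction on $l$ using the product structure of the hypergraph product code and the fact that $Z$-stabilizer generators can only deform a canonical $Z$-logical within an $(s_0+r)$-dimensional hypercube, so that covering a $Z$-logical by $(s_0-1)$-dimensional hyperplanes would require at least $d$ of them, contradicting correctability. Without this lemma your terminal step does not go through whenever $l\nmid r$.

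A second, milder issue: you identify the orientation bookkeeping as ``the main obstacle'' without resolving it. The paper resolves it by explicitly splitting the hyperplanes of $K_{j-1}$ into four cases according to whether $\operatorname{dir}(\gamma)\cap\operatorname{dir}(L_X)$ and $\operatorname{dir}(\gamma)\cap\operatorname{dir}(L_Z)$ are empty or not, cleaning $L_j$ only away from the cases that share a fixed direction (Cases 1 and 2), and checking that the surviving intersections (Cases 3 and 4) land on hyperplanes of dimension at most $r-l$ and $0$ respectively; that table of cases is exactly what makes the ``drop by $l$'' claim rigorous, and you should supply it. Finally, you report the exponent $2^{\lfloor r/l\rfloor -1}$ by matching the theorem statement, but the paper's own counting lemma (which squares the hyperplane count of $K_{j-1}$ and also multiplies by the cleaned count of $L_j$) actually produces a slightly different growth rate; this is a minor discrepancy internal to the paper, and your reasoning that the commutator roughly squares the count is the right heuristic, but you should carry the bookkeeping explicitly rather than asserting it matches the advertised exponent.
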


\begin{proof}
Assume \( r > l\); the case \( l = r \) will be treated at the end.

There are at most \( k_X = \prod_{i = l+1}^t k_i^T \) hyperplanes of dimension \( r \), and \( k_Z = \prod_{j = 1}^l k_j \) hyperplanes of dimension \( l \), corresponding to the canonical logical representatives of the logical operators. Note that multiple (possibly non-constant) logical qubits may be supported on the same hyperplane.

Let \( K_1 = UL_1 U^\dagger \). Let \( \{\xi_i\}_i \) and \( \{\mu_j\}_j \) denote the sets of at most \( r \)-dimensional and \( l \)-dimensional hyperplanes that contain the support of \( UL_X U^\dagger \) and \( UL_Z U^\dagger \), respectively. Suppose \( |\{\xi_i\}| = c \kappa_{1,x} \) and \( |\{\mu_j\}| = c \kappa_{1,z} \) with \( \kappa_1 = \kappa_{1,x} + \kappa_{1,z} \) being defined in the set-up above. We assume \( \kappa_1 = O(1) \) is a constant. (Again emphasis that it is possible that a non-constant number of logical operators may be supported on each of the \( \kappa_1 \) hyperplanes.) Also note that these hyperplanes may have non-trivial intersections.

We study the support of $K_2 = K_1 L_2 K_1^\dagger L_2^\dagger$, by first carefully choosing an appropriate logical representative for $L_2$. For each hyperplane $\gamma\in \{\xi_i\}\cup \{\mu_j\}$, we first faction them into one of the following 4 possibilities:
(Recall that $\mathrm{dir}(\cdot)$ indicates the orientation of a hyperplane as in Definition~\ref{def: orientation and associated vector}).
\begin{align}
	& \mathrm{Case \; 1:\;}\mathrm{dir}(\gamma) \cap \mathrm{dir}(L_X) \neq \emptyset, \quad \mathrm{Case \; 2:\;}
	\mathrm{dir}(\gamma) \cap \mathrm{dir}(L_Z) \neq \emptyset, \\
	& \mathrm{Case \; 3:\;}\mathrm{dir}(\gamma) \cap \mathrm{dir}(L_X) = \emptyset, \quad
	\mathrm{Case \; 4:\;}\mathrm{dir}(\gamma) \cap \mathrm{dir}(L_Z) = \emptyset,
\end{align}

\begin{table}[h]\label{fig: 4 cases of intersections}
\centering
\renewcommand{\arraystretch}{1.5}
\small
\begin{tabular}{|p{2.2cm}|p{3cm}|p{3cm}|p{3cm}|p{3cm}|}
\hline
Hyperplanes $\gamma$
& \textbf{Case 1:} 
& \textbf{Case 2:} 
& \textbf{Case 3:} 
& \textbf{Case 4:} \\
&$\mathrm{dir}(\gamma) \cap \mathrm{dir}(L_X) \neq \emptyset$ 
& $\mathrm{dir}(\gamma) \cap \mathrm{dir}(L_Z) \neq \emptyset$
&$\mathrm{dir}(\gamma) \cap \mathrm{dir}(L_X) = \emptyset$
&$\mathrm{dir}(\gamma) \cap \mathrm{dir}(L_Z) = \emptyset$\\
\hline
$\gamma \in \{\xi_i\}_i$ 
& \begin{tabular}{c}
$\checkmark$ possible \\
max dim: $\leq r$
\end{tabular}
& \begin{tabular}{c}
$\checkmark$ possible \\
max dim: $\leq l$
\end{tabular}
& \begin{tabular}{c}
$\checkmark$ possible \\
max dim: $\leq r - l$
\end{tabular}
& \begin{tabular}{c}
$\checkmark$ possible \\
max dim: $0$
\end{tabular} \\
\hline
$\gamma \in \{\mu_j\}_j$ 
& \begin{tabular}{c}
$\checkmark$ possible \\
max dim: $\leq l$
\end{tabular}
& \begin{tabular}{c}
$\checkmark$ always \\
max dim: $\leq l$
\end{tabular}
& \begin{tabular}{c}
$\checkmark$ possible \\
max dim: $0$
\end{tabular}
& \begin{tabular}{c}
$\times$ not possible\\
\end{tabular}\\
\hline
\end{tabular}

\caption{Summary of splitting hyperplanes supporting $K_1$ into the 4 cases. 
Summary of factioning hyperplanes into 4 cases depending on the direction of each hyperplane $\gamma$ (from $\{\xi_i\}_i$ and $\{\mu_j\}_j$) and the directions of canonical logical operators $L_X$, $L_Z$. Only Case 4 with $\gamma \in \{\mu_j\}_j$ is not possible because $r > l$. The maximum dimension of intersection after minimizing over the possible logical representatives of $L_X$ or $L_Z$ is indicated in each case.}
\end{table}

We explain how to construct the logical representatives by the method of cleaning from each coordinate, so that for all hyperplanes $\gamma\in\{\xi_i\}\cup \{\mu_j\}$ they intersect trivially with the new logical representatives for $L_2$ if they fall into case 1 and 2, leaving $K_2$ with support on at most some $(r-l)$-dim hyperplanes and $0$-dim hyperplanes (see Table~\ref{fig: 4 cases of intersections}).

Let $V_{1} = \{\gamma \in \{\xi_i\}\cup \{\mu_j\}; \mathrm{dir}(\gamma) \cap \{r+1,\cdots,t\}\neq \emptyset \}$ (Hyperplanes in Case 1). Then we can always pick a set of at most $c\kappa_1$ (coordinate,value) pairs on the last $l$ coordinates such that for each $\gamma \in V_1$, there exist at least one pair that coincides with the value of the $\gamma$ at the same coordinate. Denote this set by $\mathrm{s}_1$, $|\mathrm{s}_1|\leq c\kappa_1$. Assume that $c\kappa_1 \leq \frac{d-1}{2}$, where $d = \min \{d_i, d_j^T\}, 1\leq i\leq r, r+1\leq j\leq t $. To find a second set of $X$ logical representatives for $\overline{L}_2$, we start with its canonical logical representative and apply Lemma~\ref{lemma: classical-cleaning} to the last $l$ coordinates, so that the $X$ logical representative is cleaned away from the coordinates in $\mathrm{s}_1$. After the cleaning, the set of hyperplanes supporting the new $X$ logical representative, denoted $\overline{L_{2,X}}$ by $\{\xi_{X,j}^{(2)}\}_j$, is disjoint from the support of $\gamma \in V_1$, since each new hyperplane differs from each $\gamma$ by at least one coordinate in their positions. Assuming qLDPC code, with weight of stabilizer generators at most $w$, there are at most $c\kappa_1\kappa_{2,X}w+\kappa_{2,X}$ $r$-dim hyperplanes with $\mathrm{dir}(L_X)$. 

The same analysis as above applies for finding the second set of $Z$ logical representatives. Since Case 4 is not possible, Case 2 contains the entire set of $\{\mu_j\}_j$. We again require $c\kappa_1 \leq \frac{d-1}{2}$, where $d = \min \{d_i, d_j^T\}, 1\leq i\leq r, r+1\leq j\leq t $. $L_{2,Z}$ is supported on $\{\xi_{Z,j}^{(2)}\}_j$
and there are at most $c\kappa_1\kappa_{2,Z}w+\kappa_{2,Z}$ $l$-dim hyperplanes with $\mathrm{dir}(L_Z)$. 

Next, we show that \( K_2 \) is supported on hyperplanes of dimension at most \( r-l\), using the following lemma. 

\begin{lemma}
    \( K_2 \) is supported on hyperplanes contributed from the union of the following four terms:
\begin{align}
    &\supp([UL_{1,X}U^\dagger, L_{2,X}]) \quad \supp([UL_{1,Z}U^\dagger, L_{2,X}]) \notag \\
    &\supp([UL_{1,X}U^\dagger, L_{2,Z}]) \quad\supp([U L_{1,Z}U^\dagger, L_{2,Z}]). \notag
\end{align}
\end{lemma}
\begin{proof}

    \begin{align*}
        \supp(K_2) &= \supp(K_1 L_{2,X}L_{2,Z}K_1^\dagger L_{2,Z}L_{2,X})\\
        &=\supp(K_1 L_{2,X}K_1^\dagger L_{2,X}L_{2,X}K_1 L_{2,Z}K_1^\dagger L_{2,Z}L_{2,X})\\
        &\subseteq\supp(K_1 L_{2,X}K_1^\dagger L_{2,X})\cup \supp(L_{2,X}K_1 L_{2,Z}K_1^\dagger L_{2,X} L_{2,Z}).
    \end{align*}
Conjugation by a 1-constant circuit like $L_{2,X}$ does not increase the support, so we get.
\begin{align*}
    \supp(K_2) &\subset \supp(K_1 L_{2,X}K_1^\dagger L_{2,X})\cup \supp(K_1 L_{2,Z}K_1^\dagger L_{2,Z})\\
    &=\supp([K_1, L_{2,X}]) \cup \supp([K_1, L_{2,Z}]).
\end{align*}

Furthermore, using $K_1 = (UL_{1,X}U^\dagger)(UL_{2,Z}U^\dagger)$ and noting that $[UL_{1,X}U^\dagger,UL_{2,Z}U^\dagger]= \pm 1$, we get: 
\begin{align*}
    &\supp(K_1 L_{2,X}K_1^\dagger L_{2,X})\\
    =& \supp((UL_{1,X}U^\dagger)(UL_{1,Z}U^\dagger)L_{2,X}(UL_{1,Z}U^\dagger)^\dagger(UL_{1,X}U^\dagger)^\dagger L_{2,X})\\
    \subseteq& \supp([UL_{1,X}U^\dagger,L_{2,X}])\cup \supp([UL_{1,Z}U^\dagger, L_{2,X}])\cup\supp({L_{2,X}}).
\end{align*}

$\supp(UL_{1,X}U^\dagger)$ is given by $\{\xi_i\}_i$ while $\supp(UL_{1,Z}U^\dagger)$ is given by $\{\mu_j\}_j$. Similarly, we also obtain: 
\begin{align*}
    &\supp(K_1 L_{2,Z}K_1^\dagger L_{2,Z})\\
    \subseteq& \supp([UL_{1,X}U^\dagger,L_{2,Z}])\cup \supp([UL_{1,Z}U^\dagger, L_{2,Z}])\cup\supp({L_{2,Z}})
\end{align*}
In our analysis, we will always include in the support of $L_{2,X}$ and $L_{2,Z}$, thus we will simply consider the support of $K_2$ as given by these four group commutators. 
\end{proof}

The first two terms is about the support of the intersection with $L_{2,X}$. Note that any hyperplane in $V_1$ will not intersect with $L_{2,X}$ (eliminating Case 1). We denote by $T \subset \{\xi_i\} \cup \{\mu_j\}$ as the subset of hyperplanes satisfying Case 3, where we can show that 
the maximum possible dimension of hyperplanes from intersection is $r-l$ and $0$ (as indicated in the table~\ref{fig: 4 cases of intersections}). For any $\gamma \in T$, consider $\supp{[\gamma]}\cap \supp[L_X]$. Since $\mathrm{dir}{(\gamma)}\subseteq \{1,\cdots, r\}$ and $\mathrm{dir}(L_X)=\{r+1,\cdots t\}$, $\{r+1\cdots t\}\cup \mathrm{dir}(\gamma)\subseteq \mathrm{dir}(\supp{[\gamma]}\cap \supp[L_X])$, which means that the size of this set is at least $t-r +l= 2l$. It is clear that if $\gamma \in \{\mu_j\},$ then $\mathrm{dir}(\gamma) = \{1,\cdots r\}$ so $\mathrm{dir}(\supp{[\gamma]}\cap \supp[L_X])= [t]$. For the last two terms, the third term only contributes $0$-dim hyperplanes which are individual physical qubits. For the last term, we note that $\mathrm{dir}{(\gamma)}$ must intersect non-trivially with $\mathrm{dir}(L_{2,Z})$, so it is an empty set. 

With our assumption that the constant depth circuit doesn't increase the dimension of hyperplanes, so $K_2$ must be supported on at most $(r-l)$-dim hyperplanes.

Next, we bound the total number of hyperplanes for $K_2$. We first show the following lemma.

\begin{lemma}
    Suppose $K_j = K_{j-1}L_jK_{j-1}^\dagger L_j$, with $L_j$ being some logical operator and $K_j$ defined recursively from $K_1 = UL_1U^\dagger$, then the number of hyperplanes supporting $K_j$ is given by $c \cdot (\text{number of hyperplanes for }L_j) \cdot (\text{number of hyperplanes for }K_{j-1}))$ and the dimension is the same as the intersection between the support of $L_j$ and $K_{j-1}$. 
\end{lemma}
\begin{proof}
    Let $L_1$ be a logical operator supported on a single hyperplane.
    $K_1 = UL_1U^\dagger$ is supported on $c$ hyperplanes, using the property of $U$. 
    \begin{align*}
        K_2 &= K_1 L_2 K_1^\dagger L_2^\dagger\\
        &= K_1 (L_2|_{\supp[K_1]})K_1^\dagger L_2^\dagger 
    \end{align*}
    \begin{align*}
        \supp[K_2] &\subseteq \supp[K_1 (L_2|_{\supp[K_1]})K_1^\dagger]\cup \supp[L_2]\\
        &= \supp[UL_1 U^\dagger(L_2|_{\supp[K_1]})U L_1 U^\dagger]\cup \supp[L_2]\\
    \end{align*}
    For the first term we get:
    \begin{align*}
        &\supp[U L_1U^\dagger(L_2|_{\supp[K_1]\cup \supp[L_1]})UL_1 U^\dagger]\\
        \subseteq& \supp[UL_1((U^\dagger L_2U)|_{\supp[K_1]\cup\supp[L_1] })L_1U^\dagger]\\
        \subseteq& \supp[L_2] \cup \supp[U(L_1U^\dagger L_2UL_1)|_{\supp[L_1]}U^\dagger]
        \end{align*}
This is equal to two terms: On the region outside of the support of $L_1$, we get a cancellation of $U$ and $U^\dagger$. The second term is from the intersection of the support of $U^\dagger L_2U$ with $L_1$ where $U$ acts non-trivially. 

If we bound the number of hyperplanes, then this is given by:
\begin{equation*}
    (c\cdot (\text{Number of hyperplanes in }L_2)) \cdot (\text{Number of hyperplanes in }L_1) \cdot c,
\end{equation*}
where the first term comes from the spread of $L_2$ under conjugation by $U$, and second term comes from intersection with $L_1$ and the third term comes from the last conjugation by $U$. To express this in terms of number of hyperplanes in $K_1$, we simply note that $\text{number of hyperplanes in } K_1=(\text{Number of hyperplanes in }L_1) \cdot c $:
\begin{equation*}
   (\text{Number of hyperplanes in }K_2)= (\text{Number of hyperplanes in }L_2) \cdot (\text{Number of hyperplanes in }K_1)\cdot c.
\end{equation*}
Further, since $U$ is dimension preserving, the dimension of hyperplanes in $K_2$ can be obtained from the dimension of the intersection between $K_1$ and $L_2$.

Next for the counting of the number of hyperplanes in $K_j= K_{j-1}L_j K_{j-1}^\dagger L_j^\dagger$. We first re-express the term.
\begin{align*}
    \supp[K_j] \subseteq  \supp[K_{j-1}L_j|_{\supp[K_{j-1}]}K_{j-1}^\dagger]\cup\supp[L_j].
\end{align*}
With the first term rexpressed as below:
\begin{align*}
&\supp[K_{j-1}L_j|_{\supp[K_{j-1}]}K_{j-1}^\dagger] \\
    =&\supp[(K_{j-2}L_{j-1}K_{j-2}^\dagger L_{j-1})L_j|_{\supp[K_{j-1}]}(L_{j-1}K_{j-2}L_{j-1}K_{j-2}^\dagger)]\\
    \subseteq& \supp[K_{j-2} \left[  L_{j-1}  \left(  K_{j-2} L_j K_{j-2}\right)|_{\supp[L_{j-1}]}  L_{j-1}\right]  K_{j-2}^\dagger] 
\end{align*}
Counting the number of hyperplanes from inner terms to outer terms:
\begin{align*}
    &(c \cdot (\text{number of hyperplanes in } L_j))\cdot(\text{Number of hyperplanes in } L_{j-1})\cdot c\\
    =& c \cdot (\text{number of hyperplanes in } L_j)) \cdot (\text{number of hyperplanes in } K_{j-1})
\end{align*}
On the first line, the first term is obtained from the spread of $L_j$ by $K_{j-2}$, the second term is obtained from the intersection with $L_{j-1}$ and the last term is obtained from the outermost conjugation by $K_{j-2}$. The second line is obtained by induction hypothesis on the number of hyperplanes in $K_{j-1}$. 
\end{proof}

Using the above lemma, we now count the number of hyperplanes supporting $K_2$. First taking intersection between $L_{2,X}$ and $\{\xi_i\}\cup \{\mu_j\}$: the intersection with $c\kappa_{1,X}$ $r$-dim hyperplanes $\{\xi_i\}$ gives $(c\kappa_1\kappa_{2,X} w +\kappa_{2,X})\cdot (c\kappa_{1,X})$ at most $(r-l)$-dim hyperplanes, while the intersection with $c\kappa_{1,Z}$ $l$-dim hyperplanes $\{\mu_j\}$ gives $(c\kappa_1 \kappa_{2,X}w +\kappa_{2,X})\cdot (c\kappa_{1,Z}) $ at most $0$-dim hyperplanes. On the other hand, for the intersection between $L_{2,Z}$ and $\{\xi_i\}\cup \{\mu_j\}$, it yields $(c\kappa_1\kappa_{2,Z} w +\kappa_{2,Z}) \cdot c\kappa_{1,X}$ $0$ dimensional hyperplanes. 

In total, we obtain at most $(c\kappa_1\kappa_{2,X} w +\kappa_{2,X})\cdot (c\kappa_{1,X})\cdot c\leq c^2 (c\kappa_1^2\kappa_2 w + \kappa_2\kappa_1)$ $(r-l)$-dim hyperplanes and $c^3\kappa_1^2\kappa_2 w + 2c^2(\kappa_1\kappa_2)$ $0$-dim hyperplanes after taking into account spreading to at most $c$ hyperplanes of the same dimension.

Calculations given below:
\begin{align*}
    &(c\kappa_1 \kappa_{2,X}w +\kappa_{2,X})\cdot (c\kappa_{1,Z})+(c\kappa_1 \kappa_{2,Z}w +\kappa_{2,Z}) \cdot c\kappa_{1,X}\\
    =&c^2\kappa_1^2\kappa_2 w + c(\kappa_{2,X}\kappa_{1,Z}+\kappa_{1,X}\kappa_{2,Z})\\
    \leq& c^2\kappa_1^2\kappa_2 w + 2c(\kappa_1\kappa_2)
\end{align*}

Next consider $K_3 = K_2 L_3 K_2^\dagger L_3^\dagger$. We will be able to obtain a nice recursive formulae after $K_3$ and the recursion ends when the support is on hyperplanes with dimension less than $l$.

As before, we can divide cases into 
\begin{align}
	& \mathrm{dir}(\xi) \cap \mathrm{dir}(L_X) \neq \emptyset, \quad
	\mathrm{dir}(\xi) \cap \mathrm{dir}(L_Z) \neq \emptyset, \\
	& \mathrm{dir}(\xi) \cap \mathrm{dir}(L_X) = \emptyset, \quad
	\mathrm{dir}(\xi) \cap \mathrm{dir}(L_Z) = \emptyset.
\end{align}
for hyperplanes $\xi$ in $K_2$. The analysis is now same to the previous case. We start with the canonical logical representatives for $L_3$ and deform using the results of Lemma~\ref{lemma: classical-cleaning} to form an alternative set of logical representatives. Since there are at most $\alpha= c^2\kappa_1\kappa_2 (3 + 2c\kappa_1 w)$ (i.e. total number of hyperplanes from $K_2$)values at each coordinate. Assuming $\frac{d-1}{2}>\alpha$, Lemma~\ref{lemma: classical-cleaning} can be applied. 

After cleaning, the new logical representatives for $L_{3,X}$ is supported on at most
\begin{equation*}
\alpha \kappa_{3,X}w +\kappa_{3,X}.
\end{equation*}
And the logical representatives for $L_{3,Z}$ is supported on at most:
\begin{equation*}
\alpha\kappa_{3,Z} w +\kappa_{3,Z}.
\end{equation*}

Taking intersection as before, and after spreading, $K_3$ is supported on at most 
\begin{align*}
&\mathrm{spread }\times \{\mathrm{\#\; of \; }L_{3,X}\}\times \{\# \mathrm{\; of \;} (r-l)\mathrm{-dim\; hyperplanes}\}\\
    =&c\cdot (\alpha \kappa_{3,X}w +\kappa_{3,X})\cdot (c^2 (c\kappa_1^2\kappa_2 w + \kappa_2\kappa_1))\\
    =&O(\kappa^7)
\end{align*}
$r-2l$-dim hyperplanes, and the rest of the terms do not intersect. The $l$-dim hyperplanes from $L_{3,Z}$ do not intersect with $K_2$ by construction, and the hyperplanes from $L_{3,X}$ can only intersect with $(r-l)$-dim hyperplanes if $r-l\geq l$. 

Next, we can recursively do dimension reduction: If $K_{i}$ is supported on at most $\alpha(\kappa_{\max\{[i]\}})$ $(r-(i-1)l)$-dim hyperplanes, then $K_{i+1}$ is supported on at most $\alpha^2(\kappa_{\max\{[i+1]\}})\cdot \kappa_{i+1}\cdot c$ $(r-l\cdot i)$-dim hyperplanes. We end the recursion when $r-l\cdot i<l$.

In summary, we will need $\frac{d-1}{2}>$ for distance.
The number of hyperplanes of $<l$ dim in the end is given by
\begin{equation*}
    O(\kappa_{\max\{[\lfloor\frac{r}{l}\rfloor+1]\}}^{3^{\lfloor\frac{r}{l}\rfloor}}).
\end{equation*}
The results are valid only if we can perform cleaning according the Lemma~\ref{lemma: classical-cleaning} which requires the minimum distance $d = \min \{d_i, d_j^T\}, 1\leq i\leq r, r+1\leq j\leq t $ of the code to scale as
\begin{equation*}
    (d-1)/2> O(\kappa_{\max\{[\lfloor\frac{r}{l}\rfloor+1]\}}^{3^{\lfloor\frac{r}{l}\rfloor-1}}).
\end{equation*}
Assume $\kappa_i$ is a constant, then this is $O(1)$ number of hyperplanes with dimension less than $r$. From here, we can recursively deduce that $U$ must implement a logical gate in the $\lfloor r/l\rfloor +1$-th level of the Clifford hierarchy. 
\end{proof}

We now show that any $O(1)$ set of $<l$ dimensional hyperplanes forms a correctable subset of physical qubits. A similar proof for hypertubes is given in Lemma \ref{lemma: kunneth-hpc-correctable}.

\begin{lemma}\label{lemma:correctable_hyperplanes}
    Let $d=\min \{d_i,d_i^T\}$, where $1\leq i\leq t$, and let $R$ denote a region consisting of $<d$ hyperplanes of (possibly different) orientations with dimension at most $l-1$ (assuming $l<r$) supported on a single sector,  then $R$ is a correctable region. 
\end{lemma}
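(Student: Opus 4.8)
The plan is to show that $\Lambda$ fully supports no nontrivial logical operator; correctability then follows from the Cleaning Lemma (Lemma~\ref{lemma: correctability-stabilizer-code}). Since the code is CSS, a nontrivial logical supported on $\Lambda$ would have a nontrivial $X$- or $Z$-component supported on $\Lambda$, so it suffices to handle the two types separately. Moreover, replacing each seed map $A_i$ by its transpose interchanges the two types (an $X$-cocycle of $\mathcal{F}_\bullet$ becomes a $Z$-cycle of the transposed complex) and swaps the roles of the degree-one direction set $J_\mu$ (of size $l$) with its complement (of size $r$); since the hyperplanes of $\Lambda$ have dimension $\leq l-1$, which is $<l$ and also $\leq r-2<r$ by the hypothesis $l<r$, the same argument covers both cases, so I will carry out the $Z$-case. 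In fact I will prove the stronger claim that $\Lambda$ supports no nonzero $Z$-cycle.

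I will use the fiber characterization of cycles supported in one sector: if $J_\mu\subseteq[t]$ is the degree-one direction set of $S_\mu$, then an operator $z$ supported in $S_\mu$ satisfies $\partial_l z=0$ iff for every $k\in J_\mu$ each one-dimensional fiber of $z$ in direction $k$ is a codeword of the classical code $\ker A_k$, whose distance $d_k\geq d$. (This is immediate from the graded Leibniz rule, since the images of $z$ under $\partial_l$ land in pairwise distinct sectors of $\mathcal{F}_{l-1}$ and so must vanish separately.) Now let $z$ be a $Z$-cycle with $\supp(z)\subseteq\Lambda=H_1\cup\cdots\cup H_m$, $m<d$. If a direction-$k$ fiber line $\ell$ is contained in no $H_j$, then $|\Lambda\cap\ell|\leq m<d\leq d_k$, so the codeword $z|_\ell$ must vanish; hence $\supp(z)\subseteq\bigcap_{k\in J_\mu}B_k$, where $B_k$ is the union of those $H_j$ that vary in direction $k$.

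The heart of the argument is a dimension-drop step. Each $H_j$ varies in at most $l-1<|J_\mu|$ directions, hence is fixed in some direction of $J_\mu$; those $H_j$ fixed in all of $J_\mu$ contribute to no $B_k$ and may be discarded. I claim every point of $\bigcap_k B_k$ then lies in $H_i\cap H_j$ for two distinct indices with $\dim(H_i\cap H_j)\leq(\max_j\dim H_j)-1$: given such a point $x$, pick $k_0\in J_\mu$ and $H_{j_0}\ni x$ varying in $k_0$; since $H_{j_0}$ does not vary in every direction of $J_\mu$ there is $k_1\in J_\mu$ along which $H_{j_0}$ is fixed; any $H_{j_1}\ni x$ varying in $k_1$ is then distinct from $H_{j_0}$, and $H_{j_0}\cap H_{j_1}$ varies only in the directions common to both, a set strictly contained in the varying directions of $H_{j_1}$. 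Consequently, if a $Z$-cycle is supported on a union of $N$ hyperplanes of dimension $\leq\delta$ (with $1\leq\delta\leq l-1$), it is in fact supported on a union of at most $\binom{N}{2}$ hyperplanes of dimension $\leq\delta-1$.

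Iterating this at most $l-1$ times reduces to the base case, $z$ supported on a union of $N_0$ individual qubits: each direction-$k$ fiber meets this support in $\leq N_0$ qubits, so $z|_\ell$ is a weight-$\leq N_0$ codeword of $\ker A_k$, which vanishes once $N_0<d$; hence $z=0$. Unwinding, $z=0$ whenever $d$ exceeds a fixed polynomial in $m$ of degree $2^{l-1}$ (with $l\leq t=O(1)$), a condition guaranteed by the distance hypothesis under which this lemma is invoked inside the proof of Theorem~\ref{thm:single-sector}. I expect the step requiring the most care to be exactly this control of the recursion — the pairwise-intersection bound squares the hyperplane count at each of the $\leq l-1$ levels, so the faithful statement needs $d$ polynomially larger than the number of hyperplanes rather than merely larger than it — with the remaining content being the routine verification of the fiber characterization and of the CSS/transpose reductions.
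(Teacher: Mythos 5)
Your route is genuinely different from the paper's. The paper proves this by induction on the dimension $l$ of the canonical $Z$ logical representative, invoking the product structure of canonical logicals and arguing via stabilizer deformation that a nontrivial $Z$-logical restricted to a single sector must be covered by (at least) $d$ disjoint hypercubes, each requiring its own hyperplane. Your argument instead works directly with the cycle/cocycle conditions: the observation that $\partial_l z = 0$ for $z$ supported in a single sector forces every direction-$k$ fiber ($k$ in the degree-one index set) to be a codeword of $\ker A_k$ is clean and correct, and the ensuing reduction that $\operatorname{supp}(z) \subseteq \bigcap_k B_k$ is a tidy piece of algebra that avoids any discussion of stabilizer equivalences. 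The CSS split and the transpose duality covering the $X$-cocycle case are also sound. So the skeleton of your proof is rigorous and, in the places where it is airtight, arguably cleaner than the paper's.

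The gap is exactly where you flag it, and it is a real one relative to the lemma as written. The pairwise-intersection dimension-drop step replaces $N$ hyperplanes of dimension $\le \delta$ by up to $\binom{N}{2}$ hyperplanes of dimension $\le \delta-1$, and the fiber-vanishing argument at level $j$ requires the current hyperplane count to stay below $d$. Unwinding over $\le l-1$ levels, you need $m^{2^{l-1}} < d$ rather than $m < d$. The lemma as stated claims correctability for any $m < d$ hyperplanes, so your argument establishes a strictly weaker statement. This is not a fatal flaw for the downstream application (Theorem~\ref{thm:single-sector} invokes the lemma under a distance hypothesis that is a large polynomial in the hyperplane count, easily accommodating the $2^{l-1}$ exponent since $l \le t = O(1)$), but it does not recover the lemma's clean bound, and you should not present it as doing so. One thing worth noting in your favor: the paper's own inductive proof is sketchy on exactly the point your approach is careful about --- it asserts that one hyperplane per disjoint hypercube is needed without ruling out that a single $(l-1)$-dimensional hyperplane varying in the fixed $x_{r+1}$ direction could span many such hypercubes at once, so the claimed $\ge d$ lower bound on the hyperplane count is not fully justified there either. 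If you want to close your gap, the key missing idea is an argument that avoids squaring the count at each level, e.g.~by exploiting that each surviving hyperplane already participates in a fiber-covering structure rather than treating pairwise intersections generically.
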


\begin{proof}
We prove the claim by induction on the dimension $l$ of a canonical logical $Z$ representative which may be supported on multiple hyperplanes. 

As before, let $t$ be the dimension of the hypergraph product code and let the canonical $X$ and $Z$ logical representatives be supported on $r$ and $l$ dimensional hyperplanes respectively, so that $l+r =t$. Without loss of generality, We focus on a single sector. The case for multiple sectors easily follows by simply verifying for each logical operator that can be supported on a single sector. Let the canonical coordinates specifying the X-type hyperplanes be $(x_{r+1},\cdots x_{l+r =t})$ and those specifying the $Z$ type hyperplanes be $(x_1,\cdots, x_r)$.

We prove by contradiction. Assume there is a nontrivial $Z$ logical operator that can be supported on only $O(1)$ many $l-1$-dim hyperplanes, and that its canonical logical representative is fully supported on a single sector. Then, we know that these two representatives must be equivalent up to some $Z$ stabilizers. Thus, the idea of the proof is to start with the canonical logical representative and show that it is not possible using any stabilizer deformation, thereby ruling out a nontrivial $Z$ logical operator supported on only $O(1)$ hyperplanes of dimension at most $l-1$. 

Base case $(r=1)$: The canonical $Z$ logical representatives are supported on 1-dim hyperplanes, and we are given $O(1)$ 0-dimensional hyperplanes. If this is less than code distance $d$, then any $O(1)$ 0-dimensional hyperplanes (qubits) must form a correctable region.

Inductive step: Assume that the statement is true for $l = s_0$: Any $O(1)$ $s_0-1$-dim hyperplanes is correctable. We show that any  $Z$ logical operator supported on some $l = s_0+1$-dim hyperplanes cannot be supported on $O(1)$ $s_0$-dim hyperplanes. 

Recall that canonical logical representative in a hypergraph product code have product structure: the canonical logical $Z$ operator corresponds to a tensor product of classical codewords along the $l$ axes, with fixed values on the complementary $r$ coordinates. Fixing the coordinates \( (x_{r+1}, \dots, x_{t - s_0}) \), the restriction of the canonical logical \( Z \) operator to the remaining \( s_0 \) directions \( (x_{t - s_0 + 1}, \dots, x_t) \) defines an \( s_0 \)-dimensional codeword (with some thickness) arising from the product of the corresponding \( s_0 \) classical codes. 

Now consider deforming this representative with the $Z$ stabilizer generators. The generators are those naturally given by the classical codes $A_i^T$. These generators, when restricted to one sector, form sets of 1-dim lines on the $t$-dim lattice, with each line supporting a copy of stabilizers from $A_i^T$. Each generator thus acts on a 1-dim line aligned along a coordinate axis that is perpendicular to the $l$ coordinates that support a canonical $Z$ logical representative. Thus, using this set of stabilizer generators, the codeword can only be deformed within a $(s_0+r)$-dim hypercube. This consideration is general enough, in that we don't need to consider other $Z$ stabilizers that do not have the same values on the coordinates $(x_{r+1},\cdots, x_{t-s_0})$, which will only increase the support of the deformed $1$-dim codeword. 

At this point there are two possibilities:
\begin{enumerate}
    \item We can support this $s_0$-dim codeword with a $s_0$-dim hyperplane that is fully supported on this hypercube.
    \item Otherwise, all the $s_0$-dim hyperplanes are arranged such that only at most $s_0-1$-dim hyperplanes are fully supported on this hypercube.
\end{enumerate}
In case (2), the induction hypothesis implies that this will require at least $d > O(1)$ $(s_0-1)$-dim hyperplanes which will lead to a contradiction, so only case (1) can hold. 

For each of the coordinates $(x_{r+1},\cdots, x_{t-s_0})$, we have a disjoint $s_0+r$-dim hypercube, which requires at least one $s_0$-dim hyperplane, so this will require at least $d^{t-s_0-r}$ $s_0$-dim hyperplanes. Hence, we cannot support a logical $Z$ with constant number of $s_0$-dim hyperplanes.
\end{proof}

\subsection{Constant-depth circuits across sectors}\label{subsection: constant-depth-general-case}

In known practical context, such as the recent construction of logical $\CCZ$ operators for the hypergraph product codes~\cite{golowich2024quantumldpccodestransversal, Breuckmann2024Cups, Lin2024transversal}, the application of constant-depth circuits typically acts across different sectors. We now show that a similar bound can be applied in this most general setting. A key distinction in this subsection is that we will deal with stabilizer deformations that would introduce undesirable support across all sectors, and thus the proof techniques developed for this setting are different from the previous cases. 
We take inspiration from the notions of \emph{robustness} or \emph{bipuncturability} of the classical parity check matrices from~\cite{burton2020}. More precisely, we show that in Lemma~\ref{lemma: information set}, a classical parity check matrix $A^{(i)}$ is always bipuncturable if $\min(d_i. d^T_i) > \max(k_i, k^T_i)$. The proof technique in this section is thus of independent interest, which can also be seen as an alternative proof for the previous subsections.

In what follows, we consider homological products of classical parity checks whose distance is much larger than the code dimension. That is, given $\mathcal{A}^{(i)}$, 
we require that $k_i / d_i = o(1)$ and $k^T_i / d^T_i = o(1)$. A famous example is the repetition code where the transpose code equals to itself with code parameters $[n_i, 1, n_i-1]$. The tensor product of this repetition code gives toric codes in general $t$-dimensions. 
Let the homological product codes constructed from these classical codes, then we impose the following assumption, for any $i \in [t]$ 
\begin{align}\label{eq: k-d-scaling-classical}
    \max(k_i, k_i^T)  = o(\min(d_i, d^T_i)).
\end{align}
We need this assumption to guarantee that we can always find a set of punctures that is disjoint from the previous set(s). It is possible that for specific codes, one can relax the assumption, that is when $k > d$, one might still be able to find a new set of punctures.   

In Section \ref{sub: position}, we colloquially say that a set of hyperplanes with the same orientation is a hypertube. We now define it formally for further use.

\begin{definition}[Hypertubes]\label{def:hypertube} 
    An \emph{$r$-dimensional hypertube} $\bvec{\xi}$ is represented by a vector in $\mathcal{F}_r(\mathcal{A})$ such that there exists a sector $S_\mu$ with precisely r coordinate indices, $i_1, \cdots ,i_r \in [t]$, satisfying $|\operatorname{res}_\mu(\bvec{\xi})_{i_a}| \geq \min(d_i, d^T_i)$ while for the other sectors $S_\nu \neq S_\mu$, there are at most $r$ coordinate indices $j_1, \cdots, j_r$ such that $|\operatorname{res}_\nu(\bvec{\xi})_{j_b}| \geq \min(d_{j_i}, d^T_{j_i})$. 
\end{definition}

We similarly define the \emph{orientation} or \emph{direction} of the hypertube $\bvec{\xi}$ in sector $S_\mu$ to be $\operatorname{dir}_\mu(\bvec{\xi}) = b_\mu$ where $b_\mu$ is the set of $t - r$ coordinate indices such that $|\operatorname{res}_\mu(\bvec{\xi})_i| < \min(d_i, d^T_i)$. As a reminder, we also treat the hypertube as the set of coordinates with nonzero components in $\bvec{\xi}$. It is a thin set of hyperplanes normal to the direction $b_\mu$.

\begin{definition}[Support map] 
	We denote the support map of $U$ by $\Phi_U$ such that for any vector $\bvec{x} \in \mathcal{F}_r (\mathcal{A})$
    \begin{align}
        \Phi_U(\bvec{x}) := \operatorname{supp}(\bvec{x}) \cup \operatorname{supp}(U\bvec{x}U^\dagger). 
    \end{align}
    Again, the right-hand side can be either understood as the union of supporting sets or as vectors in $\mathcal{F}_r (\mathcal{A})$ with nonzero components indicating the support.
\end{definition}

\begin{definition}[Shape-preserving constant-depth circuit] \label{def: shape-preserving}
	We say that the logical operator $U$ is \emph{shape-preserving} if for each $r$-dimensional hypertube $\bvec{\xi}$, 
	\begin{align}
		\Phi_U(\bvec{\xi}) = \bigcup_{\sigma \in S_t} \Phi_{U, \sigma}(\bvec{\xi}),
	\end{align}
	where $\Phi_{U, \sigma}(\bvec{\xi})$ is a supporting set defined as follows. We let $S_t$ be the permutation group on the index set $[t]$.
	For each sector $S_\mu$, $\Phi_{U, \sigma}(\bvec{\xi})$ has coordinate components satisfying $|\operatorname{res}_\mu(\Phi_{U, \sigma}(\bvec{\xi}))_{i} | =  O(|\operatorname{res}_\mu(\bvec{\xi})_{\sigma^{-1}(i)} |)$ for any $i \in [t]$. 
\end{definition}

In other words, the weight of nonzero components indexed by $i$ is bounded by that of $\operatorname{res}_\mu(\bvec{\xi})$ indexed by $\sigma^{-1}(i)$. With the control on weights, $\Phi_{U, \sigma}(\bvec{\xi})$ is still an $r$-dimensional hypertube with a similar shape as $\bvec{\xi}$. Moreover, we precisely know that it has the orientation $\mathrm{dir}_\mu(\Phi_{U, \sigma}(\bvec{\xi})) = \sigma(b_\mu)$. Finally, the union $\Phi_U(\bvec{\xi})$ contains all possible $\Phi_{U, \sigma}(\bvec{\xi})$'s with nonzero components being controlled by different coordinates. Note that since $t$ is a constant, the union over $S_t$ is still a (large) constant.

\begin{remark}
	The assumption in Eq.~\eqref{eq: k-d-scaling-classical} implies that $k_i, k^T_i$ as subleading terms compared with the distance of the classical parity checks. Intuitively, definition \ref{def: shape-preserving} says that applying $U$ to a hypertube may permute its orientation and send it to new seectors but in a controlled way with similar thickness in each sector and orientation. Thus, the shape-preserving condition allows us to treat these subleading terms in a similar footing as $O(1)$ contributions by the constant-depth circuits. The proof can be straightforwardly adapted to the cases without the shape-preserving condition, where we require more stringent rate and distance scaling from classical parity checks, namely, 
\begin{align}
    \max(k_i, k_i^T)^{g(t, l)} = o(\min(d_i, d^T_i))
\end{align}
where the exponent may depend on the dimensionality $t$ of the chain complex as well as the level $l$ on which we place our physical qubits. 
\end{remark}

It is useful to write a more explicit form of hyperplanes. We define the coordinate form of any hyperplane. Taking the homological products of classical parity checks $\mathcal{A}$ with $\mathcal{A}^{(1)} \otimes \cdots \otimes \mathcal{A}^{(t)}$. We denote a \emph{canonical coordinate} as a Cartesian product of index sets $[i_1, i_2, \cdots, i_t] \in \mathcal{I}_{l}$, where $\mathcal{I}_l$ contains all the tensor product of vector spaces (sectors) associated with $\mathcal{F}_l[\mathcal{A}]$. As in Section \ref{sub: position},
we fix one sector by $S_\mu$ 
For example, suppose that $\mu = \mathcal{A}^{(1)}_{1} \times \cdots \times\mathcal{A}^{(l)}_{1} \times \mathcal{A}^{(l+1)}_{0} \times \cdots \times \mathcal{A}^{(t)}_{0}$. Then we say the canonical coordinate restricted onto $S_\mu$, 
\begin{align}
    \operatorname{res}_\mu(\mathcal{I}) \cong \mathbb{F}^{n_1}_2 \otimes \cdots \otimes \mathbb{F}^{n_l}_2 \otimes \mathbb{F}^{m_{l+1}}_2 \cdots \otimes \mathbb{F}^{m_t}_2.
\end{align}

\begin{definition}[Simple vector]
    Any vector $\bvec{x} \in \mathcal{F}_l[\mathcal{A}]$, $\bvec{x} = [x_1, \cdots, x_t] \in \mathcal{I}_l$ is said to be \emph{simple} if when restricts to any $S_\mu$, $\mathrm{res}_{\mu}(\bvec{x})$ is a simple tensor product. We define $\mathrm{res}_\mu (\bvec{x})_i = \mathrm{res}_\mu(x_i)$ its $i$-th component in the tensor product.
\end{definition}

\begin{example}
    Consider a $2$-dimensional hypertube on a $3$-dimensional hypergraph product code $\bvec{\xi} = [\xi_1, \xi_2, \xi_3]$ such that it is only nonvanishing at the sector $S_1 = \mathbb{F}^{m_1 \times m_2 \times n_3}_2$  where $|\operatorname{res}_1(\bvec{\xi})_1| = |\operatorname{res}_1(\xi_1)| \geq d^T_1$ and $|\operatorname{res}_1(\bvec{\xi})_2| = |\operatorname{res}_1(\xi_2)| \geq d^T_2$ and   $|\operatorname{res}_1(\bvec{\xi})_3| = |\operatorname{res}_1(\xi_3)| =1$. Take $\sigma = (1 23)$. Then $\Phi_{U, \sigma}(\bvec{\xi}) = [\xi'_1, \xi'_2, \xi'_3]$ where under any sector $S_1, S_2, S_3$, the length of $\operatorname{res}_\mu(\xi'_i)$ is amplified by at most a constant factor of the length of $\operatorname{res}_1(\xi_{\sigma^{-1}(i)})$, as a consequence of shape-preserving assumption on $U$. 
\end{example}

\begin{example}[Cover for logical operators] 
	Following the notations in the above example, we consider the product of all $X$ and $Z$ logical operators via the canonical representation $L^X_\mu$ and $L^Z_\mu$. $L^X_\mu$ and $L^Z_{\mu}$ need not be simple; however, we could define their cover as some hypertube. Choose a $r$-dimensional hypertube $\xi = (\xi_1, \cdots, \xi_{l+1}, \cdots, \xi_t)$. For $i \in \{1, \cdots, l\}$, we have $\xi_i$ to be a vector containing all the puncture index $\gamma_i$ in $\ker A^{(i)}$. For  $i \in \{l+1, \cdots, t\}$, we have $\xi_i$ to be an all-one vector. Then it follows that $\operatorname{supp} (\xi) \supset \operatorname{supp} L^X_\mu$ and $\xi$ is a $r$-dimensional hypertube. 
\end{example}

\begin{lemma}\label{lemma: disjoint-intersection-hyperlanes}
	Let $s < t$ be an integer and $\bvec{\xi}$ be a $s$-dimensional hypertube and let $S_\mu$ be a sector from $\mathcal{F}_l(\mathcal{A})$. Denote its orientation on $S_\mu$ by $\operatorname{dir}_\mu(\bvec{\xi})$. For any logical operators $L^X_\mu (L^Z_\mu)$ supported on exclusively on $S_\mu$, if $\operatorname{dir}_\mu(L^X_\mu) \cap \operatorname{dir}_\mu(\bvec{\xi}) \neq \emptyset$ ($\operatorname{dir}_\mu(L^Z_\mu) \cap \operatorname{dir}_\mu(\bvec{\xi}) \neq \emptyset$), then there must exists a choice of basis such that their supports do not overlap. 
\end{lemma}
    \begin{proof}
	We prove the case for $L^X_\mu$ and the argument for $L^Z_\mu$ follows analogously. Let's denote $j \in [t]$ such that $j \in \operatorname{dir}_\mu(L^X_\mu) \cap\operatorname{dir}_\mu(\bvec{\xi})$. Recall that in this case, we have that $|\operatorname{supp} (\operatorname{res}_\mu(\bvec{\xi})_j)| < \min(d_j, d^T_j) $. Thus, we can always choose the puncture set $\gamma_j$ as an information set associated with $\ker A_j $ such that $\gamma_j \subset [t] \setminus \operatorname{supp} (\operatorname{res}_\mu(\bvec{\xi})_j)$. Recall that $\bvec{\xi}$ and $L^X_\mu$ are simple vectors so that 
	\begin{align}
		\begin{aligned}
			&\operatorname{supp} (\bvec{\xi}) \cap \operatorname{supp} (L^X_\mu )= \operatorname{supp} (\operatorname{res}_\mu \bvec{\xi} )\cap \operatorname{supp} (L_\mu) \\
			&= \operatorname{supp} (\operatorname{res}_\mu (\bvec{\xi})_1 ) \cap \operatorname{supp} (\operatorname{res}_\mu(L^X_\rho)_1) \times \cdots \times \operatorname{supp} (\operatorname{res}_\mu (\bvec{\xi})_t)  \cap \operatorname{supp} (\operatorname{res}_\mu(L^X_\rho)_t). 
		\end{aligned}
	\end{align}
	Since $\operatorname{supp} (\operatorname{res}_\mu (\bvec{\xi})_j ) \cap \operatorname{supp} (\operatorname{res}_\mu(L^X_\rho)_j) = \emptyset$, it follows that their supports do not overlap. 
\end{proof}

\begin{lemma}
Let the physical qubits be encoded in $\mathcal{F}_l(\mathcal{A})$ where $l \leq r $ and $r + l = t$. For any $s$-dimensional hypertubes $\bvec{\xi}$ such that $s \leq l$, there exists a choice of logical basis sets such that any logical operators would have an intersection with $\bvec{\xi}$ that is either disjoint or has an intersection region $[\delta_1, \cdots, \delta_t]$ where $|\operatorname{res}_\rho(\delta_i)| < \min(d_i, d^T_i)$  for any sector $S_\rho$. 
    \begin{proof}
        Without loss of generality, let's assume that the support of $\bvec{\xi}$ is written in the following coordinate form 
        \begin{align}
            \bvec{\xi} = [\xi_1, \cdots, \xi_{s}, e_{s+1}, \cdots, e_t]
        \end{align}
        where $|\operatorname{res}_\mu(\xi_i)| \geq \min(d_i, d^T_i) $ if its restriction onto $S_\mu$ is non-trivial and $|\operatorname{res}_\mu(e_i)| < \min(d_i, d^T_i) $. Let's suppose that $s =  l$ and denote a sector $S_\mu \cong \mathbb{F}^{n_1 \times \cdots \times n_l}_2 \times \mathbb{F}^{m_{l+1} \times \cdots \times m_t}_2 $. Denote the logical operator $L^X_\mu$ supported on $S_\mu$. It follows that their intersection of supports is given by 
        \begin{align}
           \operatorname{supp}  \bvec{\xi} \cap \operatorname{supp} L^X_\mu =  \vartheta_1 \times \cdots \times \vartheta_t \subset S_\mu
        \end{align}
        where  $|\operatorname{res}_1(\vartheta_i) | \leq \max(k_i, k_i^T)$. In the case where $l < r $ and consider logical operators $L_\rho$ supported on any sectors $S_\rho$ with $\rho \neq \mu$. Then we have that 
        \begin{align}
            \begin{aligned}
                &\operatorname{supp} \bvec{\xi} \cap \operatorname{supp} L_\rho = \operatorname{supp} \operatorname{res}_\rho \bvec{\xi} \cap \operatorname{supp} L_\rho \\
            =& \operatorname{supp} \operatorname{res}_\rho (\bvec{\xi})_1  \cap \operatorname{supp} \operatorname{res}_\rho(L^X_\rho)_1 \times \cdots \times \operatorname{supp} \operatorname{res}_\rho (\bvec{\xi})_t  \cap \operatorname{supp} \operatorname{res}_\rho(L^X_\rho)_t    \\
            & \bigcup \operatorname{supp} \operatorname{res}_\rho (\bvec{\xi})_1  \cap \operatorname{supp} \operatorname{res}_\rho(L^Z_\rho)_1 \times \cdots \times \operatorname{supp} \operatorname{res}_\rho (\bvec{\xi})_t  \cap \operatorname{supp} \operatorname{res}_\rho(L^Z_\rho)_t
            \end{aligned}
        \end{align}
        where for the second equality we utilize the fact that $\bvec{\xi}$ is a simple vector. Then there must exist $j \in [t]$ such that  $\operatorname{supp} \operatorname{res}_\rho (\bvec{\xi})_j  \cap \operatorname{supp} \operatorname{res}_\rho(L^X_\rho)_j = \operatorname{supp} \operatorname{res}_\rho(e_j) \cap \operatorname{supp} \operatorname{res}_\rho(u_{\gamma_j}) $ where $u_{\gamma_j} \in \mathbb{F}^{n_j}_2$ is nonzero on information set $\gamma_j$ associated with $\ker A^{(j)}$. By Lemma~\ref{lemma: information set}, since by assumption we have that $|\operatorname{supp} \operatorname{res}_\rho(e_j)| < \min(d_j^T, d_j)$. Then we can always choose $\operatorname{supp} e_{\gamma_j} \subset \overline{\operatorname{supp} e_j}$ so that the intersection $\operatorname{supp} \bvec{\xi} \cap \operatorname{supp} L^X_\rho = \emptyset$. Similarly we argue that $\operatorname{supp} \bvec{\xi} \cap \operatorname{supp} L^Z_\rho = \emptyset$. Hence its intersection of any logical operator must be contained by $ \operatorname{supp}  \bvec{\xi} \cap \operatorname{supp} L^X_\mu = \vartheta_1 \times \cdots \times \vartheta_t \subset S_\mu$
        
        In the case where $r=l$, the by symmetry, we denote $L^Z_\nu$ supported on the sector $S_\nu \subset \mathbb{F}^{m_1 \times \cdots \times m_r}_2 \times \mathbb{F}^{n_{r+1} \times \cdots \times n_t}_2$ and 
        \begin{align}
            \operatorname{supp} \bvec{\xi} \cap \operatorname{supp} L^Z_\nu = \operatorname{supp} \operatorname{res}_\nu(\bvec{\xi}) \cap \operatorname{supp} L^Z_\nu =  \vartheta'_1 \times \cdots \times \vartheta'_d
        \end{align}
        where  $|\operatorname{res}_2(\vartheta'_i) | \leq \max(k_i, k_i^T)$. Similar argument shows that any logical $L$, we can always choose the information sets such that the intersection $\operatorname{supp} \bvec{\xi} \cap \operatorname{supp} L$ is contained by
        \begin{align}
            \operatorname{supp} \bvec{\xi} \cap \operatorname{supp} L^X_\mu \cup \operatorname{supp} \bvec{\xi} \cap \operatorname{supp} L^Z_\nu.
        \end{align}
        Note that $S_\mu$ and $S_\nu$ are from different sectors, we write this intersection by  
        \begin{align}
            [\delta_1, \cdots \delta_t] = ( \vartheta_1 , \cdots ,\vartheta_d)|_{S_\mu} \bigsqcup ( \vartheta_1 , \cdots ,\vartheta_d)|_{S_\nu}
        \end{align}
        where $|\operatorname{res}_\rho(\delta_i)|, |\operatorname{res}_\rho(\delta_i)| < \min(d_i, d^T_i)$ for any sector $S_\rho$, as desired. 
    \end{proof}
\end{lemma}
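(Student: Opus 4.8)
The plan is to exploit the ``simple vector'' (tensor product) structure of both the hypertube $\bvec{\xi}$ and the canonical logical representatives, and to re-choose the information sets / puncture sets of the seed classical codes $A_i$ so as to slide each logical operator off of $\bvec{\xi}$ one coordinate direction at a time. First I would fix the coordinate form of the hypertube: after relabelling, $\bvec{\xi}=[\xi_1,\dots,\xi_s,e_{s+1},\dots,e_t]$ where in its privileged sector $S_\mu$ the factors $\xi_1,\dots,\xi_s$ are ``thick'' ($|\operatorname{res}_\mu(\xi_i)|\ge\min(d_i,d_i^T)$) and $e_{s+1},\dots,e_t$ are ``thin'' ($|\operatorname{res}_\mu(e_i)|<\min(d_i,d_i^T)$), and in every other sector at most $s$ factors are thick. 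Recall that a canonical $X$ (resp.\ $Z$) logical representative $L$ on a sector $S_\rho$ is a tensor product over the $t$ coordinates of a unit vector (or, for a bundle of several logical qubits, a vector supported on a full information set of size $\le\max(k_i,k_i^T)$) of the relevant seed code along its $l$ (resp.\ $r$) ``fixed'' directions $\operatorname{dir}_\rho(L)$, together with a kernel codeword along the remaining ``varying'' directions; only the factors along the fixed directions will be modified, so the equivalence class of $L$ is preserved.

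The heart of the argument is a dichotomy applied to each logical operator $L$ on each sector $S_\rho$ (I treat the $X$ case; the $Z$ case is symmetric). \emph{Case A: $\operatorname{dir}_\rho(L)\cap\operatorname{dir}_\rho(\bvec{\xi})\neq\emptyset$.} Picking $j$ in this intersection we have $|\operatorname{res}_\rho(\bvec{\xi})_j|<\min(d_j,d_j^T)\le d_j$, so Lemma~\ref{lemma: information set} (equivalently Corollary~\ref{corollary: info set}) supplies an information set $\gamma_j$ of $A_j$ (or $A_j^T$) disjoint from $\operatorname{supp}(\operatorname{res}_\rho(\bvec{\xi})_j)$; choosing $L$'s $j$-th factor to live on $\gamma_j$ forces the $j$-th tensor factors disjoint, hence $\operatorname{supp}(L)\cap\operatorname{supp}(\bvec{\xi})=\emptyset$. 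This is exactly Lemma~\ref{lemma: disjoint-intersection-hyperlanes}. \emph{Case B: $\operatorname{dir}_\rho(L)\cap\operatorname{dir}_\rho(\bvec{\xi})=\emptyset$.} Here a counting argument closes things: $|\operatorname{dir}_\rho(L)|=l$ and $|\operatorname{dir}_\rho(\bvec{\xi})|\ge t-s$, and since $s\le l\le r$ and $t=l+r$ we get $l+(t-s)=t+(l-s)\ge t$, so Case B can occur only when $s=l$ and $\operatorname{dir}_\rho(L)$ coincides exactly with the set of (then exactly $l$) thick directions of $\bvec{\xi}$ in $S_\rho$, which pins $S_\rho$ to a ``maximally thick'' sector such as $S_\mu$ (for $Z$-type logicals this additionally forces $l=r$). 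In this exceptional case every coordinate is either a thin direction of $\bvec{\xi}$ or a fixed direction of $L$, so the componentwise intersection $[\delta_1,\dots,\delta_t]=\operatorname{supp}(L)\cap\operatorname{supp}(\bvec{\xi})$ has $|\operatorname{res}_\rho(\delta_i)|\le|\operatorname{res}_\rho(\bvec{\xi})_i|<\min(d_i,d_i^T)$ where $\bvec{\xi}$ is thin, and $|\operatorname{res}_\rho(\delta_i)|\le\max(k_i,k_i^T)<\min(d_i,d_i^T)$ where $L$ is fixed, the last inequality being exactly the scaling hypothesis~\eqref{eq: k-d-scaling-classical}. Either way the conclusion holds for $L$.

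To finish I would fold all the per-sector, per-direction choices into one global basis: for each coordinate $j$ let $\gamma_j$ be an information set of $A_j$ avoiding the union of the finitely many ($\le\binom{t}{l}$) ``thin'' restrictions $\operatorname{res}_\rho(\bvec{\xi})_j$, which is possible by Lemma~\ref{lemma: information set} once the scaling hypothesis is invoked to keep that union below $d_j$; the resulting single logical basis then satisfies the claim for every logical operator and every sector.

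\textbf{Main obstacle.} The sharp point is Case B with $s=l$: logicals on the maximally thick sectors (in particular on $S_\mu$) genuinely cannot be made disjoint from $\bvec{\xi}$, so one is forced into the weaker ``small intersection region'' conclusion, and making that bound come out right when a canonical representative bundles several logical qubits---hence occupies an entire information set of size up to $\max(k_i,k_i^T)$ along a fixed direction---is precisely what the rate--distance scaling $\max(k_i,k_i^T)=o(\min(d_i,d_i^T))$ is there to supply. A secondary nuisance is verifying that the per-sector re-choices of information sets are mutually consistent, which leans on there being only $O(1)$ sectors.
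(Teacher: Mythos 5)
Your proof is correct and follows essentially the same route as the paper's: on each sector you either invoke the information-set machinery (Lemma~\ref{lemma: information set} via Lemma~\ref{lemma: disjoint-intersection-hyperlanes}) to slide the logical representative off of $\bvec{\xi}$ along a shared thin direction, or, in the unavoidable residual case, you fall back on the rate--distance scaling~\eqref{eq: k-d-scaling-classical} to bound the surviving componentwise intersection. Your explicit count $|\operatorname{dir}_\rho(L)|+|\operatorname{dir}_\rho(\bvec{\xi})|\le t$ is a modest tightening over the paper, which simply asserts ``there must exist $j\in[t]$'' when $\rho\neq\mu$ and $l<r$; your version makes it transparent that the exceptional sectors are exactly the maximally thick ones, that this forces $s=l$, and that $Z$-type exceptions additionally force $l=r$.
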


Recall for the surface code, any $l \times l$ region away from the boundary where $l < d$ is correctable. We wish to generalize this result for the homological product code. We say that a $l$-dimensional hypertube $\bvec{\xi}$ is \emph{dimension-preserving} if for any $h \in \mathcal{F}_{l+1}[\mathcal{A}]$  and for any sector $S_\mu$, we have that $\operatorname{res}_\mu(\bvec{\xi} + \partial h)$ can either $(i)$ be written as disjoint union of constant many hypertubes supported on $S_\mu$, $\operatorname{res}_\mu(\bvec{\xi} + \partial h) = \sqcup_{i} \zeta_{i, \mu}$ such that there exists one $\zeta_{i, \mu}$, $ |\operatorname{dir}_\mu(\bvec{\xi})| = |\operatorname{dir}_\mu(\zeta_{i, \mu} )|$. Or $(ii)$ be written as $\Omega(\min_{i}(\{d_i, d^T_i\}))$ many hypertubes of lower dimensions. Then we have the following result analogous to Lemma \ref{lemma:correctable_hyperplanes} for hypertubes. 

\begin{lemma}\label{lemma: kunneth-hpc-correctable}
    Any finite collection of $s < \min(l, r)$-dimensional hypertubes is correctable. 
    \begin{proof}
        A key observation we will use is that the canonical logical representatives are of ``minimal weight" and any stabilizer deformation preserves dimension. 
        \begin{claim}
            Let $\bvec{\xi}$ be $l$-dimension hypertube and let $\operatorname{res}_\mu(\bvec{\xi})$ be its restriction on any sector $S_\mu$. Suppose that the nontrivial supports of $\operatorname{res}_\mu(\bvec{\xi})$ on $\operatorname{dir}_\mu(\bvec{\xi})$ is defined exclusively on the information bits associated with $\ker A^{(i)}$ ($\ker A^{(i) T}$) for $i \in \operatorname{dir}_\mu(\bvec{\xi}) $, then $\bvec{\xi}$ is dimension-perserving.
            \begin{proof}
            It suffices to restrict our attention to any single sector $S_\mu$. Let's assume WLOG that the $\operatorname{dir}_\mu(\bvec{\xi}) = \{1, 2, \cdots, r\}$ where $r + l =t$. Let's first assume that $h = h^{(1)} \otimes \cdots \otimes h^{(t)}$. Then we have that, restricted on $S_\mu$, the stabilizer action, 
            \begin{align}
                A^{(1)}h^{(1)} \otimes \cdots \otimes A^{(r)}h^{(r)}\otimes  h^{(r+1)} \otimes \cdots \otimes h^{(t)}
            \end{align}
            Note that on the coordinates $i=1, \cdots, r$, $\operatorname{res}_\mu(\bvec{\xi})$ is supported on the information bits. Write, 
            \begin{align}
                \operatorname{res}_\mu(\bvec{\xi}) = \xi^{(1)} \otimes \cdots \otimes \xi^{(t)}
            \end{align}
            Note that $\xi^{(i)} \in \operatorname{coker} A^{(i)}$, then there cannot be any $h^{(i)}$ such that $A^{(i)} h^{(i)} = \xi^{(i)}$. Hence denote $v^{(i)}$ that lies outside of the information bits and disjoint from $\xi^{(i)}$ such that $A^{(i)} h^{(i)} = \xi^{(i)} + v^{(i)}$, then we can write the hypertube
            \begin{align}
                v^{(1)} \otimes \cdots \otimes v^{(r)} \otimes h^{(r+1)} \otimes \cdots \otimes h^{(t)},
            \end{align}
            which is disjoint from $\operatorname{res}_\mu(\bvec{\xi})$. It is then easy to see that in this case, then $\operatorname{res}_\mu(\bvec{\xi} + \partial h)$ is given by the disjoint union of two hypertubes, 
            \begin{align}
                 v^{(1)} \otimes \cdots \otimes v^{(r)} \otimes h^{(r+1)} \otimes \cdots \otimes h^{(t)} \bigsqcup  \xi^{(1)} \otimes \cdots \otimes \xi^{(r)} \otimes \xi^{(r+1)} + h^{(r+1)} \otimes \cdots \otimes \xi^{(t)} +h^{(t)}
            \end{align}
            \end{proof}
            Then it is straightforward to observe at least one of such hypertube is $l$-dimensional, which implies dimension-perserving. Notice that we can generalize to any $h \in \mathcal{F}_{l+1}[X]$ by linearity. 
        \end{claim}
        Using the claim, it follows that any canonical logical basis is dimension-preserving. Similar methods can be extended to proving for general logical operators, which can be written as a disjoint union of $\min(l, r)$-dimensional hypertubes. Then it follows that any finite collection of $s < \min(l, r)$-dimension hypertubes cannot contain logical operators, even restricted in any single sector.  
    \end{proof}
\end{lemma}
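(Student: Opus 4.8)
The plan is to reduce the statement to a cleanability assertion and then rule out logical operators by a dimension count. By the Cleaning Lemma (Lemma~\ref{lemma: correctability-stabilizer-code}) together with Lemma~\ref{lemma: cleaning-lemma-correctable}, a region $\Lambda$ is correctable exactly when no nontrivial logical operator of $\mathcal{Q}$ can be deformed by stabilizers so that its whole support lies inside $\Lambda$. So it suffices to take $\Lambda = \bigcup_a \bvec{\xi}_a$ a finite union of $s$-dimensional hypertubes with $s < \min(l,r)$ (WLOG $l \le r$, so $s < l$) and show $\Lambda$ cannot carry any nontrivial logical representative after an arbitrary stabilizer deformation. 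I would fix the precise meaning of ``finite collection'' to be the quantitative one that on every $1$D line along any coordinate axis in any sector the number of the $\bvec{\xi}_a$ met is strictly below the relevant classical distance $\min(d_i,d_i^T)$; this is the slack furnished by Definition~\ref{def:hypertube} and Eq.~\eqref{eq: k-d-scaling-classical}, and it is what allows Lemma~\ref{lemma: classical-cleaning} and Lemma~\ref{lemma: information set} to be applied when needed.

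The technical core is a \emph{dimension-preservation} lemma: if a hypertube's nontrivial components along its free directions sit on information bits of the corresponding $\ker A^{(i)}$ or $\ker A^{(i)T}$, then every stabilizer deformation of it decomposes, sector by sector, into a disjoint union of hypertubes at least one of which has the same directional dimension. I would prove this from the graded Leibniz rule, which makes $\partial$ act as $A^{(i)}$ on the $i$-th tensor factor: the ``logical'' factors lie in $\coker A^{(i)}$ (resp.\ $\coker A^{(i)T}$), so no choice of deforming chain $h$ can cancel them, and along any line in a logical direction the restriction of the deformed operator remains a nonzero classical codeword, hence of weight $\ge \min(d_i,d_i^T)$, so that line still counts toward the directional dimension. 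Then I would pass from simple tensors $h$ to arbitrary $h \in \mathcal{F}_{l+1}[\mathcal{A}]$ by decomposing $h$ and checking that in at least one sector none of the $l$ logical directions can have its per-line weight driven below $\min(d_i,d_i^T)$, and pass from the canonical basis of Eqs.~\eqref{eq: z-logical-basis-tdim}--\eqref{eq: x-logical-basis-tdim} to general logical representatives by writing the latter as disjoint unions of $\min(l,r)$-dimensional hypertubes, each summand of which is dimension-preserving.

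With dimension-preservation established, the conclusion is a counting argument that I would package as an induction on $l$, mirroring the single-sector correctability lemma above. Any nontrivial logical representative, after any stabilizer deformation, must in at least one sector contain an $l$-dimensional hypertube's worth of weight: there are $l$ coordinate directions along which some line carries weight $\ge \min(d_i,d_i^T)$. But $\Lambda$ is a union of $s$-dimensional hypertubes with $s < l$, so by Definition~\ref{def:hypertube} and the finiteness convention, in every sector at most $s < l$ coordinate directions can carry a line of weight $\ge \min(d_i,d_i^T)$. This contradiction shows $\Lambda$ supports no nontrivial logical operator, hence $\Lambda$ is cleanable, hence correctable; when the individual hypertube pieces need to be assembled rigorously I would invoke the Union Lemma, using that each homological-product stabilizer generator is supported on a single coordinate line and therefore, after a mild partition, meets at most one piece.

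The step I expect to be the main obstacle is the ``extend by linearity'' move, since decomposing a deformed operator into disjoint hypertubes is not a linear operation and a general stabilizer $h$ spreads across many sectors, so one cannot simply sum the per-simple-tensor conclusions. The robust route, which I would take, is to fix the $t-l$ coordinates transverse to the free directions of a logical representative and argue slab by slab: on each slab the operator restricts to a thickening of a tensor product of $l$ classical codewords, and because each such factor lies in the cokernel of the corresponding $A^{(i)}$ (resp.\ its transpose) no nearby $h$ can push the per-line weight along any of those $l$ directions below $\min(d_i,d_i^T)$; summing the slab-wise statements recovers the global $l$-dimensionality without invoking linearity of the decomposition. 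A secondary difficulty is bookkeeping the cross-sector spreading produced by $\partial$ so that the finiteness convention survives deformation; here the shape-preserving assumption (or, absent it, the stronger scaling $\max(k_i,k_i^T)^{g(t,l)} = o(\min(d_i,d_i^T))$) is exactly what keeps the extra pieces a subleading, $O(1)$-type contribution, as anticipated in the remark after the shape-preserving definition.
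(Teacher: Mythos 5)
Your approach mirrors the paper's: the crux is a dimension-preservation claim --- a hypertube whose logical-direction components sit on information bits of the relevant $\ker A^{(i)}$ (resp.\ $\ker A^{(i)T}$) cannot have its directional dimension reduced by any stabilizer deformation, because those components lie in $\coker A^{(i)}$ and cannot be cancelled by $A^{(i)} h^{(i)}$ --- followed by a counting argument ruling out an $l$-dimensional hypertube's worth of support inside a union of $s<\min(l,r)$-dimensional ones. The difference is that you correctly flag the paper's ``extend by linearity'' step as a genuine gap: the paper's proof establishes the claim only for simple tensors $h = h^{(1)}\otimes\cdots\otimes h^{(t)}$ and then asserts linearity, but the decomposition of $\operatorname{res}_\mu(\bvec{\xi}+\partial h)$ into disjoint hypertubes is not a linear operation, and a general $h \in \mathcal{F}_{l+1}[\mathcal{A}]$ spreads across sectors, so the per-simple-tensor conclusion does not simply sum. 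Your slab-by-slab alternative --- fix the $t-l$ transverse coordinates, restrict to each resulting $l$-dimensional slab, and argue there that the per-line weight along every logical direction stays at least $\min(d_i,d_i^T)$ because the relevant tensor factor lies in the cokernel --- is a sensible route to patching this, and is closer to a complete argument than what the paper gives. Your two other additions (making ``finite collection'' quantitative relative to $\min(d_i,d_i^T)$ so that Lemma~\ref{lemma: classical-cleaning} and Lemma~\ref{lemma: information set} actually apply, and invoking the Union Lemma to assemble the pieces into a correctable region) are likewise things the paper's sketch leaves implicit. In short: same route, same key lemma, but you have spotted the real weak point in the paper's own sketch and proposed how to close it; the paper's proof as written is a sketch that would need to be completed roughly along the lines you indicate.
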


\begin{remark}
    It is helpful to consider $K_1 = UL U^\dagger $ where $L$ is a general logical operator where is given by the linear combination of different sectors. Then we in general can write 
    \begin{align}
        \operatorname{supp} K_1 \subseteq  \bigcup_{\sigma} [e'_{\sigma(1)}, \cdots, e'_{\sigma(l)}, \xi'_{\sigma(l+1)}, \cdots, \xi'_{\sigma(t)}].
    \end{align}
    Then for each sector $S_\mu$ the support of $K_1$ restricted onto $S_\mu$, $\operatorname{res}_\mu(K_1)$ has $m$-dimensional hypertubes at different orientations, which covers the scenario $1$. 
\end{remark}

\begin{lemma}\label{lemma: dimension-reduction-hyperplanes}
    For any sector $S_\nu$, there exists a logical basis set $L_\mu$ such that $\Phi_U(L_\mu)$ intersects with any logical operator $L$ must be given by a finite set of at most $s \leq r-l$-dimensional hypertubes. 
    \begin{proof}
        By definition, we can expand $\Phi_U(L_\mu)$ across different sectors, 
        \begin{align}
            \Phi_U(L_\mu) \subseteq \bigcup_{\sigma}  \Phi_{U, \sigma}(L_\mu)
        \end{align}
       We can decompose any logical operator $L$ into its components in each sector $S_\nu$, 
        \begin{align}
            \begin{aligned}
                &\Phi_U(L_\mu) \cap L \subseteq \left( \cup_{\sigma} \Phi_{U, \sigma}(L_\mu) \right)  \cap \left( \cup_{\nu} \operatorname{supp} L_\nu \right) = \bigcup_{\sigma, \nu } \Phi_{U, \sigma}(L_\mu) \cap \operatorname{supp} L_\nu  \\
            &= \bigcup_{\nu }  \operatorname{res}_\nu\left(\cup_{\sigma} \Phi_{U, \sigma}(L_\mu) \right)  \cap \operatorname{supp} L_\nu
            \end{aligned} 
        \end{align}
        
        Writing the support of $L_\nu$ to be 
        \begin{align}
            \operatorname{supp} L_\nu \subset   \operatorname{supp} L^X_\nu \cup  \operatorname{supp} L^Z_\nu
        \end{align}
        where we take $S_\nu \cong \mathbb{F}^{n_1 \times \cdots n_l}_2 \times \mathbb{F}^{m_{l+1} \times \cdots m_t}_2$ and $\operatorname{dir}_\nu(L^X_\nu) = \overline{\operatorname{dir}_\nu(L^Z_\nu)}$ without loss of generality. For any fixed $\sigma$, $\Phi_{U, \sigma}(L_\mu) \subset \Phi_{U, \sigma}(L^X_\mu) \cup \Phi_{U, \sigma}(L^Z_\mu)$ where $\Phi_{U, \sigma}(L^X_\mu)$ and $\Phi_{U, \sigma}(L^Z_\mu)$ are respectively $r$ $(l)$-dimensional hypertubes. Recall that $\Phi_{U, \sigma}(L^X_\mu)$ is a simple vector and we denote its orientation on $S_\nu$ by $\operatorname{dir}_{\nu}(\mu, \sigma)$. Suppose that there exists $j \in [t]$ such that $j \in \operatorname{dir}_{\nu}(\mu, \sigma) \cap \operatorname{dir}_\nu(X, \mu)$, then by Lemma~\ref{lemma: disjoint-intersection-hyperlanes}, then their supports are disjoint. Furthermore, for any $j \in \operatorname{dir}_\nu(X, \mu)$, consider the set of hypertubes oriented by $\sigma$ that contains $j$,
            \begin{align*}
                |\operatorname{supp} \operatorname{res}_\nu( \cup _{\sigma: j \in \operatorname{dir}_\nu(\mu, \sigma)}\Phi_{U, \sigma}(L^X_\mu))_j|  &\leq  |\cup _{\sigma: j \in \operatorname{dir}_\nu(\mu, \sigma)} \operatorname{supp} \operatorname{res}_\nu( \Phi_{U, \sigma}(L^X_\mu))_j | \\ &\leq c \max(k_j, k^T_j) < \min(d_j, d^T_j),
            \end{align*}
            where $c$ is some constant. Since the number of different orientations is given by at most $\binom{t}{l}$, which is a constant. Then there must exists a large (upto $t-1$-dimensional) hyperplane $\bvec{\zeta}$ such that, 
            \begin{align*}
                \operatorname{supp} \operatorname{res}_\nu( \cup _{\sigma: j \in \operatorname{dir}_\nu(\mu, \sigma)}\Phi_{U, \sigma}(L^X_\mu)) \subset \operatorname{supp} \operatorname{res}_\nu(\bvec{\zeta})
            \end{align*}
            such that $j \in \operatorname{dir}_{\nu}(\bvec{\zeta})$. Then by applying Lemma~\ref{lemma: disjoint-intersection-hyperlanes}, it follows that their supports must also be disjoint. Then we consider all the orientation of hypertubes such that $\operatorname{dir}_{\nu}(\mu, \sigma) \cap \operatorname{dir}_\nu(X, \mu) = \emptyset$. For each such orientation of hypertubes, the hypertubes from their intersection must have the orientation $|\operatorname{dir}_\nu(\mu, \sigma) \cup \operatorname{dir}_\nu(L_\nu^X)| \leq r-l$ as desired. A symmetry argument implies the same result for intersection between $\cup _{\sigma}\Phi_{U, \sigma}(L^X_\mu)$ and $L^Z_\nu$, and for the intersection between $\cup _{\sigma}\Phi_{U, \sigma}(L^Z_\mu)$ and $L^X_\nu$ ($L^Z_\nu$). 
    \end{proof}
\end{lemma}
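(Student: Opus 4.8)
The plan is to mirror the first dimension-reduction step in the proof of Theorem~\ref{thm:single-sector}, but to replace the cleaning-from-coordinates move (which would spread support across all sectors and hence is useless here) by the robustness/bipuncturability-style disjointification of Lemma~\ref{lemma: disjoint-intersection-hyperlanes}, fueled by the classical rate-versus-distance hypothesis Eq.~\eqref{eq: k-d-scaling-classical}. Concretely, I would take $L_\mu=L_\mu^X L_\mu^Z$ to be the hypertube cover of the product of all canonical $X$- and $Z$-logical representatives supported on the sector $S_\mu$, so that $L_\mu^X$ and $L_\mu^Z$ are simple hypertubes of dimension $r$ and $l$ respectively. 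Shape-preservation of $U$ then gives $\Phi_U(L_\mu)=\bigcup_{\sigma\in S_t}\Phi_{U,\sigma}(L_\mu)$ with $\Phi_{U,\sigma}(L_\mu)\subseteq\Phi_{U,\sigma}(L_\mu^X)\cup\Phi_{U,\sigma}(L_\mu^Z)$, and each $\Phi_{U,\sigma}(L_\mu^X)$ (resp.\ $\Phi_{U,\sigma}(L_\mu^Z)$) is again a simple hypertube of dimension $r$ (resp.\ $l$) whose orientation on each sector $S_\nu$, call it $\operatorname{dir}_\nu(\mu,\sigma)$, is obtained from the original by permuting with $\sigma$. Dually, for an arbitrary logical operator $L$ I would decompose $\operatorname{supp}L\subseteq\bigcup_\nu\bigl(\operatorname{supp}L_\nu^X\cup\operatorname{supp}L_\nu^Z\bigr)$ over sectors, with $\operatorname{dir}_\nu(L_\nu^X)=\overline{\operatorname{dir}_\nu(L_\nu^Z)}$. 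This reduces the claim to a sector-by-sector, orientation-by-orientation bound on each intersection $\operatorname{res}_\nu\bigl(\Phi_{U,\sigma}(L_\mu^{X/Z})\bigr)\cap\operatorname{supp}L_\nu^{X/Z}$.

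The core of the argument is a dichotomy on these pairwise intersections according to whether the two orientations in $S_\nu$ meet. If they are disjoint, the intersection is itself a hypertube whose orientation is a union of two orientations of sizes $l$ and $l$ --- hence of size $2l=t-(r-l)$ --- so it has dimension at most $r-l$, exactly the desired bound (here I would use that both factors are simple vectors, so the intersection factorizes coordinate-wise and is genuinely a hypertube of that dimension). If instead the two orientations share some coordinate $j$, I want to eliminate the intersection outright by re-choosing the representative of $L$. The subtlety, absent in the single-hyperplane case, is that the offending side is a \emph{union} over all $\binom{t}{l}$ permutations $\sigma$ rather than a single hypertube; so I would first bound the block weight along $j$ of $\bigcup_{\sigma:\,j\in\operatorname{dir}_\nu(\mu,\sigma)}\operatorname{res}_\nu(\Phi_{U,\sigma}(L_\mu^X))$ by $c\binom{t}{l}\max(k_j,k_j^T)<\min(d_j,d_j^T)$ using Eq.~\eqref{eq: k-d-scaling-classical}, enclose the whole union inside one (up to $(t-1)$-dimensional) hypertube $\bvec{\zeta}$ with $j\in\operatorname{dir}_\nu(\bvec{\zeta})$, and then apply Lemma~\ref{lemma: disjoint-intersection-hyperlanes} once to that single $\bvec{\zeta}$ to obtain an alternative choice of information sets for $L_\nu$ disjoint from $\bvec{\zeta}$, hence from the union. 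Iterating over the finitely many offending coordinates and running the symmetric argument for the $L_\mu^Z$--$L_\nu^X$ and $L_\mu^Z$--$L_\nu^Z$ pairs leaves only disjoint-orientation contributions, each of dimension $\le r-l$; the collection is finite because there are $O(1)$ choices of $(\sigma,\nu,X/Z)$ and $U$ is constant-depth.

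I expect the main obstacle to be the simultaneity of the re-choices: one must pick information sets for $L_\nu$ that avoid \emph{all} the enclosing hypertubes $\bvec{\zeta}$ (over all offending coordinates and all sectors) at once, checking that a puncture chosen to dodge one $\bvec{\zeta}$ does not reintroduce overlap with another and that the cumulative puncture budget stays below $\min(d_j,d_j^T)$ in every direction. This is precisely where Eq.~\eqref{eq: k-d-scaling-classical} does the real work --- and where, if one drops the shape-preserving assumption, one must instead invoke its strengthening $\max(k_i,k_i^T)^{g(t,l)}=o(\min(d_i,d_i^T))$ to absorb the larger, $\sigma$-dependent spreads. The rest is bookkeeping about orientations of hypertubes and the fact that intersections of simple vectors remain simple.
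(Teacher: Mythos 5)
Your proposal reproduces the paper's argument in its essentials: the same sector/permutation decomposition of $\Phi_U(L_\mu)$ and of $L$, the same dichotomy on whether $\operatorname{dir}_\nu(\mu,\sigma)$ meets $\operatorname{dir}_\nu(L_\nu^{X/Z})$, the same aggregation of the $\sigma$-union into a single enclosing hypertube $\bvec{\zeta}$ so that Lemma~\ref{lemma: disjoint-intersection-hyperlanes} can be applied once per offending coordinate, and the same orientation count ($2l$ fixed coordinates $\Rightarrow$ dimension $\leq t-2l = r-l$) in the disjoint case. The simultaneity concern you flag (choosing information sets that dodge all enclosing $\bvec{\zeta}$'s across sectors and coordinates within the $\min(d_j,d_j^T)$ budget) is a real loose end that the paper also leaves implicit, leaning on Eq.~\eqref{eq: k-d-scaling-classical} exactly as you do, so your treatment is no weaker than the paper's.
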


We now proceed with dimensional reduction. For the base case where $ \lfloor \frac{r}{l} \rfloor =1$, we have $K_1 = UL_1 U^\dagger$ maps to $l$-dimensional hypertubes across all sectors. Consider that $L_1 = \prod_\mu L_\mu $ and $K_1 = \prod_\mu(UL_\mu U^\dagger)$. The support of $K_1$, $\Phi_U(L_1)$ can be written as a subset of 
\begin{align}
    \Phi_U(L_1) \subseteq  \bigcup_{\mu } \Phi_U(L_\mu). 
\end{align}

The above claim indicates that $\Phi_U(L_\mu)  \cap L_2$ contains a collection of at most $(r-l)$-dimensional hypertubes where $r-l < l$. It then follows that $\cup_\mu \Phi_U(L_\mu) \cap L_2$ contains another collection of at most $(r-l)$-dimensional hypertubes where $r-l < l$, where the size of the collections is constant. Furthermore, we have that $\Phi_U( \Phi_U(L_1) \cap L_2)$ which only extends the support to another constant-size collection of $r-l < l$-dimensional hypertubes. Hence, by Lemma~\ref{lemma: kunneth-hpc-correctable}, it is correctable. This proves for the base case. For the case where $\lfloor r / l \rfloor = g > 1$,  we can apply dimension reduction iteratively as follows. Recall that $K_{i+1} = K_i L_{i+1} K^\dagger_i L^{\dagger}_{i+1}$ where $\operatorname{supp} K_{i+1} = \Phi_{K_{i}}(\operatorname{supp} K_{i} \cap \operatorname{supp} L_{i+1}^{ \dagger} ) \subset \Phi_{U}(\operatorname{supp} K_{i} \cap \operatorname{supp} L^{\dagger}_{i+1} )$. By the inductive hypothesis where we denote $ \operatorname{supp}K_i$ is a constant-size collection of hyperplanes whose high dimensions are given by $r - (i-1)l \geq l$. Applying Lemma~\ref{lemma: dimension-reduction-hyperplanes} and arguments above, we reached the conclusion that $\operatorname{supp} K_{i+1}$ is given by a constant-size collection of hypertubes of dimension at most $r - i l$.  This allows us to conclude that 

\begin{theorem}
    Suppose for each classical code $\mathcal{A}^{(j)} $ for $j \in [t]$ whose code parameters satisfy Eq.~\eqref{eq: k-d-scaling-classical}. Then logical gates implementable by a constant-depth, shape-preserving circuit $U$ are restricted to $\mathcal{P}_{\lfloor r/l \rfloor+1}$, i.e., the (\( \lfloor r/l \rfloor+1 \))-th level of the Clifford hierarchy.
\end{theorem}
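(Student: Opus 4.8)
The plan is to deduce the statement from the generalized Bravyi--König theorem, Lemma~\ref{theorem: generalize bk}. Given any sequence of logical operators $\overline{L}^{(1)},\overline{L}^{(2)},\dots$ drawn from $\langle \overline{L}_{x,i},\overline{L}_{z,i}\rangle_{i\in[k]}$, I would produce logical representatives $L^{(1)},L^{(2)},\dots$ for which, writing $K_1=UL^{(1)}U^\dagger$ and $K_j=[K_{j-1},L^{(j)}]$, the operator $K_{\lfloor r/l\rfloor+1}$ is supported on a correctable region; Lemma~\ref{theorem: generalize bk} then forces $U\in\mathcal{P}_{\lfloor r/l\rfloor+1}$. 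Throughout I assume without loss of generality $l\le r$ (otherwise exchange the roles of $X$ and $Z$).

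First I would fix the shape of $K_1$. Writing $L^{(1)}=\prod_\mu L_\mu$ as a product of canonical pieces over the sectors, shape-preservation gives $\operatorname{supp}(K_1)=\Phi_U(L^{(1)})\subseteq\bigcup_\mu\bigcup_{\sigma\in S_t}\Phi_{U,\sigma}(L_\mu)$, and in every sector each $\Phi_{U,\sigma}(L_\mu)$ is a hypertube whose oriented directions are $\sigma(\operatorname{dir}(L_\mu))$ and whose thickness in the remaining directions is $O(\max_i(k_i,k_i^T))$. By Definition~\ref{def:hypertube} and the hypothesis $\max(k_i,k_i^T)=o(\min(d_i,d_i^T))$ of Eq.~\eqref{eq: k-d-scaling-classical}, these thicknesses are subleading, so $K_1$ is a constant-size union (bounded by $c\cdot\binom{t}{l}\cdot|S_t|$) of hypertubes, each of dimension at most $r$.

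The core of the argument is one dimension-reduction step, phrased as an induction: if $\operatorname{supp}(K_i)$ is a constant-size collection of hypertubes of largest dimension $r-(i-1)l\ge l$, then $L^{(i+1)}$ can be chosen so that $\operatorname{supp}(K_{i+1})$ is again a constant-size collection of hypertubes, of dimension at most $r-il$. This is precisely Lemma~\ref{lemma: dimension-reduction-hyperplanes}: split $L^{(i+1)}=L^{(i+1)}_X L^{(i+1)}_Z$; for each oriented direction of a hypertube of $K_i$ that meets the direction of the relevant canonical $X$ (resp.\ $Z$) logical, collect all offending pieces along that coordinate line into one enclosing hypertube, whose total thickness there is $O(\max(k_j,k_j^T))<\min(d_j,d_j^T)$, and apply Lemma~\ref{lemma: information set} and Lemma~\ref{lemma: disjoint-intersection-hyperlanes} to pick a representative of $L^{(i+1)}_X$ (resp.\ $L^{(i+1)}_Z$) disjoint from it; the only surviving intersections are those whose combined oriented-direction set has grown by at least $l$, so their dimension has dropped by at least $l$. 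Finally $\operatorname{supp}(K_{i+1})=\Phi_{K_i}\!\big(\operatorname{supp}(K_i)\cap\operatorname{supp}(L^{(i+1)\dagger})\big)\subseteq\Phi_U\!\big(\operatorname{supp}(K_i)\cap\operatorname{supp}(L^{(i+1)\dagger})\big)$, and shape-preservation keeps both the dimension bound and the constant size. Starting from $i=1$ (for $\lfloor r/l\rfloor=1$ this base step already finishes the proof), after $\lfloor r/l\rfloor$ applications $\operatorname{supp}(K_{\lfloor r/l\rfloor+1})$ is a constant-size collection of hypertubes of dimension $r-\lfloor r/l\rfloor\,l=r\bmod l<l=\min(l,r)$, which is correctable by Lemma~\ref{lemma: kunneth-hpc-correctable}. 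Hence $K_{\lfloor r/l\rfloor+1}=\pm\mathbb{I}$, and Lemma~\ref{theorem: generalize bk} gives the claim.

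I expect the main obstacle to be performing the dimension reduction \emph{simultaneously across all sectors}: because a constant-depth circuit can disperse one hyperplane into hypertubes sitting in many sectors and carrying many distinct orientations $\sigma$, I must check that a single choice of logical basis for $L^{(i+1)}$ is disjoint from (or strictly lower-dimensional against) every one of these pieces at once. This is exactly where the hypotheses are used: shape-preservation confines the spread to a bounded combinatorial family of orientations and keeps each thickness $O(\max_i(k_i,k_i^T))$, while $\max(k_i,k_i^T)=o(\min(d_i,d_i^T))$ guarantees that, even after the union over all these orientations, the support along any coordinate line stays below the classical distance, so the information-set and cleaning lemmas remain applicable coordinate by coordinate. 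The accompanying bookkeeping point — that the number of hypertubes stays $O(1)$ rather than blowing up as in the single-sector estimate of Theorem~\ref{thm:single-sector} — is what the hypertube formalism of Definition~\ref{def:hypertube} is engineered to absorb.
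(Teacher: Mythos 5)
Your proposal follows essentially the same route as the paper's own argument: decompose $L^{(1)}=\prod_\mu L_\mu$ and bound $\operatorname{supp}(K_1)$ by a constant-size union of hypertubes, invoke Lemma~\ref{lemma: dimension-reduction-hyperplanes} (plus the information-set and disjoint-intersection lemmas) for the single step of dimension reduction by $l$, iterate $\lfloor r/l\rfloor$ times to land on hypertubes of dimension $r\bmod l<\min(l,r)$, apply Lemma~\ref{lemma: kunneth-hpc-correctable} for correctability, and close with the generalized Bravyi--König theorem (Lemma~\ref{theorem: generalize bk}). Your explicit identification of the residual dimension as $r\bmod l$ and your discussion of the multi-sector obstacle are faithful to what the paper does and why its hypotheses (shape preservation plus Eq.~\eqref{eq: k-d-scaling-classical}) are placed where they are.
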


\section{Yes-go constructions and tightness} 
\label{sec: yes-go}

In this section, we discuss the constructions of fault-tolerant logical gates on hypergraph product codes and various canonical examples in both scenarios with and without geometric locality, with the aim of understanding whether and how our above no-go results of the Clifford hierarchy limits can be attained.

A recent framework suitable for our purpose utilizes a form of cohomology invariant called \emph{cup product} \cite{Chen_2023,Breuckmann2024Cups,Lin2024transversal}. In the context of $t$-dimensional simplicial complexes, a cup product $\smile: H^i(X) \times H^{j}(X) \rightarrow H^{i+j}(X)$ is a product between cohomologies defined by merging lower dimensional simplices $\xi \in C^i(X)$ and $\xi' \in C^j(X)$ as a higher dimensional one $\xi \smile \xi'$ and the cohomological classes can be preserved during the process.  

The idea of cup products can be more clearly illustrated by an analogous notion extended to the cubical complexes \cite{Chen_2023,Lin2024transversal}. They are made from Cartesian products of intervals, and have a closer relation with hypergraph products that we study in this work. For a hypergraph product, let us interpret each 1-chain $\mathcal{A}^{(i)} = (A^{(i)}: \mathbb{F}_2^{n_i} \rightarrow  \mathbb{F}_2^{m_i})$ as a mapping from the space $\mathbb{F}_2^{n_i}$ of $n_i$ intervals/edges to the space $\mathbb{F}_2^{m_i}$ of $m_i$ endpoints/vertices. Then elements in
\begin{align}
	\mathcal{D}_s = \bigoplus_{\vert [I] \vert = l} \bigotimes_{\substack{i \in [I] \\ j \in [t] \setminus [I]} } \mathbb{F}_2^{n_i} \otimes \mathbb{F}_2^{m_j} 
\end{align}
are visualized as $l$-dimensional cubes. 

\subsection{Topological codes}\label{sub:topo}

First consider the case of topological (geometrically local) codes, with high dimensional toric codes being the canonical example. Let $A_c: \mathbb{F}_2^n \rightarrow \mathbb{F}_2^n$ be a cyclic code. To distinguish edges and vertices we may adopt the notation $H_c: \mathbb{F}_2^{E} \rightarrow \mathbb{F}_2^{V}$. 
{For clarity, we discuss the $3$-dimensional (3D) case for now, but generalizations to higher dimensions are straightforward and will be discussed in the end.}
The 3D toric code is constructed from the hypergraph product of three copies of $H_c$:
\begin{align}\label{eq:3Dtoric}
	\hspace{-2mm}
	\mathbb{F}_2^{E_1} \otimes \mathbb{F}_2^{E_2} \otimes \mathbb{F}_2^{E_3} & \xleftarrow{\delta^2} 
	\bigoplus_{i \neq j \neq k} (\mathbb{F}_2^{E_i} \otimes \mathbb{F}_2^{E_j} \otimes \mathbb{F}_2^{V_k})	
	\xleftarrow{\delta^1} 
	\bigoplus_{i \neq j \neq k} (\mathbb{F}_2^{E_i} \otimes \mathbb{F}_2^{V_j} \otimes \mathbb{F}_2^{V_k})	
	\xleftarrow{\delta^0} 
	\mathbb{F}_2^{V_1} \otimes \mathbb{F}_2^{V_2} \otimes \mathbb{F}_2^{V_3},
\end{align}
where we take the cochain complex since cup products are defined on cohomologies. The quantum code is taken from the consecutive terms on the right with $\delta^1 = H_Z$ and $\delta^0 = H_X^T$, which is consistent with Eq.~(\ref{eq:HGP}) by taking transpose. The physical qubits are placed on 1D edges and are separated into three sectors. In one sector $\mathbb{F}_2^{E_i} \otimes \mathbb{F}_2^{V_j} \otimes \mathbb{F}_2^{V_k}$, a canonical $X$ logical representative is defined as a 2D hyperplane on one puncture (see Section~\ref{sub: generalHGP}). Pictorially, it consists of paralleled edges through the perpendicular 2-dimensional plane in the 3D cube as in Fig.~\ref{fig:cube-intersection}. 

\begin{figure}[ht]
    \centering
    \includegraphics[width=\textwidth]{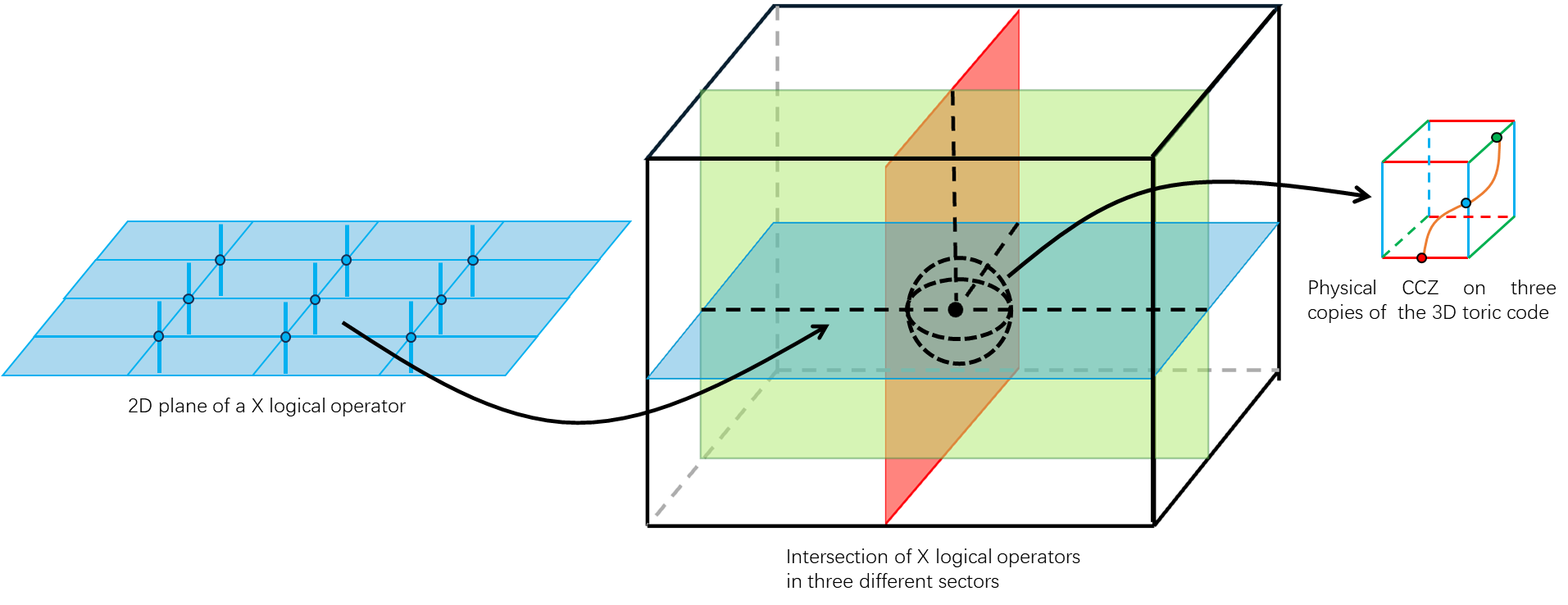}
    \caption{Intersection of logical $X$ operators for three copies of $3$-dimensional toric code from red, green, and blue, respectively. The intersection of the red and green planes is drawn with a black dashed line, which intersects perpendicularly to the blue plane. Such an intersection in logical operators indicates a nontrivial action of logical $\CCZ$ (which is realized through physical $\CCZ$s acting on six paths of each cube).}
    \label{fig:cube-intersection}
\end{figure}

Three different sectors support three independent $X$ logical representatives, corresponding to red, green, and blue planes in the cube, respectively. It is easy to observe that the intersection of red and green planes gives a 1D line perpendicular to the blue plane, which corresponds to a logical $Z$ representative. The logical $\CCZ$ from third Clifford hierarchy $\P_3$ is obtained by taking three copies of the 3D toric code and implementing physical $\CCZ$s as illustrated in Fig.~\ref{fig:cube-intersection}: for each single cube, there are six paths from one corner to the opposite corner and we apply six physical $\CCZ$s. This yields a constant-depth circuit, denoted by $U_{\CCZ}$. By the definition for any quantum state $\ket{x_1,x_2,x_3}$ on the composite system,
\begin{align}
	 U_{\CCZ} \ket{x_1,x_2,x_3} = (-1)^{T(x_1,x_2,x_3)} \ket{x_1,x_2,x_3} 
\end{align}
where 
\begin{align}
	T: \bigoplus_{i \neq j \neq k} (\mathbb{F}_2^{E_i} \otimes \mathbb{F}_2^{V_j} \otimes \mathbb{F}_2^{V_k})	\times 	\bigoplus_{i \neq j \neq k} (\mathbb{F}_2^{E_i} \otimes \mathbb{F}_2^{V_j} \otimes \mathbb{F}_2^{V_k})	
	\times 	\bigoplus_{i \neq j \neq k} (\mathbb{F}_2^{E_i} \otimes \mathbb{F}_2^{V_j} \otimes \mathbb{F}_2^{V_k})	 \rightarrow \mathbb{F}_2
\end{align}
is a trilinear function, or equivalently, a homogeneous polynomial of degree 3 on the space of physical qubits. It is determined by how we apply the physical $\CCZ$s and it turns out that $T$ is nontrivial and
\begin{align}
	T(x_1 + y_1,x_2 + y_2, x_3 + y_3) = T(x_1,x_2,x_3)
\end{align}
for any $X$ logical representatives $x_i$ and any $X$-stabilizer deformation $y_i$. We say that $T$ is invariant and preserves the cohomology classes. Therefore, $U_{\CCZ}$ is a diagonal logical gate on the code space. It happens to be a composition of logical $\CCZ$s from $\P_3$ \cite{10.21468/SciPostPhys.14.4.065,Chen_2023,Wang_2024,Breuckmann2024Cups}.

In this example, $t = 3$, $l = 1$ and $r = t - l = 2$ using our previous notation in Section~\ref{sec: hgp} and \ref{sec:constant_depth}. Let $\overline{X}_1$ be any $X$ logical on the first toric code, by definition,
\begin{align}
	U_{\CCZ} \overline{X}_1 U_{\CCZ} = \overline{X}_1 U_{\mathrm{C}Z_{2,3}}
\end{align}
where $U_{\mathrm{C}Z_{2,3}}$ is a (composition of) logical $\mathrm{C}Z$ on the second and third code copies. The circuit $U_{\CCZ}$ maps hyperplanes to hyperplanes across several sectors, which is consistent with our Assumption \ref{assmuption0} and other results in Section~\ref{sec:constant_depth}.

More generally, we can define $t$-dimensional toric codes with physical qubits corresponding to $l$-dimensional cubes. We can always build $\lfloor r/l \rfloor + 1 = \lfloor (t-l)/l \rfloor + 1 = \lfloor t/l \rfloor$ level non-Clifford gates: 
\begin{enumerate}
	\item Suppose $r = t-1$, then $l = 1$ and $\lfloor t/r \rfloor = t$. We are able to construct multi-controlled-$Z$ gates generalized from the above construction of logical $\CCZ$ \cite{Chen_2023,Breuckmann2024Cups}. When $l > 1$ and $t$ is divisible by $l$, by basic results from algebraic topology, $t/l$ products of the $l$-th cohomology of the $t$-dimensional torus yields a nontrivial multi-linear map
	\begin{align}
		T: H^l(\mathbb{T}^t, \mathbb{F}_2) \times \cdots \times H^l(\mathbb{T}^t, \mathbb{F}_2) \rightarrow H^t(\mathbb{T}^t, \mathbb{F}_2) \cong \mathbb{F}_2.
	\end{align}  
	It can be used to define the desired logical $t/l$-controlled-$Z$ gates as we define the trilinear form for $\CCZ$. 
	
	\item Suppose $t$ is not divisible by $l$. For example, if $t -1$ is divisible by $l$ instead, then we treat the $t$-dimensional toric code as the homological product of the $t-1$ dimensional toric code (with physical qubits on $l$-dimensional cubes)
	\begin{align}
		C_{l+1} \xleftarrow{\delta^l} C_l \xleftarrow{\delta^{l-1}} C_{l-1}
	\end{align}
	with an invariant form $T(x_1,...,x_{(t-1)/l})$, and the classical cyclic code $A_c: \mathbb{F}_2^n \rightarrow \mathbb{F}_2^n$ (or $H_c: \mathbb{F}_2^{E} \rightarrow \mathbb{F}_2^{V}$ for clarity):
	\begin{align}
		C_{l+1} \otimes \mathbb{F}_2^n \xleftarrow{} 
		(C_l \otimes \mathbb{F}_2^n) \oplus (C_{l+1} \otimes \mathbb{F}_2^n) \xleftarrow{}
		(C_{l-1} \otimes \mathbb{F}_2^n) \oplus (C_l \otimes \mathbb{F}_2^n) \xleftarrow{}
		C_{l-1} \otimes \mathbb{F}_2^n. 
	\end{align}
	Three consecutive terms from right-hand side defines the code. Given vectors
	\begin{align}
		(x_i' \otimes u_a, x_j \otimes v_b) \in (C_{l-1} \otimes \mathbb{F}_2^n) \oplus (C_l \otimes \mathbb{F}_2^n), 
	\end{align}
	let 
	\begin{align}
		\tilde{T}( (x_i' \otimes u_a, x_j \otimes v_b)_1,...,(x_i' \otimes u_a, x_j \otimes v_b)_{(t-1)/l} ) \vcentcolon = T(x_{j,1},...,x_{j, (t-1)/l}).
	\end{align}
	Since the cyclic code has only one nontrivial codeword---the all-ones vector---it is straightforward to check that this extension $\tilde{T}$ is invariant under $X$ stabilizers determined by $C_{l-1} \otimes \mathbb{F}_2^n$. The degree of $\tilde{T}$ is still $(t-1)/l = \lfloor t/l \rfloor$.
\end{enumerate}

In conclusion, according to our results in Section~\ref{sec:constant_depth}, this achieves the optimal Clifford hierarchy level and shows the tightness of our no-go theorem. In particular, $(l,r) = (1,t-1)$ $t$D toric code supports fault-tolerant $\mathrm{C}^{t-1}Z$ gate which attains the optimal level $\mathcal{P}_t$.

As a side remark, the existence of macroscopic energy barrier requires the absence of string-like operators which only happen for $l,r>1$, so it is not compatible with fault-tolerant $\mathcal{P}_t$ gates, which confirms a result in \cite{pastawski_fault-tolerant_2015}.

\subsection{Geometrically non-local codes}

Now we move on to the situation without geometric locality constraints.
The use of cup products to construct logical gates is generally applicable to the so-called quantum sheaf codes \cite{dinur2024cubicalcomplexes}. A quantum sheaf code is constructed by planting classical codes with certain robustness properties, as a natural generalization of the Sipser--Spielman constructions \cite{SS1996}. This family of codes encompasses all present constructions of good quantum LDPC codes \cite{pk22,QuantumTanner2022,DHLV} and many product-based constructions such as lifted product and homological product codes \cite{BalancedProduct2020,lifted}. It is recently shown that \cite{Lin2024transversal} with appropriate choices of local codes, a vast family of quantum sheaf codes could potentially support nontrivial logical $\mathrm{C}^{t-1}Z$ actions. One of the particularly interesting families is given by homological products of classical Sipser--Spielman codes, known as quantum expander codes \cite{Leverrier_2015}.

Let $\mathcal{G}_i$ be $\Delta$-regular expander graphs with edges and vertices $E_i, V_i$ and let $\mathcal{C}_i$ be classical codes with parity-check matrices $h_i \in \mathbb{F}_2^{m_i \times \Delta}$. 
A 3-dimensional quantum expander code is defined by the homological product of three classical Sipser--Spielman codes $T(\mathcal{G}_i,\mathcal{C}_i)$ based on $\mathcal{G}_i$ and $\mathcal{C}_i$:
\begin{align}
	C^3(X,\mathcal{F}) \xleftarrow{\delta^2} C^2(X,\mathcal{F}) \xleftarrow{\delta^1} C^1(X,\mathcal{F}) \xleftarrow{\delta^0} C^0(X,\mathcal{F}) 
\end{align}
where

\begin{align}
	& C^0(X,\mathcal{F}) = \mathbb{F}_2^{V_1 \times m_1} \otimes \mathbb{F}_2^{V_2 \times m_2} \otimes \mathbb{F}_2^{V_2 \times m_2}, \\
	& C^1(X,\mathcal{F}) = \bigoplus_{i \neq j \neq k} \mathbb{F}_2^{E_i} \otimes \mathbb{F}_2^{V_j \times m_j} \otimes \mathbb{F}_2^{V_k \times m_k}, \\
	& C^2(X,\mathcal{F}) = \bigoplus_{i \neq j \neq k} \mathbb{F}_2^{E_i} \otimes \mathbb{F}_2^{E_j} \otimes \mathbb{F}_2^{V_k \times m_k}, \\
	& C^3(X,\mathcal{F}) = \mathbb{F}_2^{E_1} \otimes \mathbb{F}_2^{E_2} \otimes \mathbb{F}_2^{E_3}.	
\end{align}
\comments{
\begin{align}
	& C^0(X,\mathcal{F}) = \mathbb{F}_2^{V_1 \times m_1} \otimes \mathbb{F}_2^{V_2 \times m_2} \otimes \mathbb{F}_2^{V_2 \times m_2}, \\
	& C^1(X,\mathcal{F}) = \Big( \mathbb{F}_2^{E_1} \otimes \mathbb{F}_2^{V_2 \times m_2} \otimes \mathbb{F}_2^{V_3 \times m_3} \Big) \oplus \Big( \mathbb{F}_2^{V_1 \times m_1} \otimes \mathbb{F}_2^{E_2} \otimes \mathbb{F}_2^{V_3 \times m_3} \Big) \oplus \Big( \mathbb{F}_2^{V_1 \times m_1} \otimes \mathbb{F}_2^{V_2 \times m_2} \otimes \mathbb{F}_2^{E_3} \Big), \\
	& C^2(X,\mathcal{F}) = \Big( \mathbb{F}_2^{V_1 \times m_1} \otimes \mathbb{F}_2^{E_2} \otimes \mathbb{F}_2^{E_3} \Big) \oplus \Big( \mathbb{F}_2^{E_1} \otimes \mathbb{F}_2^{V_2 \times m_2} \otimes \mathbb{F}_2^{E_3} \Big) \oplus \Big( \mathbb{F}_2^{E_1} \otimes \mathbb{F}_2^{E_2} \otimes \mathbb{F}_2^{V_3 \times m_3} \Big), \\
	& C^3(X,\mathcal{F}) = \mathbb{F}_2^{E_1} \otimes \mathbb{F}_2^{E_2} \otimes \mathbb{F}_2^{E_2}.	
\end{align}
}
It is straightforward to see that the above construction is essentially a 3-dimensional hypergraph product code but we highlight the degree of freedom from $\mathbb{F}_2^{m_i}$, called local coefficients, associated with the classical local codes $\mathcal{C}_i$ (cf. the 3D toric code in \eqref{eq:3Dtoric}). 

We still take three consecutive terms on the right-hand side to form the quantum code. A notable feature for the quantum expander code as well as their generalization---the sheaf code---is the introduction of local codes $\mathcal{C}_i$. Suppose the $\mathcal{C}_i$ satisfies the \emph{triorthogonality} condition, i.e., the sum of the components of the pointwise product of arbitrary three codewords $c_1 \odot c_2 \odot c_3$ is identical to $0$ \cite{Bravyi_2012}. With mild assumptions on $\mathcal{G}_i$, e.g., $\mathcal{G}_i$ are double covers of a certain Cayley graph (which allows for an easy construction of the so-called \emph{cubical complex} \cite{dinur2024cubicalcomplexes}), it is proven in \cite{golowich2024quantumldpccodestransversal} that the code supports a constant-depth logical $\overline{\CCZ}$ gate. This framework is also generalized to sheaf codes in \cite{Lin2024transversal}. Restricted to hypergraph product codes, for $t$-dimensional case with $r = t - 1$ and $l = 1$, we can combine techniques for performing $\mathrm{C}^{t-1}Z$ gates on topological codes and properties of the local codes, e.g., the multi-orthogonality, to design the global logical gate. It also is expected to work for arbitrary $l+r = t$ (with $l \leq r$) as explained at the end of Section~\ref{sub:topo}, demonstrating an optimality analogous to that for higher-dimensional toric codes. 

\section{Conclusions and outlook}\label{sec: conclusions and outlook}

Motivated by the pressing importance of systematically understanding how information protection interacts with logical computation in the quest for practical quantum computing, we set forth to derive  fundamental no-go theorems for fault-tolerant logical gates originated from certain QEC code structures. 
As a starting point, in this work we established the general algebraic constraints for gates and thoroughly examined the case of hypergraph product codes, which constitute a particularly elegant yet rich framework for constructing QEC (especially qLDPC) codes with flexible code parameters and many highly appealing features for quantum computing.
Our results reveal fundamental restrictions on the logical Clifford hierarchy levels that can be reached with fault-tolerant gates under a variety of conditions,
significantly generalizing the Bravyi--K\"{o}nig argument for topological codes.

A key point is the relaxation of geometric connectivity constraints---the code is only assumed to possess an algebraic product structure and not any geometric feature, which extends greatly beyond topological codes.
Despite requiring more  intricate technical treatment, our generalized no-go theorems exhibit closely relevant forms with the geometric ones.  This underscores the fundamentally algebraic nature of the logical gate problem, in contrast to other central code properties including rate and distance that strongly depend on geometry.

From the perspective of fault tolerance, our results further highlight the inherent difficulty of non-Clifford gates even when geometric constraints that are considered a key limitation of many current experimental technologies are removed, calling for  further attention on the study of logical gate properties of codes and comparisons with other possible strategies for logical gate implementation. 
This deepens our understanding of the tension between quantum information protection and processing which lies at the heart of the potential value of quantum computing. Although on the other hand, an encouraging message from the optimal yes-go constructions, especially given the rapid progress for experimental platforms with good connectivity/reconfigurability (such as neutral atom arrays and ion traps), is that codes with simple structures that simultaneously carry favorable parameters, gates and decoding properties beyond what geometric codes can offer are practically promising.

Several directions and open problems are worth further exploration.
First, we have focused on the arguably neatest hypergraph product codes in this work. Many insights and methods here can provide a foundation for or be extended to even more general code structures including lifted product codes and sheaf codes,
which will be explicated in forthcoming works. Eventually, we seek to establish a comprehensive understanding of the joint interplay between code structures and logical gates as well as other code properties.
Second, higher dimensional chain complexes underlie not only the logical gate problem, but also various crucial code properties including local testability, single-shot decoding, etc. 
We expect further investigations both conceptual and technical connections among them to be fruitful.
Third, deeper understanding of fault-tolerant logical gates on qLDPC codes are anticipated to have interesting implications for the quantum PCP conjecture \cite{Anshu_2024,Lin2024transversal}. Lastly, while the current of study on logical gates builds upon the stabilizer code formalism, non-stabilizer codes are expected to provide a novel lens into logical gates, which we consider to be a rich and important avenue.

\section*{Acknowledgments}
We thank Daniel Gottesman, Tomas Jochym-O'Connor, Colin La, Samuel Tan, Guangqi Zhao, Alexander Barg for valuable discussion and feedback.  Part of this work is done while XF is visiting YMSC, Tsinghua University. ZL and ZWL are supported in part by a startup funding from YMSC, Dushi Program, and NSFC under Grant No.~12475023.

\bibliography{biblio}

\newcommand{\etalchar}[1]{$^{#1}$}
\begin{thebibliography}{XBAP{\etalchar{+}}24}

\bibitem[ABN24]{Anshu_2024}
Anurag Anshu, Nikolas~P. Breuckmann, and Quynh~T. Nguyen.
\newblock Circuit-to-hamiltonian from tensor networks and fault tolerance.
\newblock In {\em Proceedings of the 56th Annual ACM Symposium on Theory of
  Computing}, STOC ’24, page 585–595. ACM, June 2024.

\bibitem[AC18]{Audoux2018}
Benjamin Audoux and Alain Couvreur.
\newblock On tensor products of css codes.
\newblock 2018.

\bibitem[AG04]{PhysRevA.70.052328}
Scott Aaronson and Daniel Gottesman.
\newblock Improved simulation of stabilizer circuits.
\newblock {\em Phys. Rev. A}, 70:052328, Nov 2004.

\bibitem[BB20]{burton2020}
Simon Burton and Dan Browne.
\newblock Limitations on transversal gates for hypergraph product codes.
\newblock 2020.

\bibitem[BCH{\etalchar{+}}23]{10.21468/SciPostPhys.14.4.065}
Maissam Barkeshli, Yu-An Chen, Sheng-Jie Huang, Ryohei Kobayashi, Nathanan
  Tantivasadakarn, and Guanyu Zhu.
\newblock {Codimension-2 defects and higher symmetries in (3+1)D topological
  phases}.
\newblock {\em SciPost Phys.}, 14:065, 2023.

\bibitem[BDET24]{Breuckmann2024Cups}
Nikolas~P. Breuckmann, Margarita Davydova, Jens~N. Eberhardt, and Nathanan
  Tantivasadakarn.
\newblock Cups and gates i: Cohomology invariants and logical quantum
  operations.
\newblock 2024.

\bibitem[BE21a]{balanced}
Nikolas~P. Breuckmann and Jens~N. Eberhardt.
\newblock Balanced product quantum codes.
\newblock {\em IEEE Transactions on Information Theory}, 67(10):6653--6674,
  2021.

\bibitem[BE21b]{BalancedProduct2020}
Nikolas~P. Breuckmann and Jens~N. Eberhardt.
\newblock {Balanced Product Quantum Codes}.
\newblock {\em IEEE Trans. Info. Theor.}, 67(10):6653--6674, 2021.

\bibitem[BE21c]{PRXQuantum.2.040101}
Nikolas~P. Breuckmann and Jens~Niklas Eberhardt.
\newblock Quantum low-density parity-check codes.
\newblock {\em PRX Quantum}, 2:040101, Oct 2021.

\bibitem[BGH{\etalchar{+}}25]{berthusen_2025}
Noah Berthusen, Michael~J. Gullans, Yifan Hong, Maryam Mudassar, and Shi
  Jie~Samuel Tan.
\newblock Automorphism gadgets in homological product codes, 2025.

\bibitem[BH12a]{Bravyi_2012}
Sergey Bravyi and Jeongwan Haah.
\newblock Magic-state distillation with low overhead.
\newblock {\em Physical Review A}, 86(5), November 2012.

\bibitem[BH12b]{BravyiHaah12}
Sergey Bravyi and Jeongwan Haah.
\newblock Magic-state distillation with low overhead.
\newblock {\em Phys. Rev. A}, 86:052329, Nov 2012.

\bibitem[BH14]{BH:homological}
Sergey Bravyi and Matthew~B. Hastings.
\newblock Homological product codes.
\newblock In {\em Proceedings of the Forty-Sixth Annual ACM Symposium on Theory
  of Computing}, STOC '14, page 273–282, New York, NY, USA, 2014. Association
  for Computing Machinery.

\bibitem[BK98]{bravyi1998quantumcodeslatticeboundary}
S.~B. Bravyi and A.~Yu. Kitaev.
\newblock Quantum codes on a lattice with boundary.
\newblock 1998.

\bibitem[BK13]{bravyi_classification_2013}
Sergey Bravyi and Robert Koenig.
\newblock Classification of topologically protected gates for local stabilizer
  codes.
\newblock {\em Physical Review Letters}, 110(17):170503, April 2013.
\newblock arXiv:1206.1609 [quant-ph].

\bibitem[BMD07]{Bombin_2007}
H.~Bombin and M.~A. Martin-Delgado.
\newblock Topological computation without braiding.
\newblock {\em Physical Review Letters}, 98(16), April 2007.

\bibitem[Bom14]{Bomb_2014}
Héctor Bombín.
\newblock Structure of 2d topological stabilizer codes.
\newblock {\em Communications in Mathematical Physics}, 327(2):387–432, March
  2014.

\bibitem[Bom15]{Bombin_2013}
H.~Bombin.
\newblock Gauge color codes: Optimal transversal gates and gauge fixing in
  topological stabilizer codes.
\newblock 2015.

\bibitem[BT09]{Bravyi_2009}
Sergey Bravyi and Barbara Terhal.
\newblock A no-go theorem for a two-dimensional self-correcting quantum memory
  based on stabilizer codes.
\newblock {\em New Journal of Physics}, 11(4):043029, April 2009.

\bibitem[CGK17]{PhysRevA.95.012329}
Shawn~X. Cui, Daniel Gottesman, and Anirudh Krishna.
\newblock Diagonal gates in the clifford hierarchy.
\newblock {\em Phys. Rev. A}, 95:012329, Jan 2017.

\bibitem[CT23]{Chen_2023}
Yu-An Chen and Sri Tata.
\newblock Higher cup products on hypercubic lattices: Application to lattice
  models of topological phases.
\newblock {\em Journal of Mathematical Physics}, 64(9), September 2023.

\bibitem[CTV17]{Campbell_2017}
Earl~T. Campbell, Barbara~M. Terhal, and Christophe Vuillot.
\newblock Roads towards fault-tolerant universal quantum computation.
\newblock {\em Nature}, 549(7671):172–179, September 2017.

\bibitem[DHLV23]{DHLV}
Irit Dinur, Min-Hsiu Hsieh, Ting-Chun Lin, and Thomas Vidick.
\newblock Good quantum ldpc codes with linear time decoders.
\newblock In {\em Proceedings of the 55th Annual ACM Symposium on Theory of
  Computing}, STOC 2023, page 905–918, New York, NY, USA, 2023. Association
  for Computing Machinery.

\bibitem[DLV24]{dinur2024cubicalcomplexes}
Irit Dinur, Ting-Chun Lin, and Thomas Vidick.
\newblock Expansion of higher-dimensional cubical complexes with application to
  quantum locally testable codes.
\newblock 2024.

\bibitem[EK09]{EastinKnill}
Bryan Eastin and Emanuel Knill.
\newblock Restrictions on transversal encoded quantum gate sets.
\newblock {\em Phys. Rev. Lett.}, 102:110502, Mar 2009.

\bibitem[FGL18]{Fawzi_expander}
Omar Fawzi, Antoine Grospellier, and Anthony Leverrier.
\newblock Constant overhead quantum fault-tolerance with quantum expander
  codes.
\newblock In {\em 2018 IEEE 59th Annual Symposium on Foundations of Computer
  Science (FOCS)}, pages 743--754, 2018.

\bibitem[FL20]{Fang2020PRL}
Kun Fang and Zi-Wen Liu.
\newblock No-go theorems for quantum resource purification.
\newblock {\em Phys. Rev. Lett.}, 125:060405, Aug 2020.

\bibitem[FL22]{FangLiuPRXQ}
Kun Fang and Zi-Wen Liu.
\newblock No-go theorems for quantum resource purification: New approach and
  channel theory.
\newblock {\em PRX Quantum}, 3:010337, Mar 2022.

\bibitem[FL24]{fang2024surpassingfundamentallimitsdistillation}
Kun Fang and Zi-Wen Liu.
\newblock Surpassing the fundamental limits of distillation with catalysts.
\newblock {\em arXiv preprint arXiv:2410.14547}, 2024.

\bibitem[FMMC12]{PhysRevA.86.032324}
Austin~G. Fowler, Matteo Mariantoni, John~M. Martinis, and Andrew~N. Cleland.
\newblock Surface codes: Towards practical large-scale quantum computation.
\newblock {\em Phys. Rev. A}, 86:032324, Sep 2012.

\bibitem[GC99]{GottesmanChuang_1999}
Daniel Gottesman and Isaac~L. Chuang.
\newblock Demonstrating the viability of universal quantum computation using
  teleportation and single-qubit operations.
\newblock {\em Nature}, 402(6760):390–393, November 1999.

\bibitem[GG24]{golowich2024asymptotically}
Louis Golowich and Venkatesan Guruswami.
\newblock Asymptotically good quantum codes with transversal non-clifford
  gates.
\newblock {\em arXiv preprint arXiv:2408.09254}, 2024.

\bibitem[GGKL21]{Grospellier2021combininghardsoft}
Antoine Grospellier, Lucien Grou{\`{e}}s, Anirudh Krishna, and Anthony
  Leverrier.
\newblock Combining hard and soft decoders for hypergraph product codes.
\newblock {\em {Quantum}}, 5:432, April 2021.

\bibitem[GL24]{golowich2024quantumldpccodestransversal}
Louis Golowich and Ting-Chun Lin.
\newblock Quantum ldpc codes with transversal non-clifford gates via products
  of algebraic codes.
\newblock 2024.

\bibitem[Got97]{gottesman1997stabilizer}
Daniel Gottesman.
\newblock {\em Stabilizer Codes and Quantum Error Correction}.
\newblock PhD thesis, California Institute of Technology, Pasadena, CA, 1997.

\bibitem[Got98]{gottesman1998heisenberg}
Daniel Gottesman.
\newblock The heisenberg representation of quantum computers.
\newblock {\em Proc. XXII International Colloquium on Group Theoretical Methods
  in Physics}, pages 32--43, 1998.

\bibitem[Got14]{GottesmanLDPC}
Daniel Gottesman.
\newblock Fault-tolerant quantum computation with constant overhead.
\newblock {\em Quantum Info. Comput.}, 14(15–16):1338–1372, November 2014.

\bibitem[HH18]{HastingsHaah18}
Matthew~B. Hastings and Jeongwan Haah.
\newblock Distillation with sublogarithmic overhead.
\newblock {\em Phys. Rev. Lett.}, 120:050504, Jan 2018.

\bibitem[HHO21]{fiber}
Matthew~B. Hastings, Jeongwan Haah, and Ryan O'Donnell.
\newblock Fiber bundle codes: breaking the n1/2 polylog(n) barrier for quantum
  ldpc codes.
\newblock In {\em Proceedings of the 53rd Annual ACM SIGACT Symposium on Theory
  of Computing}, STOC 2021, page 1276–1288, New York, NY, USA, 2021.
  Association for Computing Machinery.

\bibitem[HVWZ25]{he2025addressable}
Zhiyang He, Vinod Vaikuntanathan, Adam Wills, and Rachel~Yun Zhang.
\newblock Quantum codes with addressable and transversal non-clifford gates.
\newblock 2025.

\bibitem[JOKY18]{disjointness_2018}
Tomas Jochym-O'Connor, Aleksander Kubica, and Theodore~J. Yoder.
\newblock Disjointness of stabilizer codes and limitations on fault-tolerant
  logical gates.
\newblock {\em Phys. Rev. X}, 8:021047, May 2018.

\bibitem[KB15]{Kubica2015}
Aleksander Kubica and Michael~E. Beverland.
\newblock Universal transversal gates with color codes: A simplified approach.
\newblock {\em Physical Review A}, 91(3), March 2015.

\bibitem[Kit03]{KITAEV20032}
A.Yu. Kitaev.
\newblock Fault-tolerant quantum computation by anyons.
\newblock {\em Annals of Physics}, 303(1):2--30, 2003.

\bibitem[KL97]{KL}
Emanuel Knill and Raymond Laflamme.
\newblock Theory of quantum error-correcting codes.
\newblock {\em Phys. Rev. A}, 55:900--911, Feb 1997.

\bibitem[KP13]{PhysRevA.87.020304}
Alexey~A. Kovalev and Leonid~P. Pryadko.
\newblock Fault tolerance of quantum low-density parity check codes with
  sublinear distance scaling.
\newblock {\em Phys. Rev. A}, 87:020304, Feb 2013.

\bibitem[KP21]{PhysRevX.11.011023}
Anirudh Krishna and David Poulin.
\newblock Fault-tolerant gates on hypergraph product codes.
\newblock {\em Phys. Rev. X}, 11:011023, Feb 2021.

\bibitem[KT19]{KrishnaTillich18}
Anirudh Krishna and Jean-Pierre Tillich.
\newblock Towards low overhead magic state distillation.
\newblock {\em Phys. Rev. Lett.}, 123:070507, Aug 2019.

\bibitem[KYP15]{Kubica2015Unfolding}
Aleksander Kubica, Beni Yoshida, and Fernando Pastawski.
\newblock Unfolding the color code.
\newblock {\em New Journal of Physics}, 17(8):083026, August 2015.

\bibitem[LBT19]{LBT19}
Zi-Wen Liu, Kaifeng Bu, and Ryuji Takagi.
\newblock One-shot operational quantum resource theory.
\newblock {\em Phys. Rev. Lett.}, 123:020401, Jul 2019.

\bibitem[LH22]{lin2022goodquantumldpccodes}
Ting-Chun Lin and Min-Hsiu Hsieh.
\newblock Good quantum ldpc codes with linear time decoder from lossless
  expanders.
\newblock 2022.

\bibitem[Lin24]{Lin2024transversal}
Ting-Chun Lin.
\newblock Transversal non-clifford gates for quantum ldpc codes on sheaves.
\newblock 2024.

\bibitem[LTF{\etalchar{+}}24]{lee2024low}
Seok-Hyung Lee, Felix Thomsen, Nicholas Fazio, Benjamin~J Brown, and Stephen~D
  Bartlett.
\newblock Low-overhead magic state distillation with color codes.
\newblock {\em arXiv preprint arXiv:2409.07707}, 2024.

\bibitem[LTZ15a]{Leverrier_expander}
Anthony Leverrier, Jean-Pierre Tillich, and Gilles Zémor.
\newblock Quantum expander codes.
\newblock In {\em 2015 IEEE 56th Annual Symposium on Foundations of Computer
  Science}, pages 810--824, 2015.

\bibitem[LTZ15b]{Leverrier_2015}
Anthony Leverrier, Jean-Pierre Tillich, and Gilles Zémor.
\newblock Quantum expander codes.
\newblock In {\em 2015 IEEE 56th Annual Symposium on Foundations of Computer
  Science}, page 810–824. IEEE, October 2015.

\bibitem[LZ22a]{QuantumTanner2022}
Anthony Leverrier and Gilles Zémor.
\newblock Quantum tanner codes.
\newblock 2022.

\bibitem[LZ22b]{liuQuantum2022}
Zi-Wen Liu and Sisi Zhou.
\newblock Quantum error correction meets continuous symmetries: {{Fundamental}}
  trade-offs and case studies.
\newblock (arXiv:2111.06360), April 2022.

\bibitem[LZ23]{liuApproximate2023}
Zi-Wen Liu and Sisi Zhou.
\newblock Approximate symmetries and quantum error correction.
\newblock {\em npj Quantum Inf}, 9(1):119, November 2023.

\bibitem[MC25]{Manes2025distancepreserving}
Argyris~Giannisis Manes and Jahan Claes.
\newblock Distance-preserving stabilizer measurements in hypergraph product
  codes.
\newblock {\em {Quantum}}, 9:1618, January 2025.

\bibitem[NC00]{nielsen2000quantum}
Michael~A. Nielsen and Isaac~L. Chuang.
\newblock {\em Quantum Computation and Quantum Information}.
\newblock Cambridge University Press, 2000.

\bibitem[Ngu24]{nguyen2024good}
Quynh~T Nguyen.
\newblock Good binary quantum codes with transversal ccz gate.
\newblock {\em arXiv preprint arXiv:2408.10140}, 2024.

\bibitem[PB24]{patra2024targetedcliffordlogicalgates}
Adway Patra and Alexander Barg.
\newblock Targeted clifford logical gates for hypergraph product codes.
\newblock 2024.

\bibitem[PK22a]{pk22}
Pavel Panteleev and Gleb Kalachev.
\newblock Asymptotically good quantum and locally testable classical ldpc
  codes.
\newblock In {\em Proceedings of the 54th Annual ACM SIGACT Symposium on Theory
  of Computing}, STOC 2022, page 375–388, New York, NY, USA, 2022.
  Association for Computing Machinery.

\bibitem[PK22b]{lifted}
Pavel Panteleev and Gleb Kalachev.
\newblock Quantum ldpc codes with almost linear minimum distance.
\newblock {\em IEEE Transactions on Information Theory}, 68(1):213--229, 2022.

\bibitem[PY15]{pastawski_fault-tolerant_2015}
Fernando Pastawski and Beni Yoshida.
\newblock Fault-tolerant logical gates in quantum error-correcting codes.
\newblock {\em Physical Review A}, 91(1):012305, January 2015.
\newblock arXiv:1408.1720 [cond-mat, physics:quant-ph].

\bibitem[QWV23]{Quintavalle_2023}
Armanda~O. Quintavalle, Paul Webster, and Michael Vasmer.
\newblock Partitioning qubits in hypergraph product codes to implement logical
  gates.
\newblock {\em Quantum}, 7:1153, October 2023.

\bibitem[RK24]{rakovszky2024physicsgoodldpccodes}
Tibor Rakovszky and Vedika Khemani.
\newblock The physics of (good) ldpc codes ii. product constructions.
\newblock 2024.

\bibitem[Sho95]{shor1995qec}
Peter~W. Shor.
\newblock Scheme for reducing decoherence in quantum computer memory.
\newblock {\em Phys. Rev. A}, 52(4):R2493--R2496, October 1995.

\bibitem[Sho96]{Shor1996FaultTolerant}
Peter~W. Shor.
\newblock Fault-tolerant quantum computation.
\newblock In {\em Proceedings of 37th Conference on Foundations of Computer
  Science}, pages 56--65, 1996.

\bibitem[SS96]{SS1996}
M.~Sipser and D.A. Spielman.
\newblock Expander codes.
\newblock {\em IEEE Transactions on Information Theory}, 42(6):1710--1722,
  1996.

\bibitem[TDB22]{Tremblay22}
Maxime~A. Tremblay, Nicolas Delfosse, and Michael~E. Beverland.
\newblock Constant-overhead quantum error correction with thin planar
  connectivity.
\newblock {\em Phys. Rev. Lett.}, 129:050504, Jul 2022.

\bibitem[TRT{\etalchar{+}}24]{tan2024fractonmodelsproductcodes}
Yi~Tan, Brenden Roberts, Nathanan Tantivasadakarn, Beni Yoshida, and Norman~Y.
  Yao.
\newblock Fracton models from product codes.
\newblock 2024.

\bibitem[TZ14]{Tillich_2014}
Jean-Pierre Tillich and Gilles Zémor.
\newblock Quantum ldpc codes with positive rate and minimum distance
  proportional to the square root of the blocklength.
\newblock {\em IEEE Transactions on Information Theory}, 60(2):1193–1202,
  February 2014.

\bibitem[WHY24]{wills2024constantoverheadmagicstatedistillation}
Adam Wills, Min-Hsiu Hsieh, and Hayata Yamasaki.
\newblock Constant-overhead magic state distillation.
\newblock {\em arXiv: 2408.07764}, 2024.

\bibitem[XBAP{\etalchar{+}}24]{Xu_2024}
Qian Xu, J.~Pablo Bonilla~Ataides, Christopher~A. Pattison, Nithin Raveendran,
  Dolev Bluvstein, Jonathan Wurtz, Bane Vasić, Mikhail~D. Lukin, Liang Jiang,
  and Hengyun Zhou.
\newblock Constant-overhead fault-tolerant quantum computation with
  reconfigurable atom arrays.
\newblock {\em Nature Physics}, 20(7):1084–1090, April 2024.

\bibitem[XZZ{\etalchar{+}}24]{xu2024fastparallelizablelogicalcomputation}
Qian Xu, Hengyun Zhou, Guo Zheng, Dolev Bluvstein, J.~Pablo~Bonilla Ataides,
  Mikhail~D. Lukin, and Liang Jiang.
\newblock Fast and parallelizable logical computation with homological product
  codes.
\newblock 2024.

\bibitem[Yos11]{Yoshida_2011}
Beni Yoshida.
\newblock Classification of quantum phases and topology of logical operators in
  an exactly solved model of quantum codes.
\newblock {\em Annals of Physics}, 326(1):15–95, January 2011.

\bibitem[YWC{\etalchar{+}}24]{Wang_2024}
Wang Yifei, Yixu Wang, Yu-An Chen, Wenjun Zhang, Tao Zhang, Jiazhong Hu, Wenlan
  Chen, Yingfei Gu, and Zi-Wen Liu.
\newblock Efficient fault-tolerant implementations of non-clifford gates with
  reconfigurable atom arrays.
\newblock {\em npj Quantum Information}, 10(1):136, December 2024.

\bibitem[ZDK24]{zhao2024energybarrierhypergraphproduct}
Guangqi Zhao, Andrew~C. Doherty, and Isaac~H. Kim.
\newblock On the energy barrier of hypergraph product codes.
\newblock 2024.

\bibitem[ZP19]{Zeng_2019}
Weilei Zeng and Leonid~P. Pryadko.
\newblock Higher-dimensional quantum hypergraph-product codes with finite
  rates.
\newblock {\em Phys. Rev. Lett.}, 122:230501, Jun 2019.

\end{thebibliography}

\end{document}